\tikzstyle{termbox}=[draw=term,fill=term!10,rounded corners,minimum size=20pt]
\tikzstyle{tallbox}=[draw=term,fill=term!10,rounded corners,minimum width=20pt,minimum height=40pt]
\tikzstyle{termpic}=[x=1pt,y=1pt,inner sep=0pt,outer sep=0pt,thick]
\tikzstyle{box}=[shape=rectangle, text height=1.5ex, text depth=0.25ex, yshift=0.5mm, fill=white, draw=black, minimum height=12.5mm, yshift=-0.5mm, minimum width=7.5mm, font={\small}]
\tikzstyle{Z dot}=[inner sep=0mm, minimum size=2mm, shape=circle, draw=black, fill={rgb,255: red,160; green,255; blue,160}]
\tikzstyle{Z phase dot}=[minimum size=5mm, font={\footnotesize\boldmath}, shape=rectangle, rounded corners=2mm, inner sep=0.2mm, outer sep=-2mm, scale=0.8, tikzit shape=circle, draw=black, fill={rgb,255: red,160; green,255; blue,160}, tikzit draw=blue]
\tikzstyle{X dot}=[Z dot, shape=circle, draw=black, fill={rgb,255: red,220; green,0; blue,0}]
\tikzstyle{X phase dot}=[Z phase dot, tikzit shape=circle, tikzit draw=blue, fill={rgb,255: red,220; green,0; blue,0}, font={\footnotesize\color{white}\boldmath}]
\tikzstyle{hadamard}=[fill=yellow, draw=black, shape=rectangle, inner sep=0.6mm, minimum height=1.5mm, minimum width=1.5mm]
\tikzstyle{small hadamard}=[hadamard]
\tikzstyle{vertex}=[inner sep=0mm, minimum size=1mm, shape=circle, draw=black, fill=black]
\tikzstyle{vertex set}=[inner sep=0mm, minimum size=1mm, shape=circle, draw=black, fill=white, font={\footnotesize\boldmath}]
\tikzstyle{new style 0}=[fill=none, draw=none, shape=rectangle, font={\large}]
\tikzstyle{simple}=[-]
\tikzstyle{hadamard edge}=[-, color=blue, dashed, dash pattern=on 2pt off 0.7pt]
\tikzstyle{brace edge}=[-, tikzit draw=blue, decorate, decoration={brace,amplitude=1mm,raise=-1mm}]
\tikzstyle{gray}=[-, draw={rgb,255: red,191; green,191; blue,191}]
\tikzstyle{arrow}=[<-, draw={rgb,255: red,128; green,128; blue,128}]
\tikzstyle{double-arrow}=[draw={rgb,255: red,128; green,128; blue,128}, <->]
\tikzstyle{new big text}=[-]
\newcommand\KA{\mathsf{k}}
\newcommand\embed[1]{\llbracket#1\rrbracket}
\begin{document}

\title{The Relational Machine Calculus}

\author{Chris Barrett}
\email{chris.barrett@cs.ox.ac.uk}
\orcid{0000-0003-1708-3554}
\affiliation{
  \institution{University of Oxford}
  \city{}
  \country{United Kingdom}
}

\author{Daniel Castle}
\email{drc22@bath.ac.uk}
\orcid{0009-0000-3333-8161}
\affiliation{
  \institution{University of Bath}
  \city{}
  \country{United Kingdom}
}

\author{Willem Heijltjes}
\email{w.b.heijltjes@bath.ac.uk}
\orcid{0009-0001-8941-1150}
\affiliation{
  \institution{University of Bath}
  \city{}
  \country{United Kingdom}
}

\renewcommand{\shortauthors}{C.\ Barrett, D.\ Castle, and W.\ Heijltjes}

\begin{abstract}
This paper presents the Relational Machine Calculus (RMC): a simple,  foundational model of first-order relational programming.  The RMC originates from the Functional Machine Calculus (FMC),  which generalizes the lambda-calculus and its standard call-by-name stack machine in two directions.  One, "locations", introduces multiple stacks, which enable effect operators to be encoded into the abstraction and application constructs. The second, "sequencing", introduces the imperative notions of "skip" and "sequence", similar to kappa-calculus and concatenative programming languages.

The key observation of the RMC is that the first-order fragment of the FMC exhibits a latent duality which, given a simple decomposition of the relevant constructors,  can be concretely expressed as an involution on syntax.  Semantically,  this gives rise to a sound and complete calculus for string diagrams of Frobenius monoids.  

We consider unification as the corresponding symmetric generalization of beta-reduction. 
By further including standard operators of Kleene algebra,  the RMC embeds a range of computational models: the kappa-calculus,  logic programming,  automata, Interaction Nets,  and Petri Nets,  among others.  These embeddings preserve operational semantics,  which for the RMC is again given by a generalization of the standard stack machine for the lambda-calculus.  
The equational theory of the RMC (which supports reasoning about its operational semantics) is conservative over both the first-order lambda-calculus and Kleene algebra,  and can be oriented to give a confluent reduction relation.
\end{abstract}

\begin{CCSXML}
<ccs2012>
   <concept>
       <concept_id>10003752.10003753.10010622</concept_id>
       <concept_desc>Theory of computation~Abstract machines</concept_desc>
       <concept_significance>500</concept_significance>
       </concept>
   <concept>
       <concept_id>10003752.10003753.10003754.10003733</concept_id>
       <concept_desc>Theory of computation~Lambda calculus</concept_desc>
       <concept_significance>500</concept_significance>
       </concept>
   <concept>
       <concept_id>10003752.10003766.10003776</concept_id>
       <concept_desc>Theory of computation~Regular languages</concept_desc>
       <concept_significance>500</concept_significance>
       </concept>
   <concept>
       <concept_id>10003752.10010124.10010131.10010134</concept_id>
       <concept_desc>Theory of computation~Operational semantics</concept_desc>
       <concept_significance>500</concept_significance>
       </concept>
   <concept>
       <concept_id>10003752.10010124.10010131.10010133</concept_id>
       <concept_desc>Theory of computation~Denotational semantics</concept_desc>
       <concept_significance>500</concept_significance>
       </concept>
   <concept>
       <concept_id>10003752.10010124.10010131.10010137</concept_id>
       <concept_desc>Theory of computation~Categorical semantics</concept_desc>
       <concept_significance>500</concept_significance>
       </concept>
   <concept>
       <concept_id>10003752.10003790.10003795</concept_id>
       <concept_desc>Theory of computation~Constraint and logic programming</concept_desc>
       <concept_significance>500</concept_significance>
       </concept>
 </ccs2012>
\end{CCSXML}

\ccsdesc[500]{Theory of computation~Abstract machines}
\ccsdesc[500]{Theory of computation~Lambda calculus}
\ccsdesc[500]{Theory of computation~Regular languages}
\ccsdesc[500]{Theory of computation~Operational semantics}
\ccsdesc[500]{Theory of computation~Denotational semantics}
\ccsdesc[500]{Theory of computation~Categorical semantics}
\ccsdesc[500]{Theory of computation~Constraint and logic programming}

\keywords{lambda-calculus,  Kleene algebra,  logic programming,  reversible programming,  non-determinism,  hypergraph category,  Krivine abstract machine,  categorical semantics,  operational semantics.}


\maketitle


\section{Introduction}

The $\lambda$-calculus is widely considered the canonical model of functional programming.  However,  there is no similarly agreed foundation of \emph{relational} programming.  Paradigmatic examples of relational programming are logic programming,  based on unification \cite{Krishnaswami:Datafun,  Miller:HOLP, hanus_functional_2013, somogyi_execution_1996, Friedman:microkanren},  and database query languages,  based on Tarski's relational algebra \cite{Codd:relational-database,  tarski:1941}. 
Relations also serve as the (intended) semantics for a wide range of languages.  For example,  non-deterministic finite automata,  Kleene algebra and its variants --- some of which subsume propositional Hoare logic \cite{conway:regex, Kozen:KAT} --- and monadic languages for non-determinism \cite{Moggi-monad}. 
Furthermore,  the category of relations is a simple folk model of (differential) linear logic and thus of the linear $\lambda$-calculus; via the Kleisli construction,  it is also the basis of a simple quantitative model of the plain $\lambda$-calculus \cite{ehrhard_differential_2003,girard_normal_1988, laird_weighted_2013}, which can even be seen as underlying game semantics and intersection types \cite{Ghica-McCusker-2003, Carvalho:intersection,ong_quantitative_2017,calderon_understanding_2010}.   

Here we present a foundational model of (first-order,  sequential) relational programming: the \emph{Relational Machine Calculus (RMC)}.  
We first set out the aims and constraints that informed its design. Then, in the remainder of the introduction,  we detail the origin of the RMC: from a first-order $\lambda$-calculus through several adaptations to incorporate the relational paradigm,  including \emph{duality},  \emph{unification}, and \emph{non-determinism}. The body of the paper is dedicated to justifying our claim of meeting the design criteria we now set out.


\subsection{Design criteria}

The space of all possible (first-order,  sequential) relational models of computation is bewildering,  and indeed there is a proliferation of languages in this space,  indicating its importance  and range of applications.  Yet,  we find none in the literature which satisfy the following --- minimal,  but stringent --- design criteria that we would expect to be met by a "$\lambda$-calculus of relational computation". 
\begin{description}
	\item[Denotational semantics:]
has a relational and categorical semantics,  through quotienting by a local equational theory;

	\item[Duality:] exhibits a syntactic involution which switches {input} and {output},  and denotes the relational {converse}; 

	\item[Operational semantics:]
preserves that of standard models of first-order,  sequential (relational) computation;

	\item[Confluence:]
orienting the equational theory gives a confluent reduction relation,  sound for the operational semantics.
\end{description}
We proceed to motivate these criteria,  which serve to tame the design space and establish standards for success in our programme.  

%
\paragraph{Denotational semantics}
A minimal qualifying criterion for any relational calculus is to have a denotational semantics in the category of sets and relations.  We further expect a categorical semantics in an appropriate symmetric monoidal category which abstracts the structural properties of relations; we shall see momentarily our preferred candidate. 

%
\paragraph{Duality}
While the category of sets and relations embeds that of sets and functions, it exhibits very different structural properties; mainly,  a perfect duality between input and output given by relational converse.  We consider this {duality} a defining feature of relational computation.  A paradigmatic example is the following Prolog program to concatenate two lists. 
\begin{Verbatim}
  concat([],L,L)
  concat([E | L1], L2, [E | L3]) :- concat(L1,L2,L3)
\end{Verbatim}
Fixing input lists \textsf{L1} and \textsf{L2} in the relation \textsf{concat(L1,L2,L3)} returns their concatenation as \textsf{L3},  but by fixing the value \textsf{L3} instead,  the relation may be run in reverse to (non-deterministically) return every way of splitting \textsf{L3} into \textsf{L1} and \textsf{L2}.

Duality is a common theme in programming language research  \cite{SELINGER_2001,Filinski:duality,Girard:linear-logic},  and in mathematics and physics more generally.  Its aesthetic appeal barely needs justification; practically,  it offers parsimony of expression by identifying a dual program (theorem) with each program (theorem) written.  We require our language to feature duality in a direct and natural way: by a \emph{syntactic} involution.

%
\paragraph{Operational semantics}
We expect our language to be expressive enough to encode a range of relational models of first-order,  sequential computation, including the first-order $\lambda$-calculus,  automata, logic programming,   and Petri nets.  However,  any sufficiently expressive language can encode any other:  we would like the embeddings into our calculus to be, in some way, \emph{natural}. As our criterion for this, we specify that operational semantics is preserved,  for models where it is defined.  This means that (at least,  with respect to \emph{operational} semantics --- a relatively fine-grained notion) the encoded models may be viewed as \emph{fragments} of our language,  built from its more fundamental primitives.

%
\paragraph{Confluence}
The intended relational semantics will impose an equational theory on our calculus. Following the central notion of $\beta$-reduction in the $\lambda$-calculus, we expect the equational theory (modulo certain congruences) to be orientable as a reduction relation,  and that this relation is \emph{confluent}, contributing in an essential way to an effective solution to this theory.


\subsection{First-order lambda-calculus and duality}

Our point of departure is the first-order $\lambda$-calculus, as first appeared as the $\kappa$-calculus~\cite{Hasegawa:decomposing-LC,Power:kappa-cat} and recently as the first-order fragment of the Functional Machine Calculus~\cite{Heijltjes:FMCI,Heijltjes:FMCII} (FMC).  These calculi feature three of our desired properties: a \emph{denotational semantics} (albeit in Cartesian categories),  an \emph{operational semantics} in a simple stack machine,  and \emph{confluent reduction}. Crucially, they also exhibit the potential for \emph{duality}; and in the formulation of the FMC, this is directly observable in the syntax. 

Taking the operational perspective, the first-order FMC is an instruction language for a simple stack machine. Terms are sequences of \emph{push} $\term{[x]}$ and \emph{pop} $\term{<x>}$ operations over variables $\term x$ or constant values $\val c$, separated by sequential composition $(\term{;})$. The following example shows a term and its associated string diagram; the type is given by the size of its input and output stacks.
\[
\begin{array}{c}
	\tikzfig{figures/Kappa-string}
\\
	\term{<x>;<y>;<z>;[x];[z];[y] ~: 3 > 3}
\end{array}
\]
Operationally, a term evaluates from left to right; $\term{<x>}$ pops the head off the stack, say a constant $\val c$, and substitutes $\val c$ for $\term x$ in the remaining computation; $\term{[x]}$ pushes the value substituted for $\term x$ onto the stack. We write our stacks with the head to the right, to match the order of \emph{pushes}; our term then takes a stack $\val{e\,d\,c}$ to $\val{c\,e\,d}$, where $\term{<x>}$ pops $\val c$, $\term{<y>}$ pops $\val d$, and $\term{<z>}$ pops $\val e$.

In this way, our term implements a relation between input and output stacks, as illustrated by the string diagram, where the wires represent the manipulation of the items on the stack (shown with the head at the top). The simple typing discipline of giving input and output arity tames the asymmetric nature of stacks, resulting in an internal language for symmetric monoidal categories (SMCs) in the linear case (where each \emph{pop} $\term{<x>}$ is matched by a unique \emph{push} $\term{[x]}$ to its right, and vice versa).

In the FMC, a \emph{pop} $\term{<x>;M}$, with $\term M$ the remaining computation, is a first-order lambda-abstraction, which we write $\kappa x.M$ following the $\kappa$-calculus. A \emph{push} $\term{[x];M}$ corresponds to an application $\term{M\,x}$, restricted to first order by forcing the argument to be a variable. The stack machine is then a simplified Krivine machine~\cite{krivine_call-by-name_2007}, replacing environments with substitution. This highlights a key aspect. A $\kappa$-abstraction is a \emph{binder}, making the variable $x$ \emph{local} to $\kappa x.M$. However, \emph{binding} and \emph{variable scope} are the main notions that will need to be re-visited in light of duality. A pop $\term{<x>}$ is therefore not a binder, and variables in this fragment of the RMC are \emph{global}. This does not prevent us from encoding first-order $\lambda$-calculus: in the first-order setting, terms are never duplicated, removing any issues with variable capture or $\alpha$-conversion. We may then simulate binding by stipulating that each \emph{pop} $\term{<x>}$ has a unique variable, which only occurs in \emph{pushes} to its right (this is \emph{Barendregt's convention}). Later, when duplication of terms returns, we will re-introduce local variable scope with a binding \emph{new variable} construct $\term{Ex.M}$,  justifying the core fragment of the RMC as a decomposition of abstraction into its two distinct roles: \emph{pop} and \emph{new}.

%
\paragraph{Duality}
Our formulation of first-order $\lambda$-calculus is tailored to reveal its latent duality, effected simply by reversing a term while switching \emph{push} and \emph{pop}. The dual of our previous example term:
\[
	\term{<y>;<z>;<x>;[z];[y];[x]~: 3 > 3}~.
\]
The linear fragment, where terms represent \emph{permutations}, is closed under duality. By contrast, the \emph{non-linear} first-order $\lambda$-calculus is characterised by \emph{duplication} and \emph{deletion}, embodied by the \emph{diagonal} and \emph{terminal} below left. These yield the dual terms below right, which we label \emph{matching} and \emph{arbitrary}, that feature multiple related \emph{pops}, or zero.
\[
	\tikzfig{figures/Frobenius-intro}
\]
Semantics tells us that the new terms should represent the relational converse of the diagonal and terminal functions, given respectively by the partial function that sends $(x,x)$ to $x$ and is otherwise undefined,  and the non-deterministic function sending the trivial input to all possible outputs.  

Operationally, the required behaviour of \emph{matching} $\term{<x>;<x>;[x]}$ for an input stack of two identical values, $\val c\,\val c$, is to return $\val c$,  and for distinct values, $\val d\,\val c$, is to fail. We achieve this by generalising \emph{pop} to hold also \emph{constants}, $\term{<c>}$, interpreted as an \emph{assertion} that the head of the stack is $\val c$,  with failure for any other constant. Then $\term{<x>;<x>;[x]}$ evaluates by first popping $\val c$, substituting it for $\val x$ in both \emph{pop} and \emph{push} to get $\term{<c>;[c]}$, which then pops and replaces the second value $\val c$ (or fails for $\val d$). 

The second term, \emph{arbitrary} $\term{[x]}$, illustrates that stack values must include variables, and hence must be substituted for: the term $\term{<c>}$ for a stack $\val x\,\val x$ incurs a substitution of $\val c$ for the second $\val x$. The behaviour of \emph{push} and \emph{pop} thus becomes perfectly symmetric, with substitutions in both the stack and the remaining term.

Altogether these relations are abstracted categorically as \emph{Frobenius monoids},  which we take to capture non-linear computation in the presence of duality,  given in string diagrams as above and satisfied by the \emph{Frobenius equation} below.
\[
	\tikzfig{figures/Frobenius-Equation-only}
\]
Intuitively, this implements the idea that \emph{connectivity = identity}: connected wires represent a single value. In our calculus, this is modeled by \emph{variable names}; and indeed the three diagrams in the above equation may all be represented by the term $\term{<x>;<x>;[x];[x]}$.

Frobenius monoids have received a huge amount of interest over the last decade or so,  since they were identified as a fundamental primitive of quantum computation by the ZX-calculus  \cite{coecke_interacting_2011, coecke_new_2013, Perdrix:completeness-ZX}.  Since then,  they have been adopted as primitives in a wide range of string diagrammatic languages,  including those capturing conjunctive queries,  relational algebra,  and aspects of logic programming 
\cite{Bonchi:graphical-conjunctive-queries,bonchi:tape-diagrams, zanasi:logic-programming,  bonchi_graphical_2019}.  In programming language theory specifically,  they have received significantly less attention,  but have been used in a synthetic axiomatization of the operation of \emph{exact conditioning} in probabilistic programming \cite{di_lavore_evidential_2023, Staton:exact-conditioning} and to model the (partial) inverse of duplication in reversible languages \cite{kaarsgaard_join_2021}.  The RMC offers a novel \emph{operational} account of Frobenius monoids,  thus filling a gap in the literature and bridging the study of string-diagrammatic and programming languages.


\subsection{The full calculus}

The calculus thus far is generated by: \emph{push} $\term{[x]}$ and \emph{pop} $\term{<x>}$ operations, \emph{sequential composition} $\term{M;N}$, and its unit $\term{*}$, the imperative \emph{skip}. Associativity of composition will be implemented via the machine and the equational theory. To reach our desired expressivity, we extend this with a careful selection of features, outlined below.

%
\paragraph{Algebraic terms and unification}
The $\beta$-reduction relation is the interaction of a consecutive \emph{push} and \emph{pop}, say $\term{[y];<x>}$, to incur a (global) substitution $\term{\{y/x\}}$. With constants, both $\term{[c];<x>}$ and $\term{[x];<c>}$ incur $\term{\{c/x\}}$, while $\term{[c];<c>}$ succeeds $(\term *)$ and $\term{[d];<c>}$ fails. 

Viewing a redex as a formal equation, these are the familiar rules of \emph{first-order unification},  restricted to variables and constants. We generalize our calculus to allow \emph{algebraic terms}\footnote{We write "algebraic terms" to avoid name clashing with "first-order terms",  which here refer to terms of the RMC,  in contrast with "higher-order terms" of the $\lambda$-calculus.} as values, and will implement unification on the machine and through a symmetric version of $\beta$-reduction.  
Evaluating a redex $\term{[t];<s>}$ will thus produce substitutions constituting the most general unifier of $\val t$ and $\val s$, or fail if none exists.

%
\paragraph{Kleene Algebra}
To internalize the partial and non-deterministic semantics of relations,  we introduce non-deterministic \emph{sum} $(\term+)$ and \emph{failure} $(\term{0})$,  its unit.  Indeed,  the presence of non-determinism is often considered a defining feature of relational programming. With sequencing already present,  we also desire a Kleene star construction $(\term{-^*})$ to model infinitary behaviour,  \textit{e.g.},  of logic programming.  Our full calculus will thus be conservative over Kleene algebra (KA).  

The atoms of KA may be interpreted as programs,  with the axioms of KA acting as a set of (weak) fundamental laws any standard non-deterministic,  sequential language should satisfy~\cite{hoare_complete_1987,Hoare:laws-of-programming,Kozen:KAT}.  However,  an \emph{operational} interpretation of KA is typically only given via its correspondence with finite state automata.  Our calculus can be viewed as a modification of KA to include the dual primitives of \emph{push} and \emph{pop},   replacing the atoms of KA and supporting its extension with a \emph{direct} operational semantics.

%
\paragraph{Local variables}
Evaluation for Kleene star, taking a term $\term{M^*}$ non-deterministically to a sequence $\term{M;\dots;M}$ of any length, re-introduces duplication of terms into the calculus, bringing with it the familiar problems of variable identity and $\alpha$-conversion.  We adopt a standard solution: making the scope of a variable explicit with a \emph{new variable} construct,  $\term{Ex.M}$,  familiar variously from nominal Kleene algebra \cite{Gabbay:nominal-KA},  functional logic programming \cite{hanus_functional_2013},  $\pi$-calculus \cite{Milner:pi-calculus},  and the $\nu$-calculus \cite{Pitts:nu-calculus} as a fragment of ML.  This restores locality and gives a proper decomposition of $\kappa$-abstraction into its two roles: first, as the binder of new variables,  and second, as an instruction to \emph{pop} from the stack.
\[
	\term{\kappa x.M} ~=~ \term{Ex.(<x>;M)}
\]

%
\paragraph{Locations} 
We inherit a further feature from the Functional Machine Calculus: multiple stacks (or streams) on the abstract machine, indexed in a set of \emph{locations}. This captures several computational effects: mutable store, as stacks restricted to depth one; input and output, as dedicated streams; and a probabilistic generator, as a stream of random bits --- remarkably,  \emph{while retaining confluent reduction}~\cite{Heijltjes:FMCI}. The required generalization of the syntax is a simple parameterization of \emph{push} and \emph{pop} in a location $ a$, as $\term{[x]a}$ and $\term{a<x>}$ respectively, to operate on the indicated stack.  Thereby,  the RMC subsumes the operational semantics of the given effects, while $\beta\eta$-equivalence captures their algebraic theory~\cite{Heijltjes:FMCI, Heijltjes:FMCII}.  This extension allows the simple modelling of stateful models of computation such as Turing machines and Petri nets.

\section{The Relational Machine Calculus}
\label{sec:RMC}

We define the \emph{Relational Machine Calculus} (RMC) as follows. We assume a countable set of \emph{locations} ${A}=\{a,b,c\dots\}$, which each represent a stack on the abstract machine. Stacks hold algebraic terms over a signature $\Sigma$ of function symbols $f^n$ of arity $n$.

\begin{definition}[Relational Machine Calculus]
\emph{Values} $\val{s,t}$ and \emph{(computation) terms} $\term{M,N}$ are given by the grammars below.
\[
\begin{array}{@{}rrll@{}}
	\val{s,t} &\Coloneqq&
			\val{x}
	~\mid~	\val{f^n(t_1,\dots,t_n)} \quad (\val{f^n}\in\Sigma) & \text{algebraic terms}
\\ \\
	\term{M,N} &\Coloneqq&
			\term{*}
	~\mid~	\term{M;N}
	~\mid~	\term{M^*}
	~\mid~	\term{0}
	~\mid~	\term{M + N} & \text{Kleene algebra}
\\[5pt]
	&\mid&  \term{[t]a}
	~\mid~	\term{a<t>}  & \text{stack operations}
\\[5pt]
	&\mid&	\term{Ex.M}	 & \text{variable scope}
\end{array}
\]
\end{definition}

From left to right, the computation terms are \emph{skip} or \emph{nil} $\term*$, \emph{(sequential) composition}  $\term{M;N}$, \emph{Kleene star} $\term{M^*}$, \emph{zero} or \emph{failure} $\term 0$, a \emph{sum} of terms $\term{M+N}$, a \emph{push} $\term{[t]a}$ of the value $\val t$ to the location $a$, a \emph{pop} $\term{a<t>}$ from the location $a$ to unify with $\val t$, and a \emph{new} variable introduction $\term{Ex.M}$ which binds $x$ in $\term M$. Operator precedence: Kleene star binds tightest, then sequencing, then \emph{new} $\term{Ex.M}$, and finally sum; then $\term{Ex.M;N + P;Q^*}=\term{(Ex.(M;N)) + (P;(Q^*))}$.  We often omit the superscript on $f^n$. 

The calculus exhibits \emph{duality} as a syntactic involution.
  
\begin{definition}[Duality]
Define \emph{duality} $(-)^\dagger$ on terms as follows. 
\[
\begin{array}{rcl}
	      \term{*}^\dagger &=& \term{*} 
\\	  (\term{N;M})^\dagger &=& \term{M}^\dagger\term{;}\term{N}^\dagger
\\	  (\term{M^*})^\dagger &=& \term{(M}^\dagger\term{)^*}
\\	      \term{0}^\dagger &=& \term{0}
\end{array}
\qquad
\begin{array}{rcl}
	(\term{M + N})^\dagger &=& \term{M}^\dagger \, \term{+} \, \term{N}^\dagger
\\	    \term{[t]a}^\dagger &=& \term{a<t>}
\\	    \term{a<t>}^\dagger &=& \term{[t]a}
\\	 (\term{Ex.M})^\dagger &=& \term{Ex.M}^\dagger
\end{array}
\]
\end{definition}

We shall see throughout the paper how various operational,  equational,  and semantic notions either dualize or respect duality.


\subsection{Operational Semantics} 

The small-step operational semantics is given by a stack machine, the \emph{relational abstract machine} or \emph{relational machine}.

\begin{definition}[Relational Machine]
A \emph{state} is a triple $({S_A},\term M,K)$ of: a \emph{memory} $S_A$, a family of stacks indexed in a set of locations $A$; a term $\term M$; and a \emph{continuation stack} $K$. These are defined as follows.
\[
\begin{array}{rrll}
		K,L & \Coloneqq & \e ~\mid~ \term M\,K  & \text{continuation stacks}
\\[5pt]	S,T & \Coloneqq & \e ~\mid~ S\,\val t   & \text{operand stacks}
\\[5pt] S_A & \Coloneqq & \{ S_a \mid a\in A \} & \text{memories}
\end{array}
\]
The addition of a stack $S_a$ to a memory $S_A$ ($a\notin A$) is written $S_A\cdot S_a$. We abbreviate $\val{f(t_1,\dots,t_n)}$ to $\val{f(T)}$ where $T=\val{t_1\dots t_n}$, and $\term{[t_1]a;\dots;[t_n]a}$ to $\term{[T]a}$ and $\term{a<t_n>;\dots;a<t_1>}$ to $\term{a<T>}$ (note the inversion). The \emph{transitions} of the machine are given in Figure~\ref{fig:machine}, read top--to--bottom, and are non-deterministic: a state transitions to a formal sum of states, represented by branching of the transitions. A \emph{run} of the machine is a single rooted path in the machine tree, not necessarily to a leaf, shown with a double line as below. 
\[
	\run {S_A}MK {T_A}NL
\]
A run is \emph{successful} if it terminates in a state $\mstate{T_A}*\e$. A state where no transition rules apply represents \emph{failure}, and is considered to have \emph{zero} branches.
\end{definition}

\begin{figure}
\[
\begin{array}{@{}c@{}}
\begin{array}{@{}cc@{}}
		\step {S_A}*{\term M\,K}{S_A}MK
&		\begin{array}{@{}c@{}} \mstate{S_A}{M^*}K \\\hline \mstate{S_A}*K \quad \mstate{S_A}{M;M^*}K \end{array}
\\ \\	\step {S_A}{M;N}K {S_A}M{\term N\,K}
&		\begin{array}{@{}c@{}} \mstate{S_A}{M+N}K \\\hline \mstate{S_A}MK \quad \mstate{S_A}NK \end{array}
\\ \\	\step {S_A\cdot S_a}{[t]a}K {S_A\cdot S_a\,\val t}*K
&		\step {S_A}{Ex.M}K {S_A}{\{y/x\}M}K~(y~\text{fresh})
\\ \\	\step {S_A\cdot S_a\,\val x}{a<x>}K {S_A\cdot S_a}*K
&		\step {S_A\cdot S_a\,\val{f(R)}}{a<f(T)>}K {S_A\cdot S_a\,R}{a<T>}K
\end{array}
\\ \\	\step {S_A\cdot S_a\,\val x}{a<t>}K {\{\val t/\val x\}(S_A\cdot S_a)}*{\{\val t/\val x\}K}~(\val x\notin\val t)
\\ \\	\step {S_A\cdot S_a\,\val t}{a<x>}K {\{\val t/\val x\}(S_A\cdot S_a)}*{\{\val t/\val x\}K}~(\val x\notin\val t)
\end{array}
\]
\caption{Transitions of the Relational Machine}
\label{fig:machine}
\Description[Transitions of the Relational Machine]{}
\end{figure}

Machine evaluation of a term $\term M$ for an input memory $S_A$ gives a (possibly infinite) number of successful runs, each with a return memory $T_A$. The big-step evaluation function $(\eval{S_A}M)$ will collect these as a multiset. Since variables are global and machine steps substitute into the continuation stack, evaluation returns also a finite substitution map $\sigma$ with each return memory $T_A$, as a pair $(T_A,\sigma)$. In a composition $\term{M;N}$, the subsitutions from $\term M$ can then be applied to $\term N$.

\paragraph{Notation}
We denote the empty multiset by $\mempty$, a singleton by $\msingle{(T_A,\sigma)}$, and multiset union by $\munion$. Substitution maps $\sigma$ and $\tau$ are applied to a term as $\term{\sigma M}$ and composed as $\term{(\sigma\tau)M}=\term{\sigma(\tau M)}$. The empty map is $\e$, and $\restrict\sigma \val y$ is as $\sigma$ except undefined on $\val y$. The $n$-fold composition of a term $\term M$ is $\term{M^n}$ where $\term{M^0}=\term*$ and $\term{M^{n+1}}=\term{M;M^n}$.

\newcommand\floor[1]{\lfloor#1\rfloor}
\newcommand\ret{\mathsf{ret}}
\newcommand\sub{\mathsf{sub}}
\begin{definition}[Big-step operational semantics]
\label{def:big-step}
The big-step operational semantics of the RMC is given by the evaluation function $(\eval--)$ below, where $\val y$ is globally \emph{fresh} in the $\term{Ex.M}$ case.
\[
\begin{array}{@{}r@{\,\Downarrow\,}l@{~=~}l@{\qquad}r@{}}
	{S_A}                  & \term{*}    & \msingle{(S_A,\e)}
\\	{S_A}                  & \term{M;N}  & 
\multicolumn{2}{@{}l@{}}{
\begin{array}[t]{@{}l@{}l@{}}
	[\,(U_A,\tau\sigma) \mid \, & (T_A,\sigma) \in \eval{S_A}M,
    \\                          & (U_A,\tau)   \in \eval{T_A}{\sigma N}\,]
\end{array}}
\\	{S_A}                  & \term{M^*}  & \bigsqcup_{n\in\mathbb N}\,(\eval{S_A}{M^n})
\\	{S_A}                  & \term{M+N}  & (\eval{S_A}M) \munion (\eval{S_A}N)
\\	{S_A\cdot S_a}         & \term{[t]a} & \msingle{(S_A\cdot S_a\,\val t,\e)}
\\	{S_A\cdot S_a\,\val x} & \term{a<x>} & \msingle{(S_A\cdot S_a,\e)}
\\	{S_A\cdot S_a\,\val x} & \term{a<t>} & \msingle{(\{\val t/\val x\}(S_A\cdot S_a),\{\val t/\val x\})} & (\val x\notin\val t)
\\	{S_A\cdot S_a\,\val t} & \term{a<x>} & \msingle{(\{\val t/\val x\}(S_A\cdot S_a),\{\val t/\val x\})} & (\val x\notin\val t)
\\	{S_A\cdot S_a\,\val{f(R)}} & \term{a<f(T)>} & \eval{S_A\cdot S_a\,R}{a<T>}                  
\\	{S_A}                  & \term{Ex.M} & \multicolumn{2}{@{}l@{}}{\mcomp{(T_A,\restrict\sigma \val y)}{(T_A,\sigma)\in\eval{S_A}{\{y/x\}M}}}   
\\  {S_A}                  & \term M     & \mempty & \text{(otherwise)}
\end{array}
\]
The function $\floor-$ takes a multiset to its underlying set, and the function $\ret(-)$ 
projects onto only \emph{return memories} 
in 
$\floor{\eval{S_A}M}$.
\end{definition}


We assume that fresh variables are generated globally, without clashing. In particular, in $\term{(Ex.M);N}$ the variable instantiating the $\term{Ex}$ must be free for $\term N$ as well as $\term M$. One may address this practically by carrying along the \emph{formal parameters} of $\term{(Ex.M);N}$ into $\term{Ex.M}$; we omit this for brevity.

\begin{proposition}[Big-step semantics is well defined]\label{lem:big-step-well-defined}
The evaluation function $(\eval{-}{-})$ is a total function.
\end{proposition}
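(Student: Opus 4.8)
The plan is to view $(\eval{-}{-})$ as defined by well-founded recursion over the clauses of Definition~\ref{def:big-step}, and to check the two properties such a definition requires: that every input is assigned a \emph{unique} value (single-valuedness together with exhaustive case coverage), and that the recursion producing that value is \emph{well-founded} (so the value exists). Being defined everywhere is then immediate, since the final \emph{otherwise} clause assigns $\mempty$ to each pair $(S_A,\term M)$ not matched above; so the real work lies in ruling out inconsistent overlaps and in bounding the recursion.

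For single-valuedness I would run a finite case analysis on a \emph{pop} $\term{a<t>}$ against the stack top $\val s$: either $\val s$ and $\val t$ are the same variable; or exactly one is a variable, with the occurs check $\val x\notin\val t$ forcing the orientation of the substitution; or both are applications $\val{f(R)}$, $\val{f(T)}$ under a common symbol; or none of these holds and \emph{otherwise} returns $\mempty$. All clauses except one are disjoint on the shape of $\val s$, $\val t$ and of $\term M$. The exception is two \emph{distinct} variables, where both substitution clauses fire with the opposite orientations of the unifier; here I would recover determinism by reading the clauses with top-to-bottom precedence (equivalently, by a fixed total order on variables), selecting a single orientation.

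The substance is well-foundedness. The clauses $\term{M;N}$, $\term{M+N}$ and $\term{Ex.M}$ all recurse on terms of strictly smaller \emph{skeleton} --- the computation tree with each \emph{push}/\emph{pop} as a leaf --- using the observation that substitution and renaming preserve the skeleton, so $\term{\sigma N}$ and $\term{\{y/x\}M}$ have the same skeletons as $\term N$ and $\term M$. Only the \emph{pop}-unfolding $\term{a<f(T)>}\mapsto\term{a<T>}$ breaks this, replacing one \emph{pop} by $n$ and so \emph{growing} the skeleton while shrinking the matched value. I would therefore factor out \emph{pop} evaluation into a lemma --- that $\eval{S_A\cdot S_a\,\val s}{a<t>}$ is defined for all $\val s,\val t$ --- proved by the classical termination measure for first-order unification, ordered lexicographically by the number of distinct variables in the remaining stack--term problem and then by total size: unfolding a shared symbol drops the size with no new variables, while a variable \emph{pop} eliminates a variable through the guarded substitution. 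The Kleene-star clause is not a recursion at all: $\eval{S_A}{M^*}$ is set equal to the countable union $\bigsqcup_{n}\eval{S_A}{M^n}$ of evaluations of the concrete terms $\term{M^n}$, each assembled from the proper subterm $\term M$ and so already determined, giving a well-defined (possibly infinite) multiset.

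The main obstacle is the clash of measures this exposes. The \emph{pop}-unfolding clause increases the skeleton while the substitution $\term{\sigma N}$ produced by composition can increase the total size of the algebraic terms, so neither a (skeleton, size) nor a (size, skeleton) ordering decreases on both kinds of step; the two must be disentangled, with \emph{pop} evaluation treated as a self-contained unification whose termination rests on the variable-count measure and nested inside the skeleton induction governing the remaining computation structure. Alongside this I would discharge the freshness obligation for $\term{Ex.M}$, checking that restricting $\sigma$ away from the globally fresh $\val y$ leaves the value independent of the choice of $\val y$, so that the equation genuinely defines a function.
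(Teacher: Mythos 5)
Your proposal is correct in substance but organizes the termination argument differently from the paper. The paper uses one flat lexicographic measure on the term, $(\mathsf{K\_nesting},\mathsf{new},\mathsf{var},\mathsf{const},\mathsf{size})$, and its only delicate point is the composition clause: showing $\term{\sigma N}$ is smaller than $\term{M;N}$ requires a case split on whether $\sigma$ is empty and whether $\term M$ contains an $\exists$, supported by two auxiliary observations about the domain and range of $\sigma$. Your substitution-invariant skeleton disposes of exactly this case analysis in one line ($\term{\sigma N}$ has the skeleton of $\term N$, a proper subtree), and your factored lemma for pop-sequences with the $(\mathsf{var},\mathsf{size})$ measure reproduces, in quarantined form, the work done by the paper's $\mathsf{var}$ and $\mathsf{const}$ components (and, in effect, by its appendix proof that unification embeds). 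So your decomposition buys locality --- the unification termination argument lives where it belongs --- while the paper's flat measure buys a single induction with no nested lemmas, and it is reused verbatim later for strong normalisation of star-free reduction. Your single-valuedness discussion is in fact more careful than the paper, whose proof addresses only well-foundedness; the overlap of the two substitution clauses on a pair of distinct variables is real, and fixing an orientation (or invoking the paper's convention of identifying memories and substitutions up to fresh renaming) is needed to read the clauses as a function.

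One step needs repair: the claim that the Kleene-star clause ``is not a recursion at all.'' It is one: $\eval{S_A}{M^*}=\bigsqcup_{n}\eval{S_A}{M^n}$ recurses on the terms $\term{M^n}$, which are neither subterms of $\term{M^*}$ nor skeleton-smaller, and $\eval{S_A}{M^n}$ is not ``already determined'' by $\eval{-}{M}$ alone, because the composition clause threads substitutions: unfolding $\eval{S_A}{M^{n+1}}$ generates calls on instances $\term{\sigma M^{n}}$, not just on $\term M$ at various memories. The fix is available from your own toolkit: either add star-nesting depth as an outermost lexicographic component --- which is precisely the paper's $\mathsf{K\_nesting}$, since $\term{M^n}$ has strictly smaller star-depth than $\term{M^*}$ --- or run an inner induction on $n$ nested inside the skeleton induction, observing that every call generated by $\term{M^{n}}$ is on a substitution instance of $\term M$ (same skeleton, strictly below $\term{M^*}$) or of $\term{M^{k}}$ with $k<n$. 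With that one adjustment your argument is sound and, arguably, cleaner in the composition case than the paper's.
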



\begin{proposition}[Small-step and big-step semantics agree]\label{prop:small-big-agree}
For every memory $S_A$,  term $\term M$ and continuation $K$, there is a bijection between the elements $(T_A,\sigma)$ of $(\eval{S_A}M)$ and successful runs
\[
	\run{S_A}MK {T_A}*{\sigma K}~.
\]
\end{proposition}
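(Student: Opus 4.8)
The plan is to prove the bijection by induction on the structure of the term $\term M$, establishing the statement uniformly for \emph{all} memories $S_A$ and \emph{all} continuations $K$ simultaneously, so that the continuation is available as a free parameter for the inductive hypothesis. The leaf cases ($\term*$, $\term 0$, $\term{[t]a}$, and the two variable pops) are immediate: each machine rule in Figure~\ref{fig:machine} matches exactly one clause of Definition~\ref{def:big-step}, and one checks that the final continuation agrees. For instance, for $\term{a<t>}$ on a variable-headed stack the single transition reaches $\mstate{\{\val t/\val x\}(S_A\cdot S_a)}*{\{\val t/\val x\}K}$, matching the singleton $\msingle{(\{\val t/\val x\}(S_A\cdot S_a),\{\val t/\val x\})}$ via $\sigma=\{\val t/\val x\}$. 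Where no rule applies the machine fails and the big-step function returns $\mempty$, so both sides have zero elements; well-foundedness of the recursion is guaranteed by Proposition~\ref{lem:big-step-well-defined}.

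The compound pop $\term{a<f(T)>}$ is handled by an auxiliary induction on the size of $\val{f(T)}$: the transition $\mstate{S_A\cdot S_a\,\val{f(R)}}{a<f(T)>}K$ to $\mstate{S_A\cdot S_a\,R}{a<T>}K$ and the clause $\eval{S_A\cdot S_a\,\val{f(R)}}{a<f(T)>}=\eval{S_A\cdot S_a\,R}{a<T>}$ agree step-for-step, reducing the claim to $\term{a<T>}$, itself a composition of pops over the strictly smaller values $\val{t_1},\dots,\val{t_n}$, which then falls to the composition case together with the auxiliary hypothesis. The cases $\term{M+N}$ and $\term{Ex.M}$ are routine by comparison. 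For $\term{M+N}$ the branching transition splits runs into those through $\mstate{S_A}MK$ and those through $\mstate{S_A}NK$, matching $(\eval{S_A}M)\munion(\eval{S_A}N)$ by the two hypotheses. For $\term{Ex.M}$ the single transition renames $x$ to a fresh $\val y$ and passes to $\term{\{y/x\}M}$; the hypothesis gives runs to $\mstate{T_A}*{\sigma K}$ for $(T_A,\sigma)\in\eval{S_A}{\{y/x\}M}$, and since $\val y$ is globally fresh it does not occur in $K$, so $\sigma K=(\restrict\sigma\val y)K$ and each run terminates exactly in the state demanded by the entry $(T_A,\restrict\sigma\val y)$ of $\eval{S_A}{Ex.M}$.

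The composition case $\term{M;N}$ is the crux. The opening transition moves $\mstate{S_A}{M;N}K$ to $\mstate{S_A}M{\term N\,K}$, so the hypothesis for $\term M$ applied with continuation $\term N\,K$ puts successful runs reaching $\mstate{T_A}*{\term{\sigma N}\,\sigma K}$ in bijection with $(T_A,\sigma)\in\eval{S_A}M$ (the buried $\term N\,K$ being carried along and substituted into); a further $\term*$-transition then exposes $\term{\sigma N}$, and the hypothesis for $\term{\sigma N}$ with continuation $\sigma K$ gives runs reaching $\mstate{U_A}*{\tau\sigma K}$ in bijection with $(U_A,\tau)\in\eval{T_A}{\sigma N}$. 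Concatenating the two segments yields precisely the runs from $\mstate{S_A}{M;N}K$ to $\mstate{U_A}*{\tau\sigma K}$, matched to the pairs $(U_A,\tau\sigma)$ of the big-step comprehension. The delicate point — which I would isolate as a small \emph{decomposition lemma} — is that \emph{every} successful run of $\term{M;N}$ factors uniquely in this way: one must show that the part $K$ lying beneath $\term N$ on the continuation is never consumed before $\term N$ is, so the run passes through a unique first state of the form $\mstate{T_A}*{\term{\sigma N}\,\sigma K}$, and that the substitutions accumulated while running $\term M$ thread correctly into the copy of $\term N$ held on the continuation. This is exactly where global variables and substitution into the continuation stack must be tracked with care, and I expect it to be the main obstacle; the remaining cases are essentially bookkeeping on top of it.

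Finally, Kleene star reduces to composition. Since $\eval{S_A}{M^*}=\bigsqcup_{n\in\mathbb N}\eval{S_A}{M^n}$ is a disjoint sum over finite powers, and since every \emph{successful} (hence terminating) run out of $\mstate{S_A}{M^*}K$ takes the unfolding branch $\mstate{S_A}{M;M^*}K$ some finite number $n$ of times before taking the exit branch $\mstate{S_A}*K$, each successful run determines a unique $n$ and restricts to a successful run of $\term{M^n}$, with substitutions propagated through the continuation exactly as in the equality $\sigma(M^{k})=(\sigma M)^{k}$. The bijection for each fixed $\term{M^n}$ follows from the composition case by a side-induction on $n$ with base $\term{M^0}=\term*$, and summing over $n$ delivers the claimed bijection for $\term{M^*}$.
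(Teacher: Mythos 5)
Your plan is correct in outline, but it is a genuinely different route from the paper's. The paper proves the proposition in a single line: it annotates each run with a \emph{trace}, a word $w\in\{0,1\}^*$ recording the choice made at each branching transition, and relies on the fact that the big-step multiset carries exactly the same choice structure (left/right for $\term{M+N}$, exit/unfold --- equivalently the exponent $n$ --- for $\term{M^*}$, with all other transitions deterministic), so elements and runs are indexed by the same traces. This sidesteps the point you rightly isolate as delicate: a state ``of the form $\mstate{T_A}{*}{\sigma K}$'' can occur several times along one path (take $\term{M}=\term{*}$ and $K=\term{(*)^*}\,\e$: the start state has this form, and after popping, unfolding the star once, and one sequencing step the machine returns to a state of the very same form), so the run associated with a big-step element is better \emph{constructed} than \emph{characterized}, and traces do exactly that. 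Your structural induction with the continuation universally quantified instead characterizes the factorization point, and your decomposition lemma does hold, by an invariant you should make explicit: along any run from $\mstate{S_A}{M}{\term{N}\,K}$ the continuation always has the shape $K'\cdot\sigma(\term{N}\,K)$, where buried entries are touched only by the substitutions applied at pop steps; hence a state with head $\term{*}$ and continuation of length $1+|K|$ forces $K'=\e$, which both pins down the unique first factorization state and shows the accumulated $\sigma$ threads correctly into the stored copy of $\term{N}$. This suffix/length invariant, proved by induction on run length, closes the one gap you flag. Your appeal to the measure of Proposition~\ref{lem:big-step-well-defined} is also the right way to justify the non-structural recursive calls (to $\term{\sigma N}$, to $\term{a<T>}$, and to the powers $\term{M^n}$ under the star), since that measure strictly decreases in each. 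What each approach buys: the paper's trace argument is short and makes the bijection essentially definitional, at the price of leaving every case-verification implicit; your induction makes all cases explicit and yields continuation-genericity of runs as a reusable by-product, at the price of having to prove the decomposition invariant that the trace formulation never needs to state.
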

%



\section{Encoding computational models, I}\label{sec:encoding-models-i}

We embed a number of computational models (or algorithms, in the case of unification) into the RMC to illustrate the origins and purposes of its various constructs.  Observe,  in particular,  that operational semantics is preserved,  often in  a strong sense.  We use only one, unnamed stack,  writing $\term{[t]}$ and $\term{<t>}$.  Encodings of stateful languages,  making use of multiple locations,  are given in Section \ref{sec:encoding-models-ii}.  

\paragraph{Notation}
We introduce vector notation for stacks of variables $\val{!x} = \val{x_1 \ldots x_n}$,  and emphasize the reversal in $\term{<?x>} = \term{<x_n>;\ldots;<x_1>}$ by pointing the arrow left.  Concatenation is given by juxtaposition.  We further abbreviate $\term{Ex_1\ldots Ex_n. M}$ as $\term{\exists !x.M}$.    



\subsection{Regular expressions}\label{subsec:regex}

Regular expressions (REs) over an alphabet $\Sigma$ of constants $a,b,c,\dots$ are given by the following grammar, with their embedding $\llbracket - \rrbracket$ into RMC-terms below it, given by juxtaposition.
\[
\begin{array}{r@{~~\Coloneqq~~}c@{~~\mid~~}c@{~~\mid~~}c@{~~\mid~~}c@{~~\mid~~}c@{~~\mid~~}c}
	          E,E' &   \e   &      E\,E' &       E^*  & \varnothing &     (E|E') & ~c\in\Sigma
\\[5pt] \term{M,N} & ~\term*~ &~ \term{M;N}~ & ~\term{M^*}~ &  ~ \term{0}~  & ~\term{M+N}~ & ~\term{[c]}
\end{array}
\]
%
Here, constant values $\val c$ are \emph{pushed} to the stack.  The regular language $\mathcal{L}_E$ defined by $E$ --- a set of words (stacks) over $\Sigma$ --- is thus given directly by the evaluation of $\llbracket E\rrbracket$. 

\begin{proposition}[Regular expressions embed]
For an RE $E$,
\[
	\mathcal{L}_E~=~\ret\floor{\evalnocolor{\e}{\llbracket E\rrbracket}}~.
\]
\end{proposition}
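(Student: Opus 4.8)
The plan is to prove the equation by induction on the structure of the regular expression $E$, showing that the embedding $\llbracket-\rrbracket$ translates each RE constructor into an RMC term whose big-step evaluation (projected onto return memories via $\ret\floor{-}$) computes exactly the corresponding language operation. The key insight is that, since REs use only constant pushes $\term{[c]}$ and never \emph{pop}, no unification or substitution ever occurs during evaluation; every successful run returns the empty substitution $\e$, and the return memory is simply a stack of constants, i.e.\ a word over $\Sigma$. Thus $\ret\floor{\evalnocolor{\e}{\llbracket E\rrbracket}}$ is literally a set of words, and we must check it coincides with $\mathcal L_E$.

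First I would fix the correspondence between stacks of constants and words: a return memory $\val{c_1\dots c_n}$ (built on the single empty initial stack $\e$) is read as the word $c_1\cdots c_n$. Then I would proceed through the cases of the grammar, using Definition~\ref{def:big-step} in each. For $\e$, the embedding is $\term*$ and $\eval\e{\term*}=\msingle{(\e,\e)}$, whose projection is the singleton $\{\varepsilon\}=\mathcal L_\e$. For $c$, the embedding is $\term{[c]}$ and $\eval\e{\term{[c]}}=\msingle{(\e\,\val c,\e)}$, projecting to $\{c\}=\mathcal L_c$. For $\varnothing$, the embedding is $\term 0$, which evaluates to $\mempty$ (the otherwise case), projecting to $\varnothing=\mathcal L_\varnothing$. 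For the union $E\,|\,E'$, the $\term{M+N}$ clause gives $\eval\e{\llbracket E|E'\rrbracket}=(\eval\e{\llbracket E\rrbracket})\munion(\eval\e{\llbracket E'\rrbracket})$, and since $\ret\floor{-}$ turns multiset union into set union, the induction hypothesis yields $\mathcal L_E\cup\mathcal L_{E'}$.

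The two composite cases carry the real content. For concatenation $E\,E'$, the $\term{M;N}$ clause evaluates $\llbracket E\rrbracket$ first, and for each returned $(T_A,\sigma)$ evaluates $\sigma\llbracket E'\rrbracket$ on $T_A$. Here I would exploit the no-\emph{pop} observation: every $\sigma$ is $\e$, so $\sigma\llbracket E'\rrbracket=\llbracket E'\rrbracket$, and evaluating it on a memory $T_A$ holding the word $w\in\mathcal L_E$ simply pushes the letters of $E'$ on top, yielding exactly the words $w\,w'$ with $w'\in\mathcal L_{E'}$; this matches $\mathcal L_E\cdot\mathcal L_{E'}$. A small auxiliary lemma is useful here: evaluation is \emph{stack-polymorphic} for push-only terms, i.e.\ $\ret\floor{\evalnocolor{S}{\llbracket E'\rrbracket}}=\{S\,w'\mid w'\in\ret\floor{\evalnocolor\e{\llbracket E'\rrbracket}}\}$, proved by a routine induction. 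For the star $E^*$, the clause gives $\eval\e{\llbracket E^*\rrbracket}=\bigsqcup_{n\in\mathbb N}\eval\e{\llbracket E\rrbracket^n}$; iterating the concatenation argument shows $\ret\floor{\eval\e{\llbracket E\rrbracket^n}}=(\mathcal L_E)^n$, and the countable union over $n$ delivers the Kleene closure $\mathcal L_{E^*}=\bigcup_n(\mathcal L_E)^n$.

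The main obstacle is bookkeeping around the composition case rather than any deep difficulty: I must confirm that the $\munion$-to-$\cup$ collapse under $\ret\floor{-}$ behaves correctly across the multiset comprehension in the $\term{M;N}$ clause, and that the stack-polymorphism lemma holds uniformly so that the words produced by $\llbracket E'\rrbracket$ append cleanly on top of those from $\llbracket E\rrbracket$. One subtlety worth flagging is that the star produces a \emph{multiset} which may count a given word with multiplicity (from distinct $n$ or distinct runs), but applying $\floor-$ before $\ret$ discards these multiplicities, so the set-level equation is unaffected; the totality of $(\eval--)$ from Proposition~\ref{lem:big-step-well-defined} guarantees every case is well defined.
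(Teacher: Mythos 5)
Your proof is correct, and the paper in fact states this proposition without proof (it is not among those established in the appendix); your structural induction is precisely the argument the paper leaves implicit, with the right supporting observations made explicit — that the push-only fragment never produces substitutions (so $\sigma=\e$ in the composition clause), that evaluation of push-only terms is stack-polymorphic so words from $\llbracket E'\rrbracket$ append cleanly, and that $\floor{-}$ collapses the multiplicities arising from sums and from the star. Nothing is missing.
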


There is a \emph{dual} embedding $\llbracket - \rrbracket^\dagger$ of REs into the grammar below,  given by composition of the original embedding with duality. 
\[
	\term{M,N}~\Coloneqq~
		  \term*
	~\mid~\term{M;N}
	~\mid~\term{M^*}
	~\mid~\term{0}
	~\mid~\term{M+N}
	~\mid~\term{<c>}
\]
Here, constant values $\val c$ are \emph{popped} from the stack. The embedding can be considered as defining a computation which \emph{tests} words (stacks),  with a word $S$ accepted if $\e \in \ret\floor{\evalnocolor{S}{\llbracket E\rrbracket^\dagger}}$. The {characteristic (or indicator) function} of a regular language $\mathcal{L}_E$ is thus given directly by evaluation of $\llbracket E \rrbracket ^\dagger$.


\subsection{Unification}

A (first-order) unification algorithm~\cite{martelli_efficient_1982} takes a set of formal equations $E=\{{s_1}\doteq {t_1},\dots,{s_n}\doteq {t_n}\}$ and returns a most general unifier (MGU), a minimal subsitution $\sigma$ such that $\sigma {s_i}=\sigma {t_i}$ for all $i$, if one exists. In the RMC, we may encode such equations as redexes: $\llbracket s\doteq t\rrbracket=\term{[s];<t>}$.  Unification (as a general concept) can then be viewed as embedding in the RMC as the following fragment.
\[
	\term{M,N}~\Coloneqq~
	      \term*
	~\mid~\term{M;N}
	~\mid~\term{[t]}
	~\mid~\term{<t>}
\]
Terms are sequences of pushes and pops of first-order values to and from this location. A set of equations $E$ is encoded as follows.
\[
	\llbracket\{s_1\doteq t_1,\dots,s_n\doteq t_n\}\rrbracket
	=
	\term{[s_n];\dots;[s_1];<t_1>;\dots;<t_n>}
\]
Note that we have chosen nested redexes, since that is also what the machine reduction for function symbols produces; but a sequence of redexes would have been equally valid. We then observe that evaluation returns an MGU if one exists.

\begin{proposition}[First-order unification embeds]\label{prop:unification-embeds}
For a set of formal equations over algebraic terms $E$ we have the following, where $\sigma$ is an MGU for $E$, if one exists, and otherwise $\evalnocolor{\e}{\llbracket E\rrbracket}=\mempty$.
\[
	\evalnocolor{\e}{\llbracket E\rrbracket}=\msingle{(\e,\sigma)}
\] 
\end{proposition}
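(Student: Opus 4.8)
The plan is to exploit that the sub-fragment used here contains none of the branching constructs $\term0$, $\term+$, $\term{-^*}$, nor the binder $\term{Ex}$, so the machine is \emph{deterministic}: every big-step clause that can fire on a push or a pop produces a singleton multiset, and the composition rule only performs substitution bookkeeping. Hence $\evalnocolor\e{\llbracket E\rrbracket}$ is always either a singleton $\msingle{(T_A,\sigma)}$ or $\mempty$, matching the shape of the statement and letting us ignore genuine multiset arithmetic throughout. First I would peel off the pushes: since each $\term{[s_i]}$ evaluates to $\msingle{(\val{\ldots s_i},\e)}$ with the empty substitution, repeated application of the composition rule (using associativity of big-step composition) gives
\[
	\evalnocolor\e{\llbracket E\rrbracket}
	~=~
	\evalnocolor{\val{s_n\dots s_1}}{\term{<t_1>;\dots;<t_n>}}~,
\]
reducing the claim to a lemma about a built-up operand stack being consumed by a sequence of pops.

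The core is then a lemma stating that $\evalnocolor{\val{s_n\dots s_1}}{\term{<t_1>;\dots;<t_n>}}=\msingle{(\e,\sigma)}$ where $\sigma$ is an MGU of $\{s_1\doteq t_1,\dots,s_n\doteq t_n\}$ if one exists, and is $\mempty$ otherwise, proved by induction on a standard termination measure for unification (lexicographically: the number of as-yet-unsolved variables, then the total size of the terms). The induction matches each machine rule to a transformation of the classical algorithm: the decomposition rule on $\term{a<f(T)>}$ against $\val{f(R)}$ is the \emph{decompose} step $\val{f(R)}\doteq\val{f(T)}\mapsto\{\val{r_i}\doteq\val{t_i}\}$; the two variable rules with side condition $\val x\notin\val t$ are \emph{eliminate}, and their inapplicability when that condition fails is the \emph{occurs-check} failure yielding $\mempty$; the absence of any applicable rule when the head symbols differ is the \emph{clash} failure, again yielding $\mempty$.

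The structural heart is substitution propagation. Because the composition rule evaluates $\term{\sigma N}$ rather than $\term N$, and each variable rule substitutes into the entire remaining memory, the substitution computed while solving one equation is carried into \emph{both} the operands still on the stack and the pops still to come --- exactly the "apply the current substitution to the remaining equations" of the algorithm. The left-composition $\term{\tau\sigma}$ appearing in the rule for $\term{M;N}$ then assembles the stepwise unifiers in the correct order, and most-generality of the result follows because each step computes a most general unifier of its own equation. At the final step the stack is fully consumed, so the returned memory is $\e$, giving the $(\e,\sigma)$ of the statement.

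The main obstacle is the decomposition case, which is \emph{not} a single redex: popping $\term{a<f(T)>}$ against $\val{f(R)}$ pushes the arguments $R$ back and continues with $\term{a<T>}$, so the resulting sub-equations become interleaved, in one long pop-sequence, with the later equations $\val{s_{i+1}}\doteq\val{t_{i+1}},\dots$ This forces the lemma to be proved for arbitrary pop-sequences simultaneously rather than one equation at a time, and requires the induction measure to strictly decrease across decomposition (an outer symbol is removed) and across elimination (an unsolved variable is removed, which must dominate the size increase caused by substitution). This is precisely the classical subtlety of proving unification terminates, so the cleanest route is to phrase the correspondence as a step-by-step simulation of a known-terminating algorithm \cite{martelli_efficient_1982}, inheriting its termination and correctness rather than re-deriving them.
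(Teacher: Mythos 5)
Your proposal is correct and follows essentially the same route as the paper's proof: evaluate the pushes to reduce to a stack consumed by a pop-sequence, then prove a generalized lemma for arbitrary pop-sequences (the paper additionally generalizes over a surrounding memory $S_A\cdot S_a$, which it reuses elsewhere) by induction on a lexicographic measure --- the paper uses (distinct free variables, function symbols, stack length), your (unsolved variables, total term size) works equally well --- with the same case analysis mirroring the Martelli--Montanari steps and the same treatment of substitution propagation through the remaining stack and pops. You correctly anticipated the one subtlety the paper's generalized statement is designed to handle, namely the interleaving of sub-equations produced by the decomposition rule.
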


For machine evaluation, which does not return a substitution explicitly, we may obtain $\sigma$ by populating the initial stack with the free variables $\val{x_1} \dots \val{x_n}$ in $E$. This is the domain of $\sigma$; the machine then substitutes into each variable $\val{x_i}$ to return $\sigma \val{x_i}$, as follows, so that the relation from input to output stack captures the MGU $\sigma$.
\[
	\run 
	  {\val{x_1}~\dots~\val{x_n}} {{\color{black}} \llbracket `E{\color{black}}\rrbracket} \e
	  {\sigma \val{x_1}~\dots~\sigma \val{x_n}} {*} \e
\]


\subsection{Kappa-calculus}\label{kappa-encoding}
Hasegawa's $\kappa$-calculus~\cite{Hasegawa:decomposing-LC},  which featured in the introduction, as formulated by Power and Thielecke~\cite{Power:kappa-cat}, embeds as follows.
\[
\begin{array}{r@{~~\Coloneqq~~}c@{~~\mid~~}c@{~~\mid~~}c}
	E,E' & ~E;E' & \mathsf{PUSH}\,x & \kappa x.E
\\[5pt]
	\term{M,N} & ~\term{M;N} & \term{[x]} & \term{Ex.<x>;M}
\end{array}
\]
A $\kappa$-abstraction (first-order $\lambda$-abstraction) decomposes into a \emph{new} $\term{Ex}$ and a \emph{pop} $\term{<x>}$, so that $\kappa x$ acts as a binder as well as taking input. The asymmetry of this construct, which rules out terms such as $\term{<x>;<x>;[x]}$, gives the Cartesian semantics. The machine behaviour illustrates that the decomposition gives the correct behaviour: pop and substitute.
\[
\begin{array}{@{(~}l@{~,~}r@{~,~}r@{~)}}%
         S\,\val z & \term{Ex.(<x>;M)}   & K %
\\\hline S\,\val z & \term{<y>;\{y/x\}M} & K %
\\\hline S\,\val z & \term{<y>}          & \term{(\{y/x\}M)}\,K %
\\\hline S         & \term{*}            & \term{(\{z/x\}M)}\,K %
\\\hline S         & \term{\{z/x\}M}     & K %
\end{array}%
\]
Note that Power and Thielecke's formulation of the $\kappa$-calculus,  above,  unfortunately does not include the unit for sequencing; an omission which is fixed in the FMC and RMC. 

\subsection{Pattern-matching}

In a $\lambda$-calculus with \emph{pattern-matching}~\cite{Kesner:pm,Cirstea-Kirchner-1998,klop_lambda_2008}, abstractions are over \emph{patterns}, which generally are algebraic terms, instead of over variables. Since we do not have higher-order abstraction, we will stay within $\kappa$-calculus. We will use \emph{pairing} as our only pattern, as is standard in a Cartesian setting.
\[
	E,E'~\Coloneqq~
	      E;E'
	~\mid~\mathsf{PUSH}~p
	~\mid~\kappa p.E
\qquad
	p,q~\Coloneqq~
	      x
	~\mid~(p,q)
\]
In $\kappa p.E$, the free variables of the pattern $p$ bind in $E$. The embedding in the RMC captures this by introducing each variable as \emph{new}:
\[
\begin{array}{@{}r@{}}
	\term{M,N}~\Coloneqq~
	      \term{M;N}
	~\mid~\term{[p]}
	~\mid~\term{E!x.<p>;M} \qquad(\val{!x} = \fv{p})
\end{array}
\]
The behaviour of the RMC-encoding is slightly more general than that of the original pattern-matching calculus, since it can deal with situations such as $\term{[x];Ey.Ez.<(y,z)>}$ (by substituting into $x$), which the latter cannot. Pattern-matching calculi address this by introducing explicit product types. With that constraint, the overall behaviour agrees, as expected.

Languages using \emph{let-bindings} with patterns, used for string diagrams or monoidal categories~\cite{Hasegawa-1997,Power:kappa-cat}, admit an analogous encoding to that of monadic let-bindings in the FMC~\cite{Heijltjes:FMCI}, as follows.
\[
	\mathsf{let}~E=p~\mathsf{in}~E' \quad\mapsto\quad \term{`E ; E!x.<p> ; `E'}\quad (\val{!x} = \fv{p})
\]


\subsection{Symmetric pattern-matching}

We embed a first-order fragment of \emph{Theseus},  a \emph{reversible} programming language which is both forwards and backwards deterministic~\cite{Sabry:symmetric-pattern-matching,choudhury_symmetries_2022,Chardonnet:curry-howard-reversible}. Its \emph{values} are generated from pairing and injections. Computation is performed by application of \emph{isomorphisms},  expressed using \emph{symmetric pattern-matching} syntax below,  where the set of values $v_i$ (respectively,  $w_i$) are restricted by the type system to be \emph{exhaustive} and \emph{non-overlapping},  guaranteeing reversibility.  We omit presentation of the type system here for brevity. 
\[
	\{ ~v_1 \leftrightarrow w_1 ~\mid~ \ldots ~\mid~ v_n \leftrightarrow w_n~\} 
\]
We interpret values using corresponding function symbols,  and embed isomorphisms as below.
\[
	\term{\exists !x_1.<v_1>;[w_1] ~+~ \ldots ~+~ \exists !x_n.<v_n>;[w_n]}
\]
Each $\val{!x_i}$ contains exactly the variables of ${v_i}$ and $ {w_i}$.  Isomorphism \emph{inversion} is given by $(-)^\dagger$.  The application of isomorphism $\term{M}$ to value $\val v$ is encoded as $\term{[v];M}$.


\subsection{Prolog}

\newcommand\prologcolon{\mathrel{{:}{:}{=}}}
\newcommand\query{\mathrel{{?}{-}}}
\newcommand\inet[2]{\langle#1\mid#2\rangle}
\newcommand{\ssep}{\, ;\,}

We embed pure Prolog,  as defined by the following grammar.
\[
	\begin{array}{@{}l@{\quad}r@{~}l@{\quad}l@{\quad}r@{~}l@{}}
	\text{Terms:}     & t \coloneqq&  x \mid f(t_1, \ldots, t_n) & \text{Atoms:}     & A \coloneqq&  P(t_1, \ldots, t_n) \\
	\text{Clauses:}   & C \coloneqq&  A ~{:}{-}~ A_1\dots A_n  & \text{Programs:}  & L \coloneqq&  C_1\dots  C_n
	\end{array}
\]
Prolog by defualt evaluates using a \emph{top-down} evaluation strategy,  which takes the query --- that is,  a chosen atom --- as a \emph{goal} to be proved and non-deterministically applies clauses,  decomposing the set of goals into sub-goals,  succeeding when the set is empty.  

We consider atoms,  as well as terms,  to be modelled in the RMC by algebraic terms. 
The embedding of a clause $C$ with free variables $\rvecup x$ is then given as follows.
\[
{\llbracket A ~{:}{-}~ A_1\dots A_n\rrbracket} ~=~ \term{E!x . <A>;[A_1]; .. ;[A_n]}
\]
The stack will hold the set of goals,  with the non-deterministic sum of clauses modelling the program itself. To select the correct solution among the non-deterministic outputs,  we add a new constant $\val{`{\mathsf{END}}}$ to the input stack,  which can only be removed when the computation is successfully completed.  
The embedding of a query-program pair is thus given below.
\[
	\llbracket (Q, C_1 \ldots C_n) \rrbracket
	=
	\term{[`{\mathsf{END}}];[Q];\,(}\llbracket C_1\rrbracket \term{+ \ldots +~} \llbracket C_n\rrbracket \term{)^*;<`{\mathsf{END}}>}
\]
As we did with unification,  by populating the initial stack with the query $Q$ we collect the substitutions generated by the machine run.  The completed computation then returns the same instantiation of $Q$ as does Prolog.

\begin{proposition}[Prolog embeds]
\label{prop:prolog}
Given as input a query $Q$ and program $L$,  the Prolog abstract interpreter~\cite{art-of-prolog} outputs the instance $Q'$  if and only if 
\[
	\run{\val{Q}}{\term{{\color{black}\llbracket (Q,L)\rrbracket}}}{\e}{\val{Q'}}{\term{\star}}{\e}
\]
\end{proposition}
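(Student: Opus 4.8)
The plan is to set up a lock-step operational correspondence between the SLD-resolution derivations performed by the abstract interpreter and the runs of the relational machine on $\llbracket(Q,L)\rrbracket$, from which both directions of the biconditional follow. Write $M = \llbracket C_1\rrbracket + \dots + \llbracket C_n\rrbracket$ for the sum of clause embeddings. The key device is an invariant matching each reachable SLD state --- a goal list $g_1,\dots,g_k$ of atoms together with the answer substitution $\theta$ accumulated so far --- to a machine state whose remaining control is the loop $\term{M^*};\term{<\mathsf{END}>}$ (i.e.\ the machine is poised to unfold the Kleene star, with the final pop still pending on the continuation) and whose operand stack is
\[
	\theta\val Q ~\cdot~ \val{\mathsf{END}} ~\cdot~ g_1 \cdots g_k~,
\]
with the current goals above the fresh marker $\val{\mathsf{END}}$, which in turn sits above a bottom ``tracking'' copy of the query instantiated by $\theta$; the selected goal is on top of the stack. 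Initialisation is immediate from the two leading pushes: on input $\val Q$, evaluating $\term{[\mathsf{END}]};\term{[Q]}$ yields $\val Q \cdot \val{\mathsf{END}} \cdot \val Q$ with $\theta=\e$ and singleton goal list $Q$.

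The heart of the argument is a one-step simulation lemma: one resolution step corresponds to exactly one unfolding of the loop. Unfolding $\term{M^*}$ via the continue-branch $\term{M^*}\mapsto\term{M;M^*}$ and then choosing a summand of $M$ selects a clause $C_i = (A~{:}{-}~A_1\dots A_m)$; the leading $\term{Ex}$-binders of $\llbracket C_i\rrbracket$ realise exactly Prolog's standardising-apart, producing a fresh variant of the clause; the pop $\term{<A>}$ removes the top goal and, by the machine's unification behaviour (Proposition~\ref{prop:unification-embeds}), unifies it with the fresh head $A$ to yield the MGU $\mu$; and the pushes $\term{[A_1];\dots;[A_m]}$ extend the goal list with the body. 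Since the machine applies $\mu$ globally, the successor state has stack $\mu\theta\val Q \cdot \val{\mathsf{END}} \cdot \mu(g_1\cdots g_{k-1}A_1\cdots A_m)$ and accumulated substitution $\mu\theta$ --- precisely the SLD successor. This is where I expect the real work: verifying that the globally-applied machine unifier coincides with the SLD MGU, and that the globally-fresh variables introduced by $\term{Ex}$ match clause renaming without capture, so that the bottom copy of $Q$ accumulates exactly the composed answer substitution restricted to the variables of $Q$.

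The marker $\mathsf{END}$ enforces the correct stopping condition. Because $\mathsf{END}$ is a fresh constant occurring in no clause head, no clause can be applied while $\mathsf{END}$ is on top (the pop fails to unify it with any predicate head), and exiting the loop via $\term{M^*}\mapsto\term{*}$ while goals remain leaves an ordinary atom --- never a bare variable, in pure Prolog --- on top, so the final $\term{<\mathsf{END}>}$ fails. Hence a run is successful precisely when it exits the loop at the moment the goal list becomes empty, matching SLD success; at that point the residual stack is the instantiated tracking copy $\theta\val Q$, i.e.\ the computed instance $Q'$.

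Both directions then follow by induction on the length of, respectively, the derivation and the run. For the forward direction, a successful derivation of the abstract interpreter may, by independence of the selection (computation) rule for SLD resolution, be assumed to use the selection rule induced by the stack discipline, and simulating it via the step lemma builds a successful run ending in $\val{Q'}$. Conversely, a successful run decomposes uniquely into loop unfoldings followed by the pop of $\mathsf{END}$, and reading off the clause choice and MGU at each unfolding yields an SLD derivation computing the same instance $Q'$. The two principal obstacles are the variable-freshness bookkeeping of the step lemma noted above, and the appeal to selection-rule independence needed to reconcile Prolog's leftmost-goal strategy with the machine's LIFO stack discipline.
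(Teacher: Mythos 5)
Your proposal is correct and takes essentially the same approach as the paper: the paper likewise reduces each loop iteration to a single clause-application step via Proposition~\ref{prop:unification-embeds}, identifies machine stacks (goals above the $\mathsf{END}$ marker above a tracking copy of $Q$) with interpreter resolvent stacks, and concludes by induction with a bijection between successful runs and successful interpreter executions. The only presentational differences are that the paper pre-unfolds the Kleene star into the set of all finite sequential compositions of clause embeddings (postcomposed with the final pop) instead of maintaining your lock-step loop invariant, and it absorbs your explicit appeal to selection-rule independence by stating the bijection ``over all possible scheduling policies.''
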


In contrast to top-down evaluation,  \emph{bottom-up} evaluation starts from the set of \emph{facts} (\emph{i.e.},  clauses with a head $A$ but no body $A_i$) and non-deterministically applies rules \emph{in reverse},  building up the set of logical consequences of facts,  succeeding when the query is reached.  
Remarkably,  the RMC reveals a syntactic duality between the two strategies: the dual embedding ${\llbracket (Q,L)\rrbracket^{\dagger}}$ evaluates the query-program pair via the bottom-up strategy.  

Further,  although beyond the scope of the current work,  we believe that \emph{semi-naive evaluation} --- which can asymptotically improve the performance of bottom-up queries --- can be formalized in our language as a source-to-source translation. 


\section{Equational Theory and Reduction}
\label{sec:equational-theory}

In this section,  we define the equational theory of the RMC and show that it is sound with respect to a natural notion of observational equivalence induced by the operational semantics.  We then recover a reduction relation through a natural orientation of (an appropriate subset of) the equations,  which we prove confluent.  In the star-free fragment,  this relation is strongly normalising,  giving rise to normal forms.


\subsection{The Equational Theory}

The equational theory we define is conservative over Kleene algebra,  but contains the following additional axioms.  For the \emph{new} variable construct,  there are axioms reminiscent of those from \emph{nominal} Kleene algebra \cite{Gabbay:nominal-KA}.  There are then axioms internalizing unification,  with each corresponding to a step in the standard Martelli and Montanari algorithm for unification (including its "occurs check")~\cite{martelli_efficient_1982}. In particular,  these axioms include a symmetric version of the $\beta$-law of the $\kappa$-calculus.  There is also a weaker,  inequational version of the $\eta$-law; and we shall see later that,  in the typed setting,  we recover the stronger,  standard $\eta$-law.  Finally,  there is an axiom allowing the permutation of locations,  familiar from the Functional Machine Calculus \cite{Heijltjes:FMCI}. Observe that the equational theory respects the duality of the calculus,  as expected.

\begin{definition}[Equational theory]
We define the \emph{equational theory} $(=)$ over terms of the RMC to be the least congruence generated by the following axioms. 

\begin{itemize}
	\item 
Those of Kleene algebra: that is,  $(\term{M},  \term{;},  \term{*},  \term{+},  \term{0})$ forms an idempotent semiring\footnote{That is,  $(\term{M}, \term{;}, \term{*})$ is a monoid and $(\term{M},  \term{+},  \term{0})$ is an idempotent commutative monoid,  such that $\term{;}$ distributes over $\term{+}$ and $\term{0}$ is annihilative. }
together with:
\begin{align*}
	\term{M^*} = \term{* \,+\, M;M^*}
	&&
	 \term{M;N} \leq \term{N} \to \term{M^*;N} \leq \term{N} \\
	\term{M^*} = \term{*\,+\, M^*;M} 
	&&
	\term{M;N} \leq \term{M} \to \term{M;N^*} \leq \term{M} 
\end{align*}
where $\leq$ is the natural partial order: $\term{M} \leq \term{N}$ iff $\term{M+N} = \term{N}$. 
In the sequel,  we work modulo associativity wherever possible. 
\item Those axiomatizing the \emph{new} variable constructor: 
\[
\begin{array}{l@{~}l@{\quad}l}
    \term{Ex.M}        &=_\nu  \term{M}         & (\val x \notin \fv{\term M})
\\  \term{M ; (Ex.N)}  &=_\nu  \term{Ex.M;N}  & (\val x \notin \fv{\term M})
\\  \term{(Ex.M) ; N}  &=_\nu  \term{Ex.M;N}  & (\val x \notin \fv{\term N})
\\	\term{Ex.(M + N)}  &=_\nu  \term{Ex.M ~+~ Ex.N}
\\  \term{Ex.Ey.M}     &=_\nu  \term{Ey.Ex.M}
\end{array}
\]
\item Those axiomatizing $\beta$- and $\eta$-equivalence and unification:
\begin{align*}
    \term{Ex. N;[t]a;a<x>;M}  &=_\beta    \term{\{t/x\}(N;M)} & (\term x \not\in \val{t})
\\  \term{Ex. N;[x]a;a<t>;M}  &=_\beta    \term{\{t/x\}(N;M)} & (\term x \not\in \val{t})
\\  \term{Ex. a<x>;[x]a}      &\leq_\eta  \term{*}
\\  \term{a[x];<x>a}          &=_\upsilon \term{*}
\\  \term{[f(S)]a;a<f(T)>}    &=_\upsilon \term{[S]a;a<T>}
\\  \term{[f(S)]a;a<g(T)>}    &=_\upsilon \term{0}            & ( f \neq  g)
\\  \term{[f(S)]a;a<x>}       &=_\omega   \term{0}            & (\val  x \in S)
\\  \term{[x]a;a<f(S)>}       &=_\omega   \term{0}            & (\val  x \in S) 
\end{align*}
\item Those axiomatizing permutation of locations:
\begin{align*}
	\term{[s]a;b<t>} =_\pi \term{b<t>;[s]a} && ( a \neq  b)
\end{align*}
\end{itemize}
\end{definition}

The permutation law suffices to derive the fact that "push" and "pop" on distinct locations commute with each other. 

\begin{lemma}[Permutation]
\label{lem:permutation}
The following equations are derivable.
\begin{align*}
	\term{[s]a;[t]b} = \term{[t]b;[s]a} \quad
	 \term{a<s>;b<t>} = \term{b<t>;a<s>} \quad ( a \neq  b)
\end{align*}
\end{lemma}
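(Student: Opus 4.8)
The plan is to prove the push--push equation first and obtain the pop--pop equation for free by duality. Applying the involution $(-)^\dagger$ to $\term{[s]a;[t]b} = \term{[t]b;[s]a}$, and using that $\dagger$ reverses composition and swaps push with pop, gives exactly $\term{b<t>;a<s>} = \term{a<s>;b<t>}$. Since the equational theory respects duality (the $\pi$-, $\beta$-, $\upsilon$-, $\eta$-axioms are each sent to derivable equations, and $\dagger$ preserves $\term{+}$ hence the order), this second equation is derivable as soon as the first is. So it suffices to derive $\term{[s]a;[t]b} = \term{[t]b;[s]a}$ for $a \neq b$.

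For the push--push equation I would argue by antisymmetry of the natural order, establishing $\term{[s]a;[t]b} \leq \term{[t]b;[s]a}$ and its converse separately; the two are instances of one inequality under the symmetry $a \leftrightarrow b$, $s \leftrightarrow t$, so only one needs a genuine argument. To derive $\term{[s]a;[t]b} \leq \term{[t]b;[s]a}$ I would \emph{bend} a fresh wire into place. Using the inequality $\term{Ew.b<w>;[w]b} \leq_\eta \term{*}$ together with monotonicity of sequencing, we have $\term{[t]b;[s]a} = \term{[t]b;[s]a;*} \geq \term{[t]b;[s]a;Ew.(b<w>;[w]b)}$. The $\nu$-axioms extrude the binder to the front to give $\term{Ew.([t]b;[s]a;b<w>;[w]b)}$; the $\pi$-axiom commutes the push $\term{[s]a}$ past the now-adjacent pop $\term{b<w>}$, yielding $\term{Ew.([t]b;b<w>;[s]a;[w]b)}$; and the $\beta$-axiom then contracts the redex $\term{[t]b;b<w>}$, substituting $\{t/w\}$ and reassembling $\term{[s]a;[t]b}$ (here $\term{[s]a}$ merely rides along inside the continuation of the redex). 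The $\nu$-, $\pi$-, $\beta$-steps are all equalities, so the bent term equals $\term{[s]a;[t]b}$, while it lies $\leq \term{[t]b;[s]a}$ by the $\eta$-step; chaining gives $\term{[s]a;[t]b} \leq \term{[t]b;[s]a}$, and the symmetric instance gives the converse, whence equality.

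The point to get right — and the reason this is not a one-line rewrite — is that $\pi$ only ever lets a push cross a \emph{pop} on another location, never another push; consequently two pushes cannot be transposed by any chain of \emph{equalities} built from $\pi$ alone (every such attempt reintroduces a push-against-push obstruction). The asymmetry of the $\eta$-law, which is an inequality rather than an equation in the untyped setting, is exactly what breaks the deadlock: the $\eta$-bend manufactures a pop adjacent to the push that must be moved, at the cost of passing from an equation to an inequality, which antisymmetry then restores. I would therefore take care to use $\eta$ in the correct ($\leq$) direction at each bend, and to keep $w$ genuinely fresh — in particular $\val w \notin \val t$ for the $\beta$-step and $\val w \notin \fv{\term{[s]a}},\fv{\term{[t]b}}$ for the $\nu$-steps — so that the $\beta$-substitution reconstructs the intended term without capture.
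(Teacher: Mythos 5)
Your proposal is correct and takes essentially the same route as the paper, whose entire proof is the one-line remark that the lemma is ``derivable from $\pi$ together with $\eta$, $\nu$ and $\beta$'': your chain --- $\eta$-bend a fresh pop--push pair (in the $\leq$ direction), $\nu$-extrude the binder, $\pi$-commute the push past the manufactured pop, $\beta$-contract, then close with antisymmetry of the natural order and duality for the pop--pop case --- is exactly such a derivation with the details filled in. Your side remarks (freshness of $w$, that $\pi$ alone cannot transpose two pushes, and that the theory is closed under $(-)^\dagger$) are all accurate and well placed.
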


\begin{proof}
Derivable from $\pi$ together with $\eta, \nu$ and $\beta$.
\end{proof}

A $\beta$-redex of the $\kappa$-calculus encodes as below, left,  assuming a chosen,  unnamed location.  As such,  its $\beta$-law decomposes into the following two axioms (where $\val x$ is not free in $\term t$).
\[
	\term{[t]~;~Ex.<x>;M} ~=_\nu~ \term{Ex.[t];<x>;M} ~=_\beta~ \term{\{t/x\}M}
\]
The $\beta$-axiom of the RMC generalizes and symmetrizes the second equality above: the $\term{Ex}$ construct may additionally be separated from the sub-term $\term{[t];<x>}$ by some term $\term{N}$,  which is also open for substitution.

We now define a notion of observational equivalence of terms,  which we call \emph{machine equivalence}, via the big-step operational semantics.  
In the following,  we take $S_A$ to be open in general (i.e.\ it may contain free variables),  and possibly sharing variables with $\term M$.  Open memories and substitutions are considered equivalent up to a choice of globally fresh variables.  Practically,  this means machine runs and memories are equipped with a context; we omit this for brevity.

\begin{definition}[Machine equivalence]
 \emph{Machine equivalence} is the relation $(\mequiv)$ on terms defined by
\[
	\term{M} \mequiv \term{N} \quad \text{if} \quad  \forall S_A.~\evalset{S_A}{M} =  \evalset{S_A}{N}~.
\]
\end{definition}

\begin{proposition}
\label{lem:congruence}
Machine equivalence is a congruence. 
\end{proposition}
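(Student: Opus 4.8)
The plan is to leverage the compositional definition of the big-step evaluation function (Definition~\ref{def:big-step}), which Proposition~\ref{lem:big-step-well-defined} guarantees is total. Since $(\mequiv)$ is pointwise equality of $\evalset{S_A}{-}$ across all memories $S_A$, it is at once reflexive, symmetric and transitive, so the real work is to show it compatible with each term constructor carrying a term argument: sum $(\term+)$, composition $(\term;)$, star $(\term{-^*})$, and the binder $\term{Ex.-}$. The remaining constructors $\term*,\term0,\term{[t]a},\term{a<t>}$ have no term subterm, so compatibility is vacuous. Throughout I read $\evalset{S_A}{M}$ as the underlying set $\floor{\eval{S_A}M}$ of pairs $(T_A,\sigma)$, and emphasise that the substitution component cannot be discarded: it is propagated into the second operand in the clause for composition, so an equivalence blind to it would already fail to be a congruence for $(\term;)$.

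Two cases are immediate from compositionality. For sum, $\eval{S_A}{M+N}=\eval{S_A}M\munion\eval{S_A}N$ exhibits $\evalset{S_A}{M+N}$ as a function of $\evalset{S_A}M$ and $\evalset{S_A}N$, so $M\mequiv M'$ and $N\mequiv N'$ give $M+N\mequiv M'+N'$. For the \emph{first} operand of composition, $M\mequiv M'$ already yields $M;N\mequiv M';N$: in the comprehension defining $\eval{S_A}{M;N}$ the inner evaluation $\eval{T_A}{\sigma N}$ is determined entirely by the pair $(T_A,\sigma)$ emitted by $M$, so agreement of $\evalset{S_A}M$ and $\evalset{S_A}{M'}$ forces agreement of the composites. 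No reasoning about substitution is needed here.

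The remaining obligations all funnel into a single lemma: \emph{$(\mequiv)$ is closed under substitution}, i.e.\ $N\mequiv N'$ implies $\sigma N\mequiv\sigma N'$ for every $\sigma$. Granting it, the \emph{second} operand of composition is handled, since each inner term then satisfies $\eval{T_A}{\sigma N}=\eval{T_A}{\sigma N'}$, and composing with the first-operand case gives full congruence $M;N\mequiv M';N'$. Kleene star follows by induction on $n$ from $\eval{S_A}{M^*}=\bigsqcup_{n}\eval{S_A}{M^n}$ together with $\term{M^{n+1}}=\term{M;M^n}$, invoking this full congruence for $(\term;)$ to pass from $M^n\mequiv M'^n$ to $M^{n+1}\mequiv M'^{n+1}$. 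Finally $\term{Ex.-}$ reduces to the \emph{renaming} instance: its clause expresses $\evalset{S_A}{Ex.M}$ as a function of $\evalset{S_A}{\{y/x\}M}$ (for globally fresh $\val y$, then deleting $\val y$ from the returned substitution), so $\{y/x\}M\mequiv\{y/x\}M'$ --- a special case of substitution closure --- delivers $Ex.M\mequiv Ex.M'$.

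The crux, and the step I expect to be genuinely hard, is substitution closure. It cannot be obtained by writing $\eval{S_A}{\sigma N}$ as a function of $\eval{S_A}N$ alone: pre-applying $\sigma$ alters which unifications occur and hence the most general unifiers returned, as already seen in passing from $\term{<x>}$ to $\term{<c>}$. Instead I would prove a \emph{substitution lemma} that, via the run--evaluation bijection of Proposition~\ref{prop:small-big-agree}, describes the successful runs of $\sigma N$ in terms of those of $N$: appealing to the stability of most general unifiers under an ambient substitution (the ``lifting'' property of first-order unification), each run of $N$ transforms into a run of $\sigma N$ by composing $\sigma$ through the step-wise unifiers, and conversely. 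The decisive point is that this transformation is \emph{uniform in the term}, depending only on $\sigma$ and the input memory, not on the syntax of $N$; hence two terms with identical evaluation families, $N$ and $N'$, are carried to terms with identical evaluation families, $\sigma N$ and $\sigma N'$, which is exactly substitution closure. The remaining subtlety is bookkeeping: the fresh witnesses introduced by $\term{Ex}$ and by the variable cases of \emph{pop} must be kept disjoint from $\mathrm{dom}(\sigma)$ and from the ambient free variables, which is precisely what the convention of reading memories and substitutions ``up to a choice of globally fresh variables'' is there to guarantee.
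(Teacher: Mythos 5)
Your decomposition is sound, and for the constructor cases it coincides with the paper's own proof: the paper argues by induction on contexts (strengthened to multiset equality of $\eval{-}{-}$, though your set-level reading suffices, since $\floor{\eval{S_A}{M;N}}$ depends only on the underlying sets), computing directly with the big-step clauses for sum, composition, star (via $\term{M^n}\mequiv\term{M'^n}$ by induction on $n$, as you do) and $\term{Ex}$. Where you genuinely diverge is on what you correctly identify as the crux. The paper's written proof of this proposition never isolates substitution closure: in the composition case it rewrites $\eval{T_A}{\tau N}$ to $\eval{T_A}{\tau N'}$ ``by the inductive hypothesis'', silently invoking the hypothesis in substitution-closed form (and similarly for the renamed body in the $\term{Ex}$ case). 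Closure under substitution is then established as a separate result (Proposition~\ref{prop:closed-under-substitution} in the appendix), and by a quite different device from yours: rather than a syntactic run-transformation lemma, the paper internalizes the substitution into the calculus, wrapping $\term M$ as $\term{\exists !x.(a<?x>;M)}$ for a fresh location $a$, loading the values $T_a$ onto the $a$-stack, and letting the machine's own pop-unification perform $\term{\{T_a/!x\}}$, via the computation $\eval{S'_A \cdot T_a}{\exists !x.(a<?x>;M)} = \eval{S'_A}{\{T_a/!x\}M}$ up to trimming the output substitution. Your ordering --- prove substitution closure first, independently of congruence, then feed it into the composition, star and $\term{Ex}$ cases --- is arguably the cleaner logical architecture, since the paper's two statements lean on each other as written; what the paper's trick buys is that no unification theory beyond Proposition~\ref{prop:unification-embeds} is needed, and uniformity in the syntax of the term comes for free from the wrapping context.

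One step of your sketch is wrong as stated, though repairable. It is not true that each successful run of $\term N$ transforms into a run of $\term{\sigma N}$: take $\term N=\term{a<x>}$, input stack a constant $\val c$, and $\sigma=\{\val d/\val x\}$ with distinct constants $\val c\neq\val d$; then $\eval{\val c}{a<x>}=\msingle{(\e,\{\val c/\val x\})}$ succeeds, while $\term{\sigma N}=\term{a<d>}$ fails on $\val c$, so $\eval{\val c}{a<d>}=\mempty$. Only the converse direction holds cleanly by the lifting property --- every run of $\term{\sigma N}$ factors through a run of $\term N$ with a more general output substitution --- and in the forward direction you must restrict to those runs of $\term N$ whose emitted substitution $\tau$ is compatible with $\sigma$, then verify the multiset correspondence is exact. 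Your uniformity claim --- that which runs survive and how their outputs change depends only on $\sigma$, the input memory, and the emitted pairs $(T_A,\tau)$, never on the syntax of $\term N$ --- is the correct key point (it is precisely what the fresh-location argument delivers), but the compatibility filter must be built into the statement of your substitution lemma before it is provable.
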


We can define a similar \emph{machine refinement} $(\lesssim)$ relation on terms as follows. 
\[
	\term{M} \lesssim \term{N} \quad \text{if} \quad \forall S_A.~\evalset{S_A}{M} \subseteq \evalset{S_A}{N}~
\]
It is easy to see that machine refinement satisfies the property that $\term{M} \lesssim \term{N}$ if and only if $\term{M + N} \mequiv \term{N}$, and so $(\lesssim)$ stands in the same relation to $(\mequiv)$ as $(\leq)$ does to $(=)$ in the algebraic theory.  In particular,  it is a partial order closed under all contexts.

\begin{theorem}[Soundness of equational theory]
\label{thm:equations-sound}
The equational theory is sound with respect to machine equivalence: 
\[
	\term{M} = \term{N} ~ \Rightarrow ~ \term{M} \sim \term{N} ~.
\]
\end{theorem}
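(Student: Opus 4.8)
The plan is to prove soundness by structural induction on the derivation of $\term{M}=\term{N}$. Since $(=)$ is defined as the least congruence generated by the axioms, it suffices to show two things: first, that machine equivalence $(\mequiv)$ is itself a congruence closed under reflexivity, symmetry, and transitivity (so that it is a candidate to contain the generated congruence); and second, that each generating axiom is sound, i.e. its two sides are machine-equivalent. The congruence property is exactly Proposition~\ref{lem:congruence}, which I may assume, so the real content is the case analysis over the axiom schemes. Because the equational theory respects duality and machine equivalence is symmetric in input/output only up to the operational reading, I would nonetheless verify each axiom directly rather than trying to halve the work via duality, since soundness of duality itself is not among the assumed results here.

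First I would dispatch the Kleene-algebra axioms. Here the strategy is to unfold the big-step semantics (Definition~\ref{def:big-step}) on both sides and check equality of the resulting multisets for every memory $S_A$. Associativity, commutativity, and idempotence of $\term{+}$ follow from the corresponding properties of multiset union $\munion$ together with the fact that $\evalset{-}{-}$ projects to \emph{sets}; annihilation and the unit laws for $\term 0$ and $\term *$ are immediate from the $\mempty$ and $\msingle{(S_A,\e)}$ clauses. For the star axioms $\term{M^*}=\term{*+M;M^*}$ and the induction rule $\term{M;N}\lesssim\term{N}\to\term{M^*;N}\lesssim\term{N}$, I would use the defining clause $\evalset{S_A}{M^*}=\bigsqcup_{n}\evalset{S_A}{M^n}$ together with the observation that $\bigsqcup_n$ is a least-fixed-point-style union, so the fixpoint and induction laws reduce to routine reasoning about indexed unions of sets. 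The $\nu$-axioms for the \emph{new} construct are handled by inspecting the $\term{Ex.M}$ clause: the freshness side-conditions $(\val x\notin\fv{\term M})$ guarantee that the chosen globally-fresh $\val y$ does not interfere, and that restriction $\restrict\sigma\val y$ commutes appropriately with composition and with $\term{+}$.

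The main obstacle will be the $\beta$- and $\upsilon/\omega$-axioms, which internalize unification. For these I would argue via Proposition~\ref{prop:small-big-agree} (agreement of small- and big-step semantics), reducing each equation to a statement about successful machine runs and their associated substitutions. The subtle point is that the substitution returned by a run is applied to the continuation stack, so I must check that the most-general-unifier behaviour produced by the machine transitions for $\term{[t]a;a<x>}$, the function-symbol decomposition $\term{[f(S)]a;a<f(T)>}$, the clash rule $\term{[f(S)]a;a<g(T)>}=_\upsilon\term 0$, and the occurs-check rules $\term{[f(S)]a;a<x>}=_\omega\term 0$ all coincide with the substitution that the right-hand side would produce or with outright failure. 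The genuinely delicate part is the $\beta$-axiom in full generality, where an interposed open term $\term N$ lies between the $\term{Ex}$ binder and the redex: I must verify that performing the substitution $\term{\{t/x\}}$ eagerly (right-hand side) yields the same multiset of $(T_A,\sigma)$ pairs as letting the machine discover the unifier during evaluation (left-hand side), which requires a careful commutation lemma showing that substitution into $\term N$ and into the continuation can be reordered without changing the collected outcomes. I expect this interaction between binding, the freshness discipline, and the propagation of substitutions through the continuation stack to be where the proof demands the most care; the $\eta$-inequation and the $\pi$ location-permutation axiom, by contrast, should follow from a short direct analysis of the relevant transition rules.
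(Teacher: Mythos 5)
Your proposal is sound and shares the paper's skeleton---congruence (Proposition~\ref{lem:congruence}) plus axiom-by-axiom verification against the big-step semantics, with idempotence checked only at the level of underlying sets---but it diverges from the paper in two respects worth recording. First, for the $\beta$-, $\upsilon$- and $\omega$-axioms you detour through the small-step machine via Proposition~\ref{prop:small-big-agree}; the paper never leaves the big-step world. It instead proves a generalized, big-step form of Proposition~\ref{prop:unification-embeds}: evaluating a block of pops $\term{a<?t>}$ against a stack $\val{!s}$ yields exactly the singleton $(\sigma(S_A\cdot S_a),\sigma)$ with $\sigma$ the most general unifier of $\val{!s}\doteq\val{!t}$, or the empty multiset if none exists. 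This one characterization discharges the pop, clash, occurs-check, $\eta$ and $\pi$ cases by direct multiset calculation, avoiding the bookkeeping about run trees and substituted continuation stacks that your small-step route would entail. Second, your anticipated ``commutation lemma'' is indeed the crux, and the paper's version is sharper than the one you sketch: rather than reordering substitutions into $\term{N}$ and the continuation, it commutes the redex itself past an arbitrary term, $\term{M;[t]a;a<x>}\mequiv\term{[t]a;a<x>;M}$ for $\val x\notin\val t$ (with $\val x$ possibly free in $\term M$), proved by induction on $\term M$, using Lemma~\ref{lem:permutation} for actions at other locations; the general axiom $\term{Ex.(N;[t]a;a<x>;M)}\mequiv\term{\{t/x\}(N;M)}$ then reduces to the prefix case $\term{Ex.([t]a;a<x>;M)}\mequiv\term{\{t/x\}M}$, which is a short direct computation. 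Finally, one ingredient you leave implicit: since composition evaluates $\term{\sigma N}$ for each substitution $\sigma$ emitted by $\term M$, the per-axiom statements must be stable under substitution instances, and the paper isolates this as a separate proposition---evaluation-set equality over all \emph{open} memories implies equivalence of all substitution instances, proved by wrapping a term as $\term{\exists !x.(a<?x>;M)}$ at a fresh location and instantiating that stack with the values to be substituted. Because you assume Proposition~\ref{lem:congruence} as a black box you inherit this and your plan has no logical gap, but be aware that quantifying over open memories, not merely closed ones, is what makes the axiom checks compose with the congruence.
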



\subsection{Reduction and Confluence}

It is possible to recover a reduction relation $(\rw)$ from the equational theory by orientation of the equations. 
The $\beta$ and $\upsilon$ equations have an obvious orientation and the $\nu$ equations are oriented to bring $\term{Ex}$ to the front of a term,  or remove it when it does not bind any variables. We do not consider the $\eta$ inequation in the untyped case. 
To maintain confluence in the presence of the Kleene star,  we find it necessary to break the symmetry of the equational theory: we can include either the left-handed \emph{unfolding} reduction $\term{M^*} \rw \term{* ~+~ M;M^*}$
or its dual $\term{M^*} \rw \term{* ~+~ M^*;M}$,  but not both.  We opt,  arbitrarily,  for the former.  We do not find it sensible to consider the conditional equation $\term{M;N} \leq \term{N} \to \term{M^*;N} \leq \term{N}$ or its dual as part of reduction. 

We consider the main interest of confluence for reduction to be in dealing with the directed $\beta$, $\upsilon$ and $\nu$ equations.  We thus make several further adaptations to the KA fragment of reduction which simplify the confluence proof.  We choose to omit the {idempotence} equation, in keeping with the more general multiset operational semantics; we omit  consideration of associativity and commutativity of $(\term{+})$,  although the method of proof should extend straightforwardly to these cases; and we assume right-associativity of sequencing,  so terms are of the form $\term{M;(N;\ldots;(P;*)\ldots)}$, which also has the pleasant consequence of eliminating the need to consider $\term{(Ex.M ); N} \rw \term{Ex.(M;N)}$ where $\val x \not\in \fv{\term N}$ and $\term{M;*} \rw \term{M}$. 
For this reason, $\beta$-reduction will take place in (also right-associative) \emph{linear contexts}, defined as follows
\[
    \term{L\{~\}} \Coloneqq \term{\{~\}} ~\mid~ \term{M;L\{~\}} ~\mid~ \term{L\{~\};M}
\]
and $\upsilon$-reductions will contain an arbitrary post-fixed term.  We also enforce a prioritization of left- over right-distributivity, although we  expect this can be lifted.  Finally, the permutation equation is oriented left--to--right.  The reduction relation is overall clearly \emph{sound} (but not {complete}) with respect to the equational theory,  and thus the operational semantics.

\begin{definition}[Reduction]
\label{def:reduction} 
The reduction relation $(\rw)$ on terms is given by the rewrite rules in Figure~\ref{fig:reduction}, closed under all contexts.
\end{definition}

\begin{figure}
\[
\label{eq:a}
\begin{array}{@{\quad}r@{~~}l@{~\,}l@{~}l}
\multicolumn{4}{@{}l@{}}{\bullet~~\text{Kleene algebra-reduction $(\rw_\KA)$:}}
\\[5pt]
    \term{M^*}      & \rw_\KA & \term{*~+~ M;M^*} 
\\  \term{(M+N);P}  & \rw_\KA & \term{M;P ~+~ N;P} & (\term P\not = \term{M+N})    
\\  \term{P;(M+N)}  & \rw_\KA & \term{P;M ~+~ P;N}
\\  \term{M+0}      & \rw_\KA & \term{M}
\\  \term{0+M}      & \rw_\KA & \term{M}
\\  \term{*;M}      & \rw_\KA & \term{M}
\\  \term{M;0}      & \rw_\KA & \term{0}           
\\  \term{0;M}      & \rw_\KA & \term{0}
\\[5pt]
\multicolumn{4}{@{}l@{}}{\bullet~~\text{New variable-reduction $(\rw_\nu)$:}}
\\[5pt]
    \term{Ex.M}     & \rw_\nu & \term{M}                     & (\val x\notin \term M)
\\  \term{M;Ex.N}   & \rw_\nu & \term{Ex.M;N}                & (\val x\notin \term M)
\\  \term{Ex.(M+N)} & \rw_\nu & \term{Ex.M ~+~ Ex.N}
\\[5pt]
\multicolumn{4}{@{}l@{}}{\bullet~~\beta\text{- and unification-reduction $(\rw_\beta,\rw_\upsilon)$:}}
\\[5pt]
	\term{Ex. L\{[t]a;a<x>;M\}}  & \rw_\beta & \term{Ex. \{t/x\} L\{M\}} & (\val x\notin \term t)
\\  \term{Ex. L\{[x]a;a<t>;M\}}  & \rw_\beta & \term{Ex. \{t/x\} L\{M\}} & (\val x\notin \term t)
\\  \term{[x]a;a<x>;M}           & \rw_\upsilon & \term{M}
\\  \term{[f(T)]a;a<f(S)>;M}     & \rw_\upsilon & \term{[T]a;a<S>;M}
\\  \term{[f(T)]a;a<g(S)>;M}     & \rw_\upsilon & \term{0}               & ( f\not =  g)
\\  \term{[f(S)]a;a<x>;M}        & \rw_\upsilon & \term{0}               & (\val x \in S)
\\  \term{[x]a;a<f(S)>;M}        & \rw_\upsilon & \term{0}               & (\val x \in S)
\\[5pt]
\multicolumn{4}{@{}l@{}}{\bullet~~\text{Permutation-reduction $(\rw_\pi)$:}}
\\[5pt]
	\term{[t]a;b<s>;M}            & \rw_\pi & \term{b<s>;[t]a;M}           & (a\neq b)
\end{array}
\]
\caption{Reduction rules of the RMC}
\label{fig:reduction}
\Description[Reduction rules of the RMC]{}
\end{figure}

Our proof of confluence makes use of techniques from the field of first-order term rewriting by treating the atomic terms $\term{[t]a}$, $\term{a<t>}$, $\term{\star}$ and $\term{0}$ as constants and binders $\term{Ex}$ as distinct function symbols for each $\val x$.  Local confluence is shown by analysis of critical pairs. 

\begin{lemma}
\label{localconfluence} 
The reduction relation $(\rw)$ is locally confluent.
\end{lemma}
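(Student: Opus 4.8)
The plan is to invoke the Critical Pair Lemma: a term rewriting system is locally confluent precisely when all of its critical pairs are joinable. Following the encoding indicated before the statement, I regard a term as a first-order tree over the binary symbols $\term{;}$ and $\term{+}$, the unary $\term{(-)^*}$, a unary symbol $\term{Ex}$ for each variable $\val x$, the constants $\term 0$ and $\term *$, and a family of constants $\term{[t]a}$ and $\term{a<t>}$ indexed by values and locations. The rules of Figure~\ref{fig:reduction} then become (schematic) rewrite rules over this signature, with two features that take the system outside pure first-order rewriting and that must be handled before critical-pair analysis applies: the $\beta$-rules carry a meta-level substitution $\term{\{t/x\}}$ in their right-hand side, and they fire underneath an arbitrary linear context $\term{L\{\,\}}$ (a redex "at a distance").

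\textbf{Discharging the non-first-order features.} First I would prove a substitution-simulation lemma: whenever $\term M \rw \term N$ by any rule (and $\val x$ is not captured), $\term{\{t/x\}M}$ reduces to $\term{\{t/x\}N}$, possibly in several steps and possibly via a different rule than the one used on $\term M$ --- because substituting $\val t$ for $\val x$ can instantiate a push or pop variable and so switch which of the $\beta$/$\upsilon$/$\omega$ rules applies, while never altering location labels or outermost function symbols. This lemma upgrades the usual "disjoint redexes commute" to the present setting, where a $\beta$-step applies a substitution to the positionally-disjoint regions of $\term{L\{\,\}}$ and the tail $\term M$; since the consumed pair $\term{[t]a;a<x>}$ is adjacent, any redex strictly inside $\term L$ or the tail is disjoint from it and the two contractions commute modulo the lemma. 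A useful observation here is that the $\beta$-rules' side conditions are exactly complementary to those of the $\upsilon$/$\omega$-rules on a shared adjacent pair (on $\term{[f(S)]a;a<x>}$ the condition $\val x\notin S$ selects $\beta$ and $\val x\in S$ selects $\upsilon$; and $\term{[x]a;a<x>}$ is only a $\upsilon$-redex), so there are \emph{no} genuine syntactic $\beta$--$\upsilon$ overlaps at all; their entire interaction is mediated by substitution and is covered by the lemma.

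\textbf{The routine families.} The remaining genuine critical pairs I would enumerate family by family, exhibiting a joining diagram for each. The $\KA$--$\KA$ overlaps are the classical ones --- nested distributivity, distributivity against star-unfolding, and the $\term 0$/$\term *$ absorptions --- and close using precisely the conventions the preamble sets up: right-associativity of $\term{;}$ and the chosen prioritisation of left- over right-distributivity (which is what prevents the $\term{(M+N);(P+Q)}$ divergence). The overlaps of $\term{Ex}$ with $\KA$ and with itself are routine, as $\nu$ commutes with distribution and binder-movement. The $\pi$-rule has no self-overlap and does not overlap the equal-location $\upsilon$-pattern, so its only nontrivial critical pairs are with the $\term 0$/$\term *$ absorptions, which close in one or two steps. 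Finally, $\upsilon$ on its own is deterministic: each adjacent push/pop pair matches a unique $\upsilon$-rule, so it contributes no critical pairs in isolation.

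\textbf{Main obstacle.} The crux is the $\beta$--$\beta$ overlaps in which a single $\term{Ex}$ binds a variable that is popped more than once, the representative case being $\term{Ex.[t]a;a<x>;[s]b;b<x>;M}$ (with $\val x\notin\val t$ and $\val x\notin\val s$): contracting either push/pop pair substitutes into the pattern of the other, and the two reducts must be rejoined by further $\beta$/$\upsilon$ contractions (and, where the surviving pairs end up out of order, a $\pi$-step to realign them). I expect the heart of the argument to be that joinability here is exactly the confluence and uniqueness-up-to-renaming of most general unifiers, which the $\beta$/$\upsilon$ rules implement step by step in the style of Martelli--Montanari. The delicate point --- and the place where the precise side conditions of Figure~\ref{fig:reduction} do real work --- is that the substitution introduced by a $\beta$-step can turn a pending unification into a clash or trigger an occurs-check on the other branch; joinability then relies on the $\upsilon$/$\omega$-rules sending \emph{exactly} the non-unifiable and occurs-failing configurations to $\term 0$, so that the two reducts uniformly collapse to $\term 0$ rather than diverging. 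Combining the critical-pair diagrams of the routine families with the substitution-simulation lemma for the $\beta$-propagation cases and the unification-confluence argument for the $\beta$--$\beta$ pairs then yields local confluence.
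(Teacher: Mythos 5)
Your proposal follows essentially the same route as the paper's proof: the same first-order TRS encoding (binders $\term{Ex}$ as per-variable function symbols, pushes/pops/$\term{*}$/$\term{0}$ as constants, Barendregt's convention and right-associated sequencing), the same critical-pair analysis, the same observation that $\beta$--$\upsilon$ interaction is mediated by substitution rather than syntactic overlap (the paper phrases this as substitution commuting with all rules except the two $\beta$-rules), and the same resolution of the $\beta$--$\beta$ crux, where the paper joins the two reducts at $\term{Ex.\sigma\{t/x\}L\{M,N\}}$ when an mgu $\sigma$ exists and collapses both to $\term{0}$ otherwise, citing its unification proposition exactly as you anticipate. The one cosmetic slip is that the permutation rule also overlaps the binder-lifting and right-distributivity rules (not only the $\term{0}$/$\term{*}$ absorptions), as do the $\upsilon$-rules via their post-fixed term, but these are routine one- or two-step joins that your family-by-family enumeration would surface.
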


We now observe the star-free fragment is strongly normalising (by the measure given in the proof of Proposition \ref{lem:big-step-well-defined}) and hence confluent by Newman's Lemma \cite{newman_lemma_1942}.  Confluence of the full reduction relation then follows from an application of the Hindley-Rosen Lemma \cite{hindley_lemma_1964}.

\begin{theorem}[Confluence]
\label{theorem:confluence}
Reduction $(\rw)$ is confluent. 
\end{theorem}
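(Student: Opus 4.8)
The plan is to split the reduction relation as a union $(\rw) \;=\; (\rw_{\mathrm u}) \cup (\rw_{\mathrm{sf}})$, where $(\rw_{\mathrm u})$ is the star-unfolding rule $\term{M^*} \rw \term{* + M;M^*}$ closed under all contexts, and $(\rw_{\mathrm{sf}})$ collects every other rule of Figure~\ref{fig:reduction} (the \emph{star-free} rules), again closed under all contexts. I would then prove each of $(\rw_{\mathrm u})$ and $(\rw_{\mathrm{sf}})$ confluent and show that they commute, so that the Hindley--Rosen Lemma~\cite{hindley_lemma_1964} delivers confluence of their union $(\rw)$.

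For $(\rw_{\mathrm{sf}})$ I would argue as the surrounding text indicates: the only non-terminating rule of $(\rw)$ is star unfolding, so the termination measure built in the proof of Proposition~\ref{lem:big-step-well-defined} shows that $(\rw_{\mathrm{sf}})$ is strongly normalising on all terms. Local confluence is precisely the portion of the critical-pair analysis of Lemma~\ref{localconfluence} that does not involve the unfolding rule, each such critical pair being joinable using $(\rw_{\mathrm{sf}})$-steps alone. Newman's Lemma~\cite{newman_lemma_1942} then gives confluence of $(\rw_{\mathrm{sf}})$. Confluence of $(\rw_{\mathrm u})$ is immediate: its left-hand side $\term{M^*}$ fixes only the root constructor, so two unfolding redexes are either disjoint or nested, never properly overlapping; the rule is left-linear with no critical pairs, hence orthogonal and confluent.

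The crux, and the step I expect to be the main obstacle, is the commutation of $(\rw_{\mathrm u})$ with $(\rw_{\mathrm{sf}})$. The naive strong-commutation criterion fails in \emph{both} orientations, because each relation can duplicate redexes of the other: unfolding $\term{M^*} \rw_{\mathrm u} \term{* + M;M^*}$ copies the body $\val M$ (once directly and once inside the residual $\term{M^*}$), so a star-free redex inside $\val M$ must afterwards be contracted in two places; dually, the distributivity rules of $(\rw_{\mathrm{sf}})$, such as $\term{(M+N);P} \rw_{\mathrm{sf}} \term{M;P + N;P}$, copy $\val P$ and hence any star it contains, which must then be unfolded twice. I would therefore close the commutation square using \emph{parallel} (complete-development) versions $\Rrightarrow_{\mathrm u}$ and $\Rrightarrow_{\mathrm{sf}}$ of the two relations --- well defined since $(\rw_{\mathrm u})$ is orthogonal --- which contract all residual copies simultaneously. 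A case analysis on the relative positions of the two contracted sets (disjoint, nested, or one lying inside a duplicated argument of the other) then establishes the commuting diamond ${}_{\mathrm u}\!\Lleftarrow \cdot \Rrightarrow_{\mathrm{sf}} \;\subseteq\; \Rrightarrow_{\mathrm{sf}} \cdot {}_{\mathrm u}\!\Lleftarrow$; since $(\rw_i) \subseteq (\Rrightarrow_i) \subseteq (\rw_i^{*})$ for each $i$, commutation of the one-step relations follows, completing the Hindley--Rosen argument.
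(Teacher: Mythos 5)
Your proposal is correct and follows essentially the same route as the paper: it likewise splits $(\rw)$ into the star-free fragment and the unfolding rule, obtains confluence of the former via Newman's Lemma (termination by the measure of Proposition~\ref{lem:big-step-well-defined}, local confluence from the critical-pair analysis of Lemma~\ref{localconfluence}, none of whose joins use unfolding), notes the unfolding rule is trivially confluent, and resolves the mutual-duplication problem you identify exactly as you suggest --- by passing to the induced parallel reduction relations and proving they commute (Lemma~\ref{rwcommute}, via tracking marked residuals of star redexes) --- before concluding with the Hindley--Rosen Lemma. The only inessential difference is that the paper establishes strong commutation of the parallel relations, closing squares with possibly-many steps on one side, rather than your single-step parallel diamond.
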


In the (strongly normalising) star-free fragment,  we thus recover the existence of \emph{normal forms}.  
Let $(\rw^*)$ be the reflexive, transitive closure of $(\rw)$. 
We state the result for the case of one location,  but in general \emph{push} and \emph{pop} actions on distinct locations can be arranged according to a chosen ordering.  

\begin{corollary}[Normal forms]
For any closed,  star-free RMC-term $\term{M}$,  we have that $\term{M} \rw^* \term{N}$,  where
\[
	\term{N} \equiv \term{N_1 + \ldots + N_n}~, 
\]
where each $\term{N_i}$ is sum-free and of the form
\[
	\term{N_i} \equiv \term{E!x.<s_n>;\ldots;<s_1>;[t_1];\ldots;[t_m]}~, 
\]
where $\val{!x}$ are the variables of $\val{s_j}$ and $\val{t_k}$.
\end{corollary}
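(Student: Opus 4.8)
The plan is to proceed in two stages: first invoke termination to obtain an arbitrary normal form, then characterise the shape of irreducible terms by a rule-by-rule reading of Figure~\ref{fig:reduction}. Since the star-free fragment is strongly normalising and, by Lemma~\ref{localconfluence}, locally confluent, Newman's Lemma gives $\term M \rw^* \term N$ with $\term N$ a (unique) normal form. It then remains only to show that every closed, star-free, \emph{irreducible} term $\term N$ has the stated shape, which I would establish by reading off the configurations that a normal form must forbid.

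The Kleene-algebra rules fix the top-level structure. Irreducibility under the two distributivity rules keeps every $\term +$ above all sequencing, so (working modulo associativity and commutativity of $\term +$, and using the chosen left-over-right distributivity priority) $\term N \equiv \term{N_1 + \ldots + N_n}$ with each $\term{N_i}$ sum-free, while irreducibility under the $\term 0$- and $\term *$-absorption rules makes each $\term{N_i}$ free of $\term 0$ and of internal or leading $\term *$ (the degenerate case $\term N \equiv \term 0$ being the empty sum, $n=0$). The $\nu$-rules then fix the binders: $\term{Ex.M} \rw_\nu \term M$ removes vacuous binders, and $\term{M;Ex.N} \rw_\nu \term{Ex.(M;N)}$---with the global freshness convention allowing the side condition $\val x \notin \term M$ to be met by $\alpha$-renaming---drives every $\term{Ex}$ to the front of its summand. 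As $\term N$ is closed, every variable of a summand is bound, so the front binders $\val{!x}$ are exactly the variables of the body, giving each summand the form $\term{E!x.(\ldots)}$ with the body a pure sequence of pushes and pops over the single location.

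The crux is that the $\beta$- and $\upsilon$-rules force the body to be all pops followed by all pushes. I would show that any adjacent pair $\term{[t];<s>}$ appearing in the body is a redex, by case analysis on whether $\val t$ and $\val s$ are variables or function-headed: if one side is a front-bound variable $\val x$ absent from the other value it is a $\beta$-redex; if both are the same variable, share a head, clash in heads, or violate the occurs check, it is a $\upsilon$-redex. Closedness guarantees both sides are built only from front-bound variables and function symbols, so no case escapes, and the front binder together with the right-associative, sum- and star-free body supplies exactly the binder and linear context that the $\beta$-rule requires. Hence a normal summand contains no push immediately followed by a pop; and a sequence of pushes and pops with no such adjacency is necessarily of the form $\term{<s_n>;\ldots;<s_1>;[t_1];\ldots;[t_m]}$, which, prefixed by the binders, is precisely the claimed $\term{N_i}$.

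I expect the main obstacle to be the binder-to-front step: making precise the interplay between the side condition $\val x \notin \term M$ of the $\nu$-rule, the global-freshness convention, and $\alpha$-equivalence, so that one may legitimately conclude that an irreducible summand carries all its $\term{Ex}$ at the front, binding exactly the variables of the body. By comparison, the exhaustive check that every adjacent push/pop pair is a redex is routine once closedness guarantees that every variable is front-bound.
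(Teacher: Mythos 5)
Your proposal is correct and takes essentially the same route as the paper's (implicit) argument: strong normalisation of the star-free fragment plus Newman's Lemma to obtain a normal form, followed by a rule-by-rule analysis showing that every adjacent push/pop pair in a closed, irreducible, sum-free summand is a $\beta$- or $\upsilon$-redex, forcing the pops-then-pushes shape. The one point worth making explicit---and which the paper resolves by taking normal forms modulo permutation of $\term{Ex}$ (and AC of $\term{+}$)---is that the $\beta$-rule requires the binder for the relevant variable to be \emph{innermost}, since an intervening binder is not admitted by the linear context $\term{L\{~\}}$; the ``binder-to-front'' obstacle you flag is exactly this, and it is discharged by permuting the front binders rather than by $\alpha$-renaming alone.
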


Note,  normal forms are taken modulo associativity and commutativity of $(\term{+})$ and permutation of $\term{Ex}$. We close this section with a theorem about the equational theory (and thus about operational semantics) proved easily using the existence of normal forms. 

\begin{theorem}[Equations for duality] 
\label{prop:converse}
We have the following:
\begin{itemize}
\item for any closed,  star-free term $\term{M}$,  we have $\term{M} \leq \term{M;M}^\dagger\term{;M} $;
\item if $\term M$ is also sum-free then $\term{M} = \term{M;M}^\dagger\term{;M}$.
\end{itemize}
\end{theorem}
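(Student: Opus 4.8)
The plan is to derive both statements from the Normal Forms corollary, proving the sum-free equality directly by reduction and then obtaining the general inequality by routine Kleene-algebra reasoning; the second bullet is the real content, and the first follows from it. Throughout I work with the single-location normal form, the multi-location case being analogous via the permutation reduction $\rw_\pi$.

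First I would show how the inequality follows from the sum-free equality. Put $\term M$ in normal form $\term{N} \equiv \term{N_1 + \ldots + N_n}$ via the Normal Forms corollary; since reduction is sound for the equational theory and $(=)$ is a congruence respecting duality, and since $\term{;}$ distributes over $\term{+}$ (and $(\term{M+N})^\dagger = \term{M}^\dagger + \term{N}^\dagger$), we get $\term{M};\term{M}^\dagger;\term{M} = \sum_{i,j,k}\term{N_i};\term{N_j}^\dagger;\term{N_k}$. This sum contains the diagonal $\sum_i \term{N_i};\term{N_i}^\dagger;\term{N_i}$, which by the sum-free case equals $\sum_i \term{N_i} = \term M$; as any subsum is $\leq$ the whole sum (by idempotence of $\term+$, available in the equational theory even though reduction omits it), we conclude $\term M \leq \term{M};\term{M}^\dagger;\term{M}$. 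The case $\term{M}=\term 0$ (empty sum) is immediate, as $\term 0^\dagger = \term 0$.

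For the sum-free equality, observe that a sum-free, star-free, closed term reduces only to sum-free terms, since no rule creates a $\term+$ without an existing sum or star; it therefore normalises either to $\term 0$ (where the claim is trivial) or to a single $\term{N} \equiv \term{E!x.<s_n>;\ldots;<s_1>;[t_1];\ldots;[t_m]}$ with $\val{!x}$ the variables of the $\val{s_j}$ and $\val{t_k}$. By Confluence (Theorem~\ref{theorem:confluence}) and strong normalisation of the star-free fragment, it suffices to reduce $\term{N};\term{N}^\dagger;\term{N}$ and check that its normal form is $\term N$ modulo the congruences under which normal forms are taken. Here $\term{N}^\dagger \equiv \term{E!x.<t_m>;\ldots;<t_1>;[s_1];\ldots;[s_n]}$; renaming the bound variables of the three copies apart to fresh $\val{!x},\val{!y},\val{!z}$ and floating the binders to the front with the $\nu$-rules, the interface between the first and second copies presents the pushes of $\val{!t(!x)}$ immediately followed by the pops of $\val{!t(!y)}$, a block of redexes unifying each pattern with a fresh copy of itself. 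Such redexes always succeed through $\rw_\upsilon$ and $\rw_\beta$ steps — there is no symbol clash and no occurs-check, hence no collapse to $\term 0$ — and the resulting most general unifier identifies $\val{!y}$ with $\val{!x}$ on the variables of $\val{!t}$; the second block likewise identifies $\val{!z}$ with $\val{!y}$ on the variables of $\val{!s}$.

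The crux, and the step I expect to be the main obstacle, is verifying that after these reductions the surviving term is $\term N$ up to $\alpha$-renaming, i.e. that the two unification blocks chain correctly rather than over-identifying or failing. For a variable shared between $\val{!s}$ and $\val{!t}$ one chases the identifications around the loop $\val z = \val y = \val x$ (the first link because the variable lies in $\val{!s}$, the second because it lies in $\val{!t}$); this consistency is exactly the syntactic incarnation of the difunctionality of substitution-image relations, and is precisely what upgrades the inequality to an equality. A variable occurring only in $\val{!s}$ survives in the pops as an $\val{!x}$-variable; one occurring only in $\val{!t}$ is untouched by the second block and survives in the pushes as a fresh existentially bound variable, playing the role of the corresponding $\val{!x}$ in $\term N$. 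The delicate part is thus the freshness bookkeeping through the $\nu$-floating and renamings, together with the case analysis ensuring that shared input/output variables are consistently identified while input-only and output-only variables reappear as existentials in the correct positions. Granting this, the reduct is $\alpha$-equal to $\term N$, so $\term{N};\term{N}^\dagger;\term{N} = \term N$, completing the sum-free case and hence the theorem.
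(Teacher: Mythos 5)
Your proposal is correct and follows essentially the same route as the paper's proof: establish the sum-free equality $\term{M} = \term{M;M}^\dagger\term{;M}$ by putting $\term M$ in normal form, renaming the three copies apart, and discharging the two interface blocks by unification steps that always succeed (no symbol clash, no occurs-check, with shared variables consistently re-identified and one-sided variables vanishing or reappearing as fresh existentials), then derive the inequality by distributing sums, extracting the diagonal $\sum_i \term{N_i;N_i}^\dagger\term{;N_i}$, and using idempotence of $\term{+}$. Your variable-occurrence case analysis (shared, input-only, output-only) is in fact slightly more explicit than the paper's, which compresses it into the renamings $\term{\{!x/!y\}}$ and $\term{\{!x/!z\}}$, and your explicit handling of the $\term 0$ normal form is a minor point the paper leaves implicit.
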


The first statement is familiar as the unique law added to KA in Kleene algebra \emph{with converse}~\cite{Pous:KA-with-converse};  the second statement is familiar as the weakening of invertibility in the definition of \emph{inverse monoids} and \emph{inverse cateogries}~\cite{Cockett:restriction-categories}.


\section{The Simply-Typed RMC}
\label{sec:simple-types}

In this section,  we introduce a simple typing discipline giving the input and output arity of terms, making their stack use explicit. This results in a natural language for string diagrams where types represent the input and output wires, which we explicate in the subsequent section on categorical semantics.  

In a first-order setting, the power of types is naturally limited. Types do not confer termination: the fragment without Kleene star is inherently terminating, while the full RMC is non-terminating also when typed.  Nor do types prevent failure, as in the term $\term 0$ or failure of unification, and this is semantically correct: failure represents the empty relation, and so should not be excluded from the typed calculus. 

What types do enforce is a notion of \emph{progress}: they guarantee that there are sufficient elements on the stack for the machine to continue. In particular, they prevent terms with Kleene star from consuming (or producing) an arbitrary number of stack elements. These constraints are essential for correct relational composition, and hence in giving a relational semantics to terms, where types represent the source and target of a relation. Specifically, types allow the sound strengthening the $\eta$-axiom to an equivalence, which is necessary for a monoidal categorical semantics.

Formally, the type of a term $\term M$ will give the input and output arity \emph{for each location}. With a single, unnamed location, we will have $\term{M:m>n}$ for natural numbers $m$ and $n$; for a set of locations $A$, we will have an input arity $m_a$ and output arity $n_a$ for each $a\in A$.

\begin{definition}[Simple types]
A \emph{memory type} $\type{n_A}=\{n_a\mid a\in A\}$ is a family of natural numbers in a (finite) set of locations $A$. A \emph{type} $\type{m_A > n_A}$ consists of an \emph{input} and an \emph{output} memory type.
\end{definition}

We may consider a memory type $\type{n_A}$ as a function from locations to natural numbers, where $\type{n_A}(a)=n_a$ for $a\in A$ and $\type{n_A}(b)=0$ for any $b\notin A$, and silently expand $\type{n_A}$ to $\type{n_B}$ for any $B\supset A$. For any natural number $n$ we may write $\type{n_a}$ for the singleton family in $\{a\}$. The \emph{empty} memory type $\type{0_A}$ (or simply $\type0$) is everywhere zero, and the \emph{sum} $\type{n_A+m_A}$ of two memory types is given location-wise by $\type{(n_A+m_A)}(a)=\type{n_A}(a)+\type{m_A}(a)$.

\begin{definition}[The simply-typed RMC]
The typing rules for simply-typed RMC-terms $\term{M:m_A>n_A}$ are given in Figure~\ref{fig:simple-types}.
\end{definition}

\newcommand\tskip{\mathsf{skip}}
\newcommand\tseq{\mathsf{seq}}
\newcommand\tzero{\mathsf{zero}}
\newcommand\tsum{\mathsf{sum}}
\newcommand\tpush{\mathsf{push}}
\newcommand\tpop{\mathsf{pop}}
\newcommand\tnew{\mathsf{new}}
\newcommand\tstar{\mathsf{star}}

\newcommand\texp{\mathsf{exp}}
\newcommand\tdual{\mathsf{dual}}
\newcommand\srule[1]{\!\!\scriptstyle{#1}}

\begin{figure}
\[
\begin{array}{cc}
	\infer[\srule\tskip]{\term{* : n_A > n_A}}{}
&	\infer[\srule\tseq] {\term{M;N : k_A > n_A}}{\term{M: k_A > m_A} && \term{N: m_A > n_A}}
\\ \\
	\infer[\srule\tzero]{\term{0 : m_A > n_A}}{}
&	\infer[\srule\tsum] {\term{M+N : m_A > n_A}}{\term{M: m_A > n_A} && \term{N: m_A > n_A}}
\\ \\
	\infer[\srule\tstar]{\term{M^* : n_A > n_A}}{\term{M: n_A > n_A}}
&	\infer[\srule\tnew] {\term{Ex.M : m_A > n_A}}{\term{M: m_A > n_A}}	
\\ \\
	\infer[\srule\tpush]{\term{[t]a : n_A > n_A + 1_a}}{}
&	\infer[\srule\tpop] {\term{a<t> : 1_a + n_A > n_A}}{}
\end{array}
\]
\caption{The simply-typed RMC}
\label{fig:simple-types}
\Description[The simply-typed RMC]{}
\end{figure}
 
The present type system, which records only the \emph{number} of elements consumed and produced on the stack, is much simplified with respect to that of the FMC~\cite{Heijltjes:FMCI}. Being higher-order, the latter records also the \emph{types} of stack elements, as a vector of types $\type{!t}$. Here, we only need the length of $\type{!t}$ as a natural number $n$; because the RMC is first-order, and also because we have a single-sorted signature $\Sigma$ for algebraic terms, by which we assume only a single base type for values. A many-sorted signature, where each sort is represented by a different base type, may be accommodated with stack types $\type{!t}$ as on the FMC.

We have the following three basic properties. First, \emph{subject reduction} (reduction preserves types). Second, \emph{expansion}: if the machine runs with a given input stack, it will also run with a larger stack, and the output will be equally larger. Third, \emph{duality} exchanges input and output types.

\begin{lemma}
\label{lem:typing-judgements}
The following properties of the type system hold: 
\begin{itemize}
	\item \emph{Subject reduction:}
if $\term{M:k_A > n_A}$ and $\term M\rw\term N$ then also \\$\term{N:k_A > n_A}$.

	\item \emph{Expansion:}
if $\term{M:k_A > n_A}$ then $\term{M:k_A+m_A > m_A + n_A}$.

	\item \emph{Duality:}
if $\term{M:k_A > n_A}$ then $\term{M^\dagger: n_A > k_A}$.
\end{itemize}
\end{lemma}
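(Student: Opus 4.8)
The plan is to prove each of the three properties by induction, and in increasing order of difficulty: \emph{duality} and \emph{expansion} are direct inductions on the typing derivation, whereas \emph{subject reduction} requires auxiliary lemmas and constitutes the bulk of the work.

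For duality, I would induct on the derivation of $\term{M : k_A > n_A}$, following the clauses of the duality involution. Each case pairs a typing rule with its reversed counterpart: the $\tseq$ rule is self-dual with its premises swapped (matching $(\term{N;M})^\dagger = \term{M^\dagger;N^\dagger}$), the $\tpush$ and $\tpop$ rules exchange (matching $\term{[t]a}^\dagger = \term{a<t>}$ and conversely), and the rules $\tskip$, $\tzero$, $\tsum$, $\tstar$, $\tnew$ are each self-dual. The only arithmetic obligation is $\type{1_a + n_A} = \type{n_A + 1_a}$, which holds by commutativity of memory-type addition. For expansion I would again induct on the derivation of $\term{M : k_A > n_A}$, showing that both input and output may be uniformly enlarged by $\type{m_A}$. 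The axiom rules $\tskip$, $\tzero$, $\tpush$, $\tpop$ are parametric in the ambient memory type, so I simply instantiate them at the enlarged arity; the compound rules follow from the induction hypothesis applied with the \emph{same} $\type{m_A}$ to each premise. The routine obligations are again commutativity and associativity of $(+)$ on memory types; for example, the $\tstar$ case needs $\type{n_A + m_A} = \type{m_A + n_A}$ so that the enlarged body retains equal input and output type.

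The load-bearing property is subject reduction, which I would establish by first showing the root rewrite rules of Figure~\ref{fig:reduction} preserve types, and then lifting this along arbitrary contexts. Two auxiliary facts are needed. The first is an \emph{inversion} (generation) lemma extracting typing derivations of immediate subterms from a derivation of a compound term; this is immediate as the typing rules are syntax-directed. The second, and the genuinely essential one, is a \emph{replacement} lemma: if $\term P$ and $\term Q$ have the same type then $\term{C\{P\}}$ and $\term{C\{Q\}}$ have the same type in every context $\term{C\{\ \}}$ where both are defined; together with the fact that \emph{substitution does not change types}, since $\term{\{t/x\}M}$ has the same type as $\term M$ (all values inhabit the single base type, and the types of $\term{[t]a}$ and $\term{a<t>}$ are independent of $\val t$). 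With these in hand each root case reduces to checking that the two sides are equal-typed.

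The structural fact behind the $\rw_\beta$ and $\rw_\upsilon$ rules is that each eliminated push--pop pair carries the \emph{identity type}: $\term{[t]a;a<x>}$ and $\term{[x]a;a<x>}$ both type as $\type{p_A > p_A}$ (the push raises location $a$ by one, the matching pop lowers it), and likewise $\term{[f(T)]a;a<f(S)>}$ and $\term{[T]a;a<S>}$ both type as $\type{p_A > p_A}$, since a function symbol of arity $k$ contributes $k$ pushes against $k$ pops. Hence rewriting or deleting the redex inside its (linear) context preserves the type by the replacement lemma, while the enclosing $\term{Ex}$ and the substitution $\term{\{t/x\}}$ are discharged by $\tnew$ and the substitution lemma. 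The rules producing $\term 0$ are trivial since $\tzero$ types $\term 0$ at any arity; the $\rw_\KA$ and $\rw_\nu$ cases are routine consequences of inversion; and the $\rw_\pi$ case uses that pushing to $a$ and popping from $b$ with $a \neq b$ affect disjoint coordinates of the memory type and so commute. I expect the main obstacle to be formulating the replacement lemma at the right level of generality --- uniformly covering the linear contexts $\term{L\{\ \}}$ of the $\beta$-rules and the trailing terms of the $\upsilon$-rules --- after which every case collapses to a short arithmetic check.
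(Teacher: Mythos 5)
Your proposal is correct and takes essentially the approach the paper intends: the paper states this lemma without an explicit proof, treating all three properties as routine, and your arguments --- induction on typing derivations for duality (pairing $\tpush$/$\tpop$ and using commutativity of memory-type addition) and for expansion, plus subject reduction via inversion, a replacement lemma for contexts, type-invariance of substitution (values are single-sorted, so the types of $\term{[t]a}$ and $\term{a<t>}$ ignore $\val t$), and the observation that each eliminated push--pop pair carries the identity type --- supply exactly the expected details. No gaps.
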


To connect types with machine behaviour, we extend types to stacks and memories in the expected way: a stack is typed $S:\type{n}$ where $n=|S|$, the length of $S$, and a memory is typed $S_A:\type{n_A}$ where $S_a:\type{n_a}$ for all $a\in A$. Continuation stacks and states are then typed by the following rules.
\[
\begin{array}{c}
	   \infer{\e:\type{n_A > n_A}}{}
\qquad \infer{\term M\,K:\type{k_A > n_A}}{\term{M:k_A>m_A} & K:\type{m_A>n_A}}
\\\\   \infer{(S_A,\term M,K):\type{0>n_A}}{S_A:\type{k_A} & \term{M:k_A>m_A} & K:\type{m_A>n_A}}
\end{array}
\]

\begin{theorem}[The machine respects types]
\label{thm:machine-respects-types}
For a run
\[
	\run {S_A}MK {T_A}NL
\]
if $\mstate{S_A}MK:\type{0>n_A}$ then $\mstate{T_A}NL:\type{0>n_A}$.
\end{theorem}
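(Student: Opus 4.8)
The plan is to reduce the statement to a single-transition property and then induct on the length of the run. Since a run is by definition a rooted path in the machine tree, obtained by selecting one branch at each branching transition, it suffices to prove the following one-step preservation property: whenever $\mstate{S_A}{M}{K}:\type{0>n_A}$ and this state transitions in one step to $\mstate{T_A}{N}{L}$, then $\mstate{T_A}{N}{L}:\type{0>n_A}$. The theorem then follows by a straightforward induction on the number of transitions in the run, with the zero-step case immediate and the inductive step applying the one-step property to the first transition and the induction hypothesis to the remainder.

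Before the case analysis I would isolate two auxiliary facts, each holding because the type system records only the \emph{arities} of stacks and never the actual values held on them. First, typing is invariant under renaming: if $\term{M}:\type{m_A>n_A}$ then $\term{\{y/x\}M}:\type{m_A>n_A}$, since a renaming sends a variable to a variable and so preserves the length of every stack. Second, typing is invariant under value substitution: if $\term{M}:\type{m_A>n_A}$ then $\term{\{t/x\}M}:\type{m_A>n_A}$, and likewise $\{\val t/\val x\}K:\type{m_A>n_A}$ for a continuation $K:\type{m_A>n_A}$ and $\{\val t/\val x\}S_A:\type{k_A}$ for a memory $S_A:\type{k_A}$; the point is that replacing the single slot occupied by $\val x$ with the single slot occupied by $\val t$ leaves all arities unchanged. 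Both are proved by a routine induction on the structure of terms, continuations, and memories.

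With these in hand, the argument is a case analysis on the transitions of Figure~\ref{fig:machine}. The administrative cases use the state typing rule together with the continuation typing rules: for the transition from $\mstate{S_A}{*}{\term M\,K}$ to $\mstate{S_A}{M}{K}$, the skip rule forces the output type of $\term{*}$ to equal the memory type $\type{k_A}$, which is then the input type of $\term M\,K$; decomposing the latter via the continuation rule as $\term{M}:\type{k_A>m_A}$ and $K:\type{m_A>n_A}$ supplies exactly the derivation for the target state. The rules for composition, sum, Kleene star, and push are analogous, reading off the relevant arities from each typing rule (and, for the branching rules, noting that every branch is typed identically). The $\term{Ex.M}$ transition uses renaming invariance, and the two substituting pop transitions use value-substitution invariance, applied uniformly to the memory, the residual $\term{*}$, and the continuation $K$.

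The step I expect to be the crux is the function-symbol decomposition from $\mstate{S_A\cdot S_a\,\val{f(R)}}{a<f(T)>}{K}$ to $\mstate{S_A\cdot S_a\,R}{a<T>}{K}$, where both the memory type and the term type change while the overall state type must be preserved. The key bookkeeping is that $\val{f(R)}$ occupies a single slot of the $a$-stack whereas $R=\val{r_1\dots r_n}$ occupies $n$ slots; correspondingly $\term{a<f(T)>}$ is a single pop of type $\type{1_a+\ell_A>\ell_A}$, while $\term{a<T>}$ abbreviates $n$ pops of type $\type{n_a+\ell_A>\ell_A}$. Writing $p$ for the $a$-arity of $S_a$, the source memory has $a$-arity $p+1$ and the pop leaves the intermediate type $\type{m_A}$ with $m_a=p$; the target memory has $a$-arity $p+n$ and the $n$ pops leave $a$-arity $p$ again, so since the signature fixes the arity of $\val f$ to match the lengths of both $R$ and $\val{t_1\dots t_n}$, the target term produces the \emph{same} intermediate type $\type{m_A}$ feeding the unchanged continuation $K:\type{m_A>n_A}$. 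Hence the target state again has type $\type{0>n_A}$, completing the case and the induction.
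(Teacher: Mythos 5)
Your proof is correct, but it takes a genuinely different route from the paper's. You prove small-step subject reduction directly: a one-step preservation lemma established by case analysis on the transitions of Figure~\ref{fig:machine}, supported by two auxiliary invariance facts (typing is stable under renaming and under value substitution, for terms, continuations, and memories alike) that are indeed trivial here, precisely because the single-sorted system records only arities and never inspects values; your bookkeeping in the crux case $\term{a<f(T)>}$ is right, since the fixed arity of $\val{f}$ in $\Sigma$ forces $|R|=|T|=n$, so the $n$ pops abbreviated by $\term{a<T>}$ consume exactly the $n$ slots that replaced the single slot of $\val{f(R)}$, restoring the same intermediate memory type that feeds the unchanged continuation $K$. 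The paper instead works at the level of the big-step semantics: it proves, by induction on type derivations, a strengthened statement for \emph{open} terms, namely that every element $(T_A,\tau)$ of $\eval{S_A}{M}$ has a correctly typed return memory \emph{and} a substitution $\tau$ given by typed values, which requires an auxiliary lemma (that most general unifiers of typed equations are themselves typed) because the big-step pop clause produces an entire MGU at once via Proposition~\ref{prop:unification-embeds}; your machine-level argument never needs such a lemma, as the machine performs unification one variable or function symbol at a time, each step a simple substitution or decomposition. What each approach buys: yours applies directly to arbitrary, possibly partial runs with nonempty continuations, exactly matching the theorem as stated, and is more elementary; the paper's yields strictly more information --- the typing of output substitutions --- which it reuses elsewhere (for instance in the soundness of the typed equational theory), at the cost of routing through the correspondence between big-step evaluation and successful runs to speak about the machine at all.
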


A machine state $\mstate{S_A}{M}{K}$ makes \emph{progress} unless $\term M$ is a pop $\term{a<t>}$ and $S_a$ is empty. Note that types rule out the latter configuration, and thus ensure progress. Since types are preserved by the above theorem, we have the following corollary.

\begin{corollary}[Machine progress]
\label{cor:machine-progress}
Any state on a run from a typed state makes progress.
\end{corollary}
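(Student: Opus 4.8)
The plan is to derive the corollary almost immediately from Theorem~\ref{thm:machine-respects-types}, using the remark (already made before the corollary) that the single blocking configuration is untypable. First I would fix a run
\[
	\run{S_A}MK{T_A}NL
\]
beginning at a typed state $\mstate{S_A}MK:\type{0>n_A}$. Since any state appearing along the run is the endpoint of a prefix run starting at $\mstate{S_A}MK$, Theorem~\ref{thm:machine-respects-types} applies to each such state, yielding that \emph{every} state on the run is again typed, with the same type $\type{0>n_A}$. Consequently it suffices to prove the purely static fact that every typed state makes progress; the dynamic hypothesis of starting from a typed state then does the rest.

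Next I would unfold the definition of progress: a state $\mstate{T_A}NL$ fails to make progress precisely when $\term N$ is a pop $\term{a<t>}$ and the stack $T_a$ at location $a$ is empty. So the only obligation is to check that this configuration cannot carry a type of the form $\type{0>n_A}$. Unpacking the state typing rule, $\mstate{T_A}NL:\type{0>n_A}$ forces $T_A:\type{k_A}$, $\term{N:k_A>m_A}$, and $L:\type{m_A>n_A}$ for some memory types $\type{k_A}$ and $\type{m_A}$. If $\term N$ is a pop $\term{a<t>}$, then the $\mathsf{pop}$ rule forces its input memory type to have the shape $\type{1_a + l_A}$ for some $\type{l_A}$, whence $\type{k_A}(a) = \type{1_a}(a) + \type{l_A}(a) = 1 + \type{l_A}(a) \geq 1$.

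Since a memory $T_A:\type{k_A}$ satisfies $T_a:\type{k_A}(a)$, the stack $T_a$ has length at least one and is therefore nonempty, contradicting the blocking condition. Hence no typed state is blocked, and combined with the first step — that every state on the run is typed — the corollary follows. I do not anticipate any genuine obstacle here: the substantive work has already been carried out in establishing that the machine respects types (Theorem~\ref{thm:machine-respects-types}), and the corollary merely matches the $\mathsf{pop}$ typing rule, which demands at least one operand at location $a$, against the progress condition, which is violated only when that operand is absent.
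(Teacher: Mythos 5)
Your proof is correct, and it is essentially the argument the paper itself intends: the remark immediately preceding Corollary~\ref{cor:machine-progress} says exactly that types rule out the unique blocking configuration (a pop $\term{a<t>}$ facing an empty stack $S_a$) and that Theorem~\ref{thm:machine-respects-types} propagates typedness along runs. Your two steps fill in that sketch faithfully --- applying the preservation theorem to every prefix of the run to type each intermediate state at $\type{0>n_A}$, then observing statically that the $\mathsf{pop}$ rule, being the only rule whose conclusion types a pop, forces an input memory type of shape $\type{1_a+n_A}$, whence $|T_a|\geq 1$ and the stack is nonempty --- and you unfold the paper's (deliberately weak) notion of progress correctly, so that unification failure and terminal states $\mstate{T_A}{*}{\e}$ are not counted as stuck. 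The one divergence worth noting is that the paper's appendix proves a restated version (initial state with empty continuation) by induction on type derivations, invoking Theorem~\ref{thm:machine-respects-types} only in the sequencing case to ensure the continuation receives a correctly typed stack; your route avoids that induction entirely, since the state typing rule already packages the typing of the continuation stack $K$, so preservation on prefix runs reduces the corollary to the one-line syntactic observation about the $\mathsf{pop}$ rule. That is a mild simplification rather than a different idea: both arguments rest on the same substantive content, namely that the machine respects types.
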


We define a typed notion of machine equivalence, where we expect input stacks to respect typing.

\begin{definition}[Typed machine equivalence]
\emph{Typed machine equivalence} is defined as the relation $\term{M}\mequiv\term{N:m_A>n_A}$
 on similarly typed terms, which holds if: 
\[
	 \forall S_A:\type{m_A}.~ \evalset{S_A}{M} =  \evalset{S_A}{N}~.
\]
\end{definition}

We can strengthen the $\eta$-axiom for the \emph{typed} equational theory, as it ensures that the stack involved is non-empty.

\begin{definition}[Typed equational theory]
The \emph{typed equational theory} $\term M=\term{N:m_A>n_A}$ relates similarly typed terms of the RMC by the least congruence generated by the axioms of the untyped equational theory, together with the strengthened $\eta$-axiom 
\[
	 \term{Ex.a<x>;[x]a} =_\eta \term{* : 1_a+m_A > m_A + 1_a}~.
\]
\end{definition}

The $\eta$-law of the $\kappa$-calculus is decomposed as follows, where $\val x \not\in \fv{\term M}$, and indeed our symmetric $\eta$-axiom is equivalent.
\[
	       \term{Ex.<x>;[x];M} 
~ =_\nu  ~ \term{(Ex.<x>;[x]); M} 
~ =_\eta ~ \term{*;M} 
~ =      ~ \term{M}
\]

\begin{theorem}[Soundness of typed equational theory] \label{thm:typed-equations-sound}
The typed equational theory is sound for typed machine equivalence: 
\[
	\term{M} = \term{N: m_A > n_A} \quad\Longrightarrow\quad \term{M} \sim \term{N: m_A > n_A}~.
\]
\end{theorem}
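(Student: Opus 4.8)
The plan is to lean on the soundness of the untyped theory (Theorem~\ref{thm:equations-sound}) and isolate the one genuinely new ingredient of the typed theory: the strengthened $\eta$-axiom. Proceed by induction on the derivation of $\term{M}=\term{N:m_A>n_A}$. The equivalence-rule cases (reflexivity, symmetry, transitivity) are immediate, since set-equality of $\evalset{S_A}{-}$ is itself an equivalence relation. The congruence-rule cases require that typed machine equivalence be closed under all typed term-formers; this is the typed analogue of Proposition~\ref{lem:congruence}, proved by the same structural argument using the compositional shape of Definition~\ref{def:big-step}: replacing a subterm by a typed-machine-equivalent one leaves each defining clause of $(\eval{-}{-})$ unchanged, because sequencing forwards both the return memory and the substitution $\sigma$, while sum and star are assembled pointwise from the component evaluations, and $\term{Ex.M}$ is invariant under the renaming $\term{\{y/x\}}$. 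The only care needed is that types line up at the hole, which holds by hypothesis, with expansion (Lemma~\ref{lem:typing-judgements}) ensuring the untouched part of each memory is simply carried along.

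For the axiom cases, the untyped axioms are dispatched at once: by Theorem~\ref{thm:equations-sound} each already holds as an equality $\evalset{S_A}{M}=\evalset{S_A}{N}$ for \emph{all} memories $S_A$, hence a fortiori for all typed $S_A:\type{m_A}$, which is exactly what typed machine equivalence demands. The entire burden thus falls on the strengthened axiom
\[
	\term{Ex.a<x>;[x]a} =_\eta \term{* : 1_a+m_A > m_A + 1_a}~,
\]
where I would simply compute both sides. Take any $S_A:\type{1_a+m_A}$; crucially the input type forces $S_a$ to be non-empty, say $S_a=S_a'\,\val t$. Unfolding the $\term{Ex}$-clause with a fresh $\val y$, the machine pops $\val t$ against $\val y$, producing the substitution $\term{\{t/y\}}$ and memory $S_A\cdot S_a'$, then pushes $\term{\{t/y\}}\val y=\val t$ back to recover $S_A$; restricting the substitution on the fresh $\val y$ returns the empty map. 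Hence $\eval{S_A}{Ex.a<x>;[x]a}=\msingle{(S_A,\e)}=\eval{S_A}{*}$, so the two sides agree as sets. One checks that the three possible shapes of top element---a variable, a distinct value, or a compound $\val{f(R)}$---all yield the same outcome, precisely because $\val y$ is fresh and the pop therefore always falls under the plain variable rule.

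The main obstacle is conceptual rather than computational: it is exactly the emptiness of the stack that defeats the untyped $\eta$-law, which is why that law appears only as the inequation $\leq_\eta$. When $S_a$ is empty the pop fails, giving $\evalset{S_A}{Ex.a<x>;[x]a}=\emptyset\subsetneq\{(S_A,\e)\}=\evalset{S_A}{*}$, i.e.\ refinement but not equality. The content of the theorem is thus the observation that typing forbids precisely this degenerate configuration---this is machine progress, Corollary~\ref{cor:machine-progress}---which upgrades the inequation to an equation; everything else is the routine bookkeeping of the induction together with the transfer of the congruence property from the untyped to the typed setting.
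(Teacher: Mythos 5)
Your proposal is correct and follows essentially the same route as the paper: reduce everything to the untyped soundness theorem (Theorem~\ref{thm:equations-sound}), observing that the strengthened $\eta$-axiom is the only genuinely new obligation, and then discharge it by a direct big-step computation in which the input type forces the stack $S_a$ to be non-empty so that the fresh-variable pop rule applies (the paper cites machine progress and typed inputs for exactly this step, mirroring the $\eta$-case calculation in its proof of Theorem~\ref{thm:equations-sound}). The only cosmetic difference is that you spell out the induction and the typed congruence cases explicitly, where the one ingredient you need for sequencing is really type preservation (Theorem~\ref{thm:machine-respects-types}) guaranteeing intermediate memories are well-typed, rather than the expansion property --- a minor attribution slip that does not affect the argument.
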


\section{Categorical semantics}\label{sec:categorical-semantics}
We give a sound and complete categorical semantics for the sum-free,  star-free and function-symbol-free fragment of the RMC,  illustrating how the core calculus provides a term language and operational interpretation of Frobenius monoids,  and thus of a natural class of string diagrams.  
Following this,  we show that the corresponding fragments with linear and "Cartesian" variable policies provide a term language for symmetric monoidal and Cartesian categories,  respectively. 
The extension of the categorical semantics to include non-deterministic sum follows easily,  but we leave to future work the categorical interpretation of iteration and of algebraic terms.  


Frobenius monoids 
are used as a categorical abstraction for modelling a range of computational phenomena,  many of interest to programming language theorists.  
While the literature contains computational interpretations of compact closed categories \cite{chen_computational_2021} and certain accounts of operational semantics of string diagrams \cite{bonchi_bialgebraic_2021, zanasi:ctxt-signal-flow},  to the best knowledge of the authors,  the RMC gives the first syntactic account of Frobenius monoids via a notion of $\beta$-reduction,  and the first operational account of Frobenius monoids via a stack machine.   
To highlight the correspondence with string diagrams,  we work with a single location,  but the results extend easily to multiple locations \cite{Barrett:thesis}. 
%

\paragraph{Notation} 
Given a category $\mathcal{C}$ and objects $X,Y \in \mathcal{C}$ we denote by $\mathcal{C}(X,Y)$ the corresponding homset.  We write the identity morphism on $X$ as $\id_X$,  or simply $X$.  Composition is written in diagrammatic order,  using infix $(;\!)$.  We denote the tensor product of a \emph{symmetric monoidal category (SMC)} $\mathcal{C}$ by $\otimes$,  its unit by $I$ and its symmetry natural transformation as $\sym$ \cite{MacLane}.  Components of natural transformations are indexed by subscripts, which we sometimes omit. We elide all associativity and unit isomorphisms associated with monoidal categories and sometimes.  Recall that a commutative monoid in a given SMC is an object $X$ together with a multiplication $\mu: X \otimes X \to X$ and a unit $\eta: I \to X$ satisfying the expected associativity,  commutativity and unit equations,  and a cocommutative comonoid is defined dually.  We denote by $\textsf{CMon}$ the category of commutative monoids and monoid homomorphisms.   We will use string diagrams throughout this section to describe (co-)monoids.  

\begin{definition}[Frobenius monoid]
In an SMC $(\mathcal{C}, \otimes, 1)$, an \emph{extra-special commutative Frobenius monoid} $(X, \delta,  \epsilon,  \mu,  \eta)$ consists of a commutative monoid $(X, \mu, \eta)$ and a cocommutative comonoid ($X, \delta, \epsilon)$ that satisfy the \emph{Frobenius},  \emph{special},  and \emph{extra} equations,  given below in the string-diagrammatic language of Figure \ref{fig:frobenius-generators}.
\[
	\tikzfig{figures/Frobenius-Equations-sparse}
\]
\end{definition}

\begin{figure*}
\begin{tikzpicture}
	\begin{pgfonlayer}{nodelayer}
		\node [style=vertex] (0) at (24.75, -1.5) {};
		\node [style=none] (1) at (25.75, -1.5) {};
		\node [style=none] (2) at (23.75, -1) {};
		\node [style=none] (3) at (23.75, -2) {};
		\node [style=none] (4) at (24.75, -3) {$\mu: 2 \to 1$};
		\node [style=none] (5) at (33.5, -1.5) {};
		\node [style=vertex] (6) at (32.5, -1.5) {};
		\node [style=none] (7) at (33.25, -3) {$\eta: 0 \to 1$};
		\node [style=vertex] (9) at (7.5, -1.5) {};
		\node [style=none] (10) at (6.5, -1.5) {};
		\node [style=none] (11) at (8.5, -1) {};
		\node [style=none] (12) at (8.5, -2) {};
		\node [style=none] (13) at (15.75, -1.5) {};
		\node [style=vertex] (14) at (16.75, -1.5) {};
		\node [style=none] (15) at (16.25, -3) {$\epsilon: 1 \to 0$};
		\node [style=none] (16) at (7.5, -3) {$\delta: 1 \to 2$};
		\node [style=none] (17) at (7.5, -4) {$\term{\exists x.<x>;[x].[x]: 1 > 2}$};
		\node [style=none] (18) at (16.25, -4) {$\term{\exists x.<x>: n > 0}$};
		\node [style=none] (19) at (24.75, -4) {$\term{\exists x.<x>;<x>;[x]: 2 > 1}$};
		\node [style=none] (20) at (33.25, -4) {$\term{\exists x.[x]: 0 > 1}$};
		\node [style=termbox] (21) at (6.5, 4.25) {$\term{M}$};
		\node [style=termbox] (22) at (8.5, 4.25) {$\term{N}$};
		\node [style=none] (23) at (7, 4.5) {};
		\node [style=none] (24) at (8, 4.5) {};
		\node [style=none] (31) at (7.5, 1) {$\term{M;N: k > n}$};
		\node [style=none] (32) at (7.5, 2) {$M;N: k \to n$};
		\node [style=none] (33) at (31.75, 5) {};
		\node [style=none] (34) at (31.75, 3.75) {};
		\node [style=none] (35) at (34.75, 5) {};
		\node [style=none] (36) at (34.75, 3.75) {};
		\node [style=none] (51) at (16.25, 1) {$\term{M: k+m > k+n}$};
		\node [style=none] (52) at (24.75, 1) {$\term{\exists !x.<?x>;M;[!x]: m+k > n+k}$};
		\node [style=none] (53) at (33.25, 1) {$\term{\exists x {\kern1pt} y.<x>;<y>;[x];[y]: 2 > 2}$};
		\node [style=none] (54) at (16.25, 2) {$k \otimes M: k \otimes m \to k \otimes n$};
		\node [style=none] (55) at (24.75, 2) {$M \otimes k: m \otimes k \to n \otimes k$};
		\node [style=none] (56) at (33.25, 2) {$\sym: 2 \to 2$};
		\node [style=none] (59) at (7, 4) {};
		\node [style=none] (60) at (8, 4) {};
		\node [style=none] (101) at (9, 4.5) {};
		\node [style=none] (102) at (10.25, 4.5) {};
		\node [style=none] (103) at (9, 4) {};
		\node [style=none] (104) at (10, 4) {};
		\node [style=none] (105) at (10.75, 4.25) {$\type{n}$};
		\node [style=none] (106) at (6, 4.5) {};
		\node [style=none] (107) at (4.75, 4.5) {};
		\node [style=none] (108) at (6, 4) {};
		\node [style=none] (109) at (5, 4) {};
		\node [style=none] (110) at (4.25, 4.25) {$\type{k}$};
		\node [style=termbox] (112) at (16.25, 5) {$\term{M}$};
		\node [style=none] (113) at (16.75, 5.25) {};
		\node [style=none] (114) at (16.75, 4.75) {};
		\node [style=none] (115) at (15.75, 5.25) {};
		\node [style=none] (116) at (14.25, 5.25) {};
		\node [style=none] (117) at (15.75, 4.75) {};
		\node [style=none] (118) at (14.5, 4.75) {};
		\node [style=none] (119) at (13.75, 5) {$\type{m}$};
		\node [style=none] (120) at (18.25, 5.25) {};
		\node [style=none] (121) at (16.75, 5.25) {};
		\node [style=none] (122) at (18, 4.75) {};
		\node [style=none] (123) at (16.75, 4.75) {};
		\node [style=none] (124) at (14.25, 4) {};
		\node [style=none] (125) at (14.25, 3.5) {};
		\node [style=none] (126) at (17.75, 4) {};
		\node [style=none] (127) at (14.75, 4) {};
		\node [style=none] (128) at (17.5, 3.5) {};
		\node [style=none] (129) at (15, 3.5) {};
		\node [style=none] (130) at (18.75, 5) {$\type{n}$};
		\node [style=termbox] (131) at (24.75, 3.75) {$\term{M}$};
		\node [style=none] (132) at (25.25, 4) {};
		\node [style=none] (133) at (25.25, 3.5) {};
		\node [style=none] (134) at (24.25, 4) {};
		\node [style=none] (135) at (23.25, 4) {};
		\node [style=none] (136) at (24.25, 3.5) {};
		\node [style=none] (137) at (23.5, 3.5) {};
		\node [style=none] (138) at (22.75, 3.75) {$\type{m}$};
		\node [style=none] (139) at (26.25, 4) {};
		\node [style=none] (140) at (25.25, 4) {};
		\node [style=none] (141) at (26, 3.5) {};
		\node [style=none] (142) at (25.25, 3.5) {};
		\node [style=none] (143) at (26.75, 3.75) {$\type{n}$};
		\node [style=none] (144) at (22.75, 5.25) {};
		\node [style=none] (145) at (22.75, 4.75) {};
		\node [style=none] (146) at (26.75, 5.25) {};
		\node [style=none] (147) at (22.75, 5.25) {};
		\node [style=none] (148) at (26.5, 4.75) {};
		\node [style=none] (149) at (23, 4.75) {};
		\node [style=none] (150) at (14.25, 3.75) {$\type{k}$};
		\node [style=none] (151) at (18.25, 3.75) {$\type{k}$};
		\node [style=none] (152) at (27.25, 5) {$\type{k}$};
		\node [style=none] (153) at (22.25, 5) {$\type{k}$};
	\end{pgfonlayer}
	\begin{pgfonlayer}{edgelayer}
		\draw [in=135, out=0] (2.center) to (0);
		\draw [in=0, out=-135] (0) to (3.center);
		\draw (0) to (1.center);
		\draw (6) to (5.center);
		\draw [in=45, out=180] (11.center) to (9);
		\draw [in=180, out=-45] (9) to (12.center);
		\draw (9) to (10.center);
		\draw (14) to (13.center);
		\draw [in=0, out=180] (36.center) to (33.center);
		\draw [in=-180, out=0, looseness=1.25] (34.center) to (35.center);
		\draw (23.center) to (24.center);
		\draw (59.center) to (60.center);
		\draw (101.center) to (102.center);
		\draw (103.center) to (104.center);
		\draw (107.center) to (106.center);
		\draw (109.center) to (108.center);
		\draw (116.center) to (115.center);
		\draw (118.center) to (117.center);
		\draw (121.center) to (120.center);
		\draw (123.center) to (122.center);
		\draw (127.center) to (126.center);
		\draw (129.center) to (128.center);
		\draw (135.center) to (134.center);
		\draw (137.center) to (136.center);
		\draw (140.center) to (139.center);
		\draw (142.center) to (141.center);
		\draw (147.center) to (146.center);
		\draw (149.center) to (148.center);
	\end{pgfonlayer}
\end{tikzpicture}
\caption{Equipment of the Relational Machine Category: string diagrams,  categorical combinators,  and RMC-terms. }
\label{fig:frobenius-generators}
\Description[Equipment of the Relational Machine Category: string diagrams,  categorical combinators,  and RMC-terms. ]{}
\end{figure*}
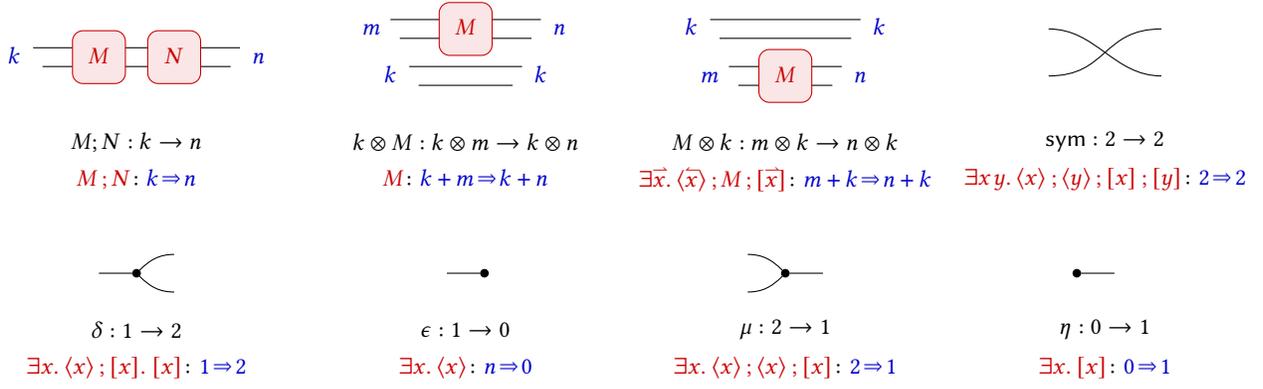
%

The categorical abstraction of relations which is modelled by the RMC is that of \emph{extra-hypergraph categories},  as defined below.  This is a slight variant of the usual definition of \emph{hypergraph category} \cite{fong_hypergraph_2019},  which additionally requires that the relevant Frobenius structures satisfy the \emph{extra} equation.  
Indeed, a paradigmatic example of such a category is that of sets and relations. 
\begin{definition}\label{def:hypergraph-category}[Extra-hypergraph category]
An \emph{extra-hypergraph category} is an SMC $\mathcal{C}$ in which each object $X \in \mathcal{C}$ is equipped with an extra-special commutative Frobenius structure $(\delta_X,  \epsilon_X,  \mu_X,  \eta_X)$, which together satisfy the following coherences:
\begin{align*}
	\delta_{X \otimes Y} &= (\delta_X \otimes \delta_Y) ; (X \otimes \sym_{X,Y} \otimes Y) &
	\epsilon_{X \otimes Y} &= \epsilon_X \otimes \epsilon_Y \\
	\mu_{X \otimes Y} &= (X \otimes \sym_{X,Y} \otimes Y) ; (\mu_A \otimes \mu_Y) &
	\eta_{X \otimes Y} &= \eta_X \otimes \eta_Y 
\end{align*}
and the unit coherence that $\eta_I = \id_I = \epsilon_I$.  
A functor of extra-hypergraph categories is a strong symmetric monoidal functor that additionally preserves the Frobenius structures. 
\end{definition}

We can define the hypergraph categorical structure of the RMC by assigning terms to string diagrams as in Figure \ref{fig:frobenius-generators}.  We take objects to be the natural numbers,  and wires to represent the input and output stacks, with the head of the stack at the top. 
In contrast to the $\lambda$-calculus,  categorical composition of terms is given by sequential compostion $(\term{;})$ rather than by substitution.  
Right-action $ k \otimes {M}$ of the tensor product is given by stack expansion (as described in Lemma \ref{lem:typing-judgements}). 
Left-action ${M} \otimes k$ lifts the $k$ arguments from the stack as variables $\val{!x}$,  to restore them after evaluating ${M}$ on the remaining stack.  The remaining equipment is familiar from the discussion in the introduction of this paper. 
\begin{definition}\label{def:rmc-category}
The \emph{Relational Machine Category}, $\termcategory{\Sigma}$,  of RMC-terms,  is defined by the following data. 
\begin{itemize}
\item Objects: natural numbers; 
\item Morphisms: $\termcategory{\Sigma}(\type m, \type n)$ is the set of closed simply-typed terms $ \term{M: m > n}$ over $\Sigma$,  modulo $(=)$;  
\item Identity: at object $\type{n}$,  the term $\term{*: n > n}$; 
\item Composition: for terms $\term{M: k > m}$ and $\term{N: m > n}$,  their composition is $\term{M;N: k > n}$; 
\item Extra-hypergraph structure: the tensor $\otimes$ acts on objects $\type{m} \otimes \type{n} = \type{m + n}$, with unit $I = \type{0}$; 
 the right- and left-action of $\otimes$ on morphisms,  the symmetry,  and the Frobenius monoids on object $\type 1$ are given by Figure \ref{fig:frobenius-generators};  {associators} and unitors are given by the identity.
\end{itemize}
\end{definition}
Note that it suffices to define a Frobenius monoid on the generating object $\type 1$,  with Frobenius monoids at $\type n$ defined using the compatibility conditions of Definition \ref{def:hypergraph-category}.

\begin{theorem}[Soundness]\label{prop:hypergraph-category}
The Relational Machine Category \\$\termcategory{\Sigma}$ is an extra-hypergraph category.
\end{theorem}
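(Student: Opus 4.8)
The plan is to verify directly that the data of Definition~\ref{def:rmc-category} satisfy every clause of Definition~\ref{def:hypergraph-category}, each clause amounting to an equality of RMC-terms that must be shown to hold in the equational theory $(=)$. Since all terms in play lie in the sum-free, star-free, function-symbol-free fragment, two complementary techniques are available. For the \emph{closed} structural equations (the commutative-monoid, cocommutative-comonoid, Frobenius, special and extra laws, all between fixed closed terms of a fixed type), I would reduce both sides to normal form: by Theorem~\ref{theorem:confluence} reduction is confluent, the fragment is strongly normalising, and $(\rw)$ is sound for $(=)$, so two terms with a common normal form (modulo the stated congruences) are provably equal. For the laws that are \emph{parametric} in an arbitrary term $\term M$ — functoriality of the tensor and naturality of the symmetry — normal forms are unavailable, and I would instead argue equationally, using the derived permutation equalities of Lemma~\ref{lem:permutation} together with the $\beta$, $\eta$, $\nu$ axioms.

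First I would dispatch the categorical and strict-monoidal skeleton. Identities and composition are well defined on $(=)$-classes because $(=)$ is a congruence; the category axioms (associativity and the two unit laws $\term{*;M}=\term M=\term{M;*}$) are exactly the monoid laws of $(\term M,\term{;},\term{*})$ from the Kleene-algebra fragment. On objects the tensor is addition of naturals, which is strictly associative and unital with $I=\type 0$, so the associators and unitors may be taken to be identities and the pentagon and triangle coherences hold trivially. It then remains to promote the left- and right-actions of Figure~\ref{fig:frobenius-generators} to a genuine bifunctor: writing $\term{M}\otimes\term{N}$ as the composite of a right-action $k\otimes\term N$ (stack expansion, Lemma~\ref{lem:typing-judgements}) after a left-action $\term M\otimes k$, with $k$ the appropriate arity, bifunctoriality reduces to the interchange law, i.e.\ to showing that a left-action and a right-action on disjoint segments of the single stack commute. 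I expect this to be the technical heart of the proof: the left-action $\term{\exists !x.<?x>;M;[!x]}$ lifts the outer segment out as fresh variables, runs $\term M$ on the inner segment, and restores it, and one must commute this wrapper past the expanded $\term N$; I would prove the needed commutation as a lemma by induction on the structure of $\term M$ (or $\term N$), each case using $\nu$ to move $\term{\exists}$-binders outward and $\beta$/$\eta$ to cancel the lifting pushes and pops against the restoring ones.

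With the symmetric monoidal structure in place, naturality of the symmetry $\sym$ and the hexagon (which, in the strict setting, reduces to the symmetry equations and the compatibility of $\sym$ with $\otimes$) follow by the same style of equational manipulation of the lifting wrappers. For the Frobenius structure I would then verify, at the generating object $\type1$, that $(\mu,\eta)$ is a commutative monoid, $(\delta,\epsilon)$ a cocommutative comonoid, and that the Frobenius, special and extra equations hold. Each of these is a closed equation and so is checked by the normal-form method; the Frobenius equation is especially transparent, since all three of its composites normalise to $\term{\exists x.<x>;<x>;[x];[x]}$, realising the ``connectivity $=$ identity'' slogan of the introduction, while the special and extra laws normalise the relevant composites to $\term{*:1>1}$ and $\term{*:0>0}$ respectively.

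Finally, I would define the Frobenius structure at each object $\type n$ by the coherence formulas of Definition~\ref{def:hypergraph-category} applied to the strict decomposition $\type n=\type1\otimes\cdots\otimes\type1$; as noted after Definition~\ref{def:rmc-category} this suffices, and the required object-indexed coherences then hold by construction, with the unit coherence $\eta_I=\id_I=\epsilon_I$ trivial because all three are the empty term $\term{*:0>0}$. The only point requiring care here is consistency of the extension — that the formulas produce the \emph{same} structure independently of how a composite object is bracketed — which again follows from strictness of $\otimes$ together with the commutative-monoid and cocommutative-comonoid laws already established at $\type1$. The main obstacle throughout remains the interchange and naturality reasoning of the second and third paragraphs, since it is there that the intricate variable-lifting left-action must be commuted past arbitrary terms.
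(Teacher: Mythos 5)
Your overall architecture matches the paper's proof quite closely: the strict monoidal skeleton comes from the KA monoid laws with identity associators and unitors, bifunctoriality is reduced to an interchange law proved as a separate lemma by structural induction using $\nu$, $\beta$, $\eta$ (this is precisely the paper's Lemma~\ref{lem:interchange}, whose induction runs over the type derivation of $\term N$), naturality of $\sym$ is handled by the same kind of induction, and the Frobenius structure is verified at the generating object $\type 1$ and propagated to $\type n$ by the coherence formulas of Definition~\ref{def:hypergraph-category}. The genuine gap is in your normal-form method for the closed equations. The reduction relation $(\rw)$ of Definition~\ref{def:reduction} deliberately \emph{excludes} the $\eta$-axiom (``we do not consider the $\eta$ inequation in the untyped case''), whereas the homsets of $\termcategory{\Sigma}$ are quotiented by the \emph{typed} equational theory, in which $\eta$ has been strengthened to an equation --- and several of the laws you delegate to normalisation hold only by virtue of that $\eta$-equation. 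Concretely, the special law: $\delta;\mu$ reduces by $\nu$ and $\beta$ to $\term{Ex.<x>;[x]}$, which is a perfectly good $(\rw)$-normal form (of the shape given by the normal-form corollary) and is \emph{distinct} from $\term{*}$; the two are identified only by the typed axiom $\term{Ex.a<x>;[x]a}=_\eta\term{*}$. The same happens for (co)unitality $(\type 1\otimes\eta);\mu=\id$, for the symmetry involution $\sym;\sym=\id$, and for functoriality on identities $\term{*}\otimes\type n=\term{*}$: in each case the two sides do not share a common $(\rw)$-normal form modulo the congruences the corollary permits (AC of $\term{+}$ and permutation of $\term{Ex}$), so your claim that the special law ``normalises to $\term{*:1>1}$'' is false for the paper's reduction. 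The repair is exactly what the paper does: argue equationally and close these cases with an explicit final application of the typed $\eta$-equation (equivalently, compare $\eta$-expanded normal forms, as the paper's completeness proof does). Your method is sound as stated for the laws whose derivations need only $\nu$ and $\beta$ --- associativity, commutativity, Frobenius, and extra --- but not for the $\eta$-dependent ones.

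A secondary inaccuracy: the theorem concerns all of $\termcategory{\Sigma}$, whose morphisms are arbitrary closed simply-typed terms --- including sums, Kleene stars and function symbols --- so your opening claim that ``all terms in play lie in the sum-free, star-free, function-symbol-free fragment'' is wrong for the parametric laws. The interchange and naturality inductions must at least cover $\term{M+N}$ (the paper does, using distributivity and the fact that $\otimes$ preserves sums), and your restriction would make the normal-form corollary unavailable anyway for terms containing $\term{M^*}$; this is a reason the paper treats \emph{all} verifications equationally rather than by normalisation. (The star case is in fact also absent from the paper's own induction, but your proposal should not inherit that omission by fiat of restricting the fragment.)
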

It is immediate from the axioms of Kleene algebra that $\termcategory{\Sigma}$ is also \emph{\textsf{CMon}-enriched},  \emph{i.e.}, every homset $\termcategory{\Sigma}(\type{n}, \type{m})$ is a commutative monoid with addition $(\term{+})$ and unit $\term{0}$,  such that  composition of morphisms distributes over addition.  This is an easy and appropriate extension of the model to incorporate non-determinism.  

Completeness of the categorical semantics for the core fragment follows an easy normal form argument. 
Let $\emptyset$ denote the empty signature.    
Here,  and below,  let the free categories in question be generated over a single object (and the empty set of morphisms) unless otherwise specified.  
\begin{theorem}[Hypergraph completeness]\label{thm:completeness}
The subcategory of sum-free,  star-free terms of $\termcategory{\emptyset}$ is equivalent to the free extra-hypergraph category.  
\end{theorem}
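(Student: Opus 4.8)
The plan is to establish an equivalence of categories between the sum-free, star-free subcategory of $\termcategory{\emptyset}$ and the free extra-hypergraph category on one generating object, which I will write $\mathbf{Frob}$. Since both categories are generated by the same extra-hypergraph structure (symmetry, and the Frobenius monoid on the generating object $\type 1$), the natural strategy is to exhibit an identity-on-objects functor $F : \mathbf{Frob} \to \termcategory{\emptyset}$ sending the structural generators of $\mathbf{Frob}$ to the corresponding RMC-terms of Figure~\ref{fig:frobenius-generators}, and then prove $F$ is full and faithful (fullness and essential surjectivity on objects being essentially immediate, since objects are natural numbers on both sides and every morphism of $\termcategory{\emptyset}$ arises from the same generators). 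The content of the theorem is therefore \emph{faithfulness}, which is where a normal form argument does the work.

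\textbf{First} I would invoke Theorem~\ref{prop:hypergraph-category} (soundness): $\termcategory{\emptyset}$ is an extra-hypergraph category, so by the universal property of the free such category on one object, $F$ exists and is well defined; it is automatically a functor of extra-hypergraph categories. Fullness follows because, in the function-symbol-free fragment, every closed typed term is built (up to $(=)$) from the generators that $F$ hits—this can be read off the grammar once $\term 0$, $(\term +)$, and Kleene star are excluded. \textbf{Next}, faithfulness: I must show that if two morphisms of $\mathbf{Frob}$ have equal images in $\termcategory{\emptyset}$, they were already equal in $\mathbf{Frob}$. The lever is the Normal Forms corollary. In the sum-free, star-free, function-symbol-free fragment, that corollary (specialised to a single summand, the empty signature, and closed terms) tells me every term of type $\type m > \type n$ reduces to a canonical form $\term{E!x.<x_{i_k}>;\ldots;<x_{i_1}>;[x_{j_1}];\ldots;[x_{j_\ell}]}$, i.e.\ a block of pops reading a length-$m$ stack into variables, followed by a block of pushes emitting a length-$n$ stack, where each stack slot carries a variable name drawn from $\val{!x}$. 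Such a normal form is precisely the combinatorial datum of a \emph{partition} of the $m+n$ wire-endpoints into connected components (two endpoints are connected iff they carry the same variable name, with possibly empty components for fresh variables introduced by $\term{E}$ and not used, supplying $\eta$/$\epsilon$ caps and cups). This is exactly the well-known normal form / classification of morphisms in the free (extra-special commutative) Frobenius category: a morphism $\type m \to \type n$ is a partition of $m+n$ ports (a special commutative Frobenius structure corresponds to partitions, the \emph{extra} equation killing the extra scalar so that empty components are permitted freely).

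\textbf{Then} the proof reduces to checking that this correspondence is a bijection compatible with composition, i.e.\ that equality of RMC normal forms (taken modulo associativity/commutativity of $(\term +)$ and permutation of $\term{Ex}$, as the corollary stipulates) coincides with equality of partitions, and that $F$ carries the partition attached to a $\mathbf{Frob}$-morphism to the partition read off the normal form of its image. Confluence (Theorem~\ref{theorem:confluence}) together with strong normalisation of the star-free fragment guarantees that normal forms are unique up to the stated congruences, so two provably-equal terms have the same partition; conversely, renaming the shared variable of a connected component is exactly the freedom that the free Frobenius calculus quotients by, so distinct partitions give inequal normal forms and hence inequal terms. Compatibility with composition is the statement that sequencing two normal forms and renormalising computes the join/merge of the two partitions along the shared boundary—which is both the defining composition of partitions in $\mathbf{Frob}$ and what $\upsilon$-/$\beta$-reduction performs when a push meets a pop ($\term{[x];<x>} \rw_\upsilon \term *$ fuses the two components, realising ``connectivity $=$ identity'').

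\textbf{The hard part} will be making the variable-name bookkeeping fully rigorous, and in particular correctly accounting for the \emph{extra} and \emph{special} equations. The subtle point is that variable names in the RMC are a presentation of connectedness, but the same partition can be presented by differently-named normal forms; I must confirm that the congruences under which the Normal Forms corollary identifies terms (associativity/commutativity of sum, permutation of $\term{Ex}$, and—implicitly—$\alpha$-renaming of the $\term{E}$-bound names) are \emph{exactly} enough to collapse presentations of a single partition and no more. Equivalently, I must verify that no residual scalar survives: the \emph{special} equation $\term{<x>;<x>;[x];[x]} = \term{<x>;[x]}$ (a closed loop equals the identity wire) and the \emph{extra} equation (the empty loop $\term{Ex.<x>;[x]}$, a cap-then-cup, equals $\term *$) must both be derivable in the typed equational theory, so that the partition-theoretic count of components is not refined by spurious loop-scalars. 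The \emph{special} and \emph{extra} laws are forced by $\upsilon$ and the strengthened typed $\eta$-axiom respectively; the key check is that these, under the orientation of the Reduction rules, leave no normal-form witness distinguishing a partition from itself. Once this ``presentations of a partition $=$ congruence classes of normal forms'' lemma is pinned down, faithfulness—and hence the equivalence—follows.
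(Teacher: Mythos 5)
Your proposal is correct and follows essentially the same route as the paper: the paper likewise takes the free functor induced by the soundness theorem, notes surjectivity on objects, and proves fullness and faithfulness by a normal-form argument, identifying the ($\eta$-expanded, typed) normal forms $\term{\exists !z.(<?x>;[!y])}$ with partitions of $[n]+[m]$, \emph{i.e.}\ with co-relations, invoking Fong--Zanasi's characterization of the free extra-hypergraph category as the category of co-relations between finite sets. The only differences are presentational: where you propose to rederive the partition classification and check compositionality by hand, the paper cites the known co-relation result outright, and it settles your variable-bookkeeping concern by working with $\eta$-expanded typed normal forms, under which shared variable names present exactly the blocks of a partition.
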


By restricting the variable policy of this fragment appropriately,  we similarly recover term languages for symmetric monoidal and Cartesian categories.  Let \textit{linear} RMC-terms to be those for which each occurring variable appears exactly once in a \textit{push} and once in a \textit{pop} to its left.  Then,  since linear RMC-terms containing no function symbols are exactly permutations,  we have the following. 
\begin{theorem}[Linear completeness]
The subcategory of linear,  sum-free,  star-free terms of $\termcategory{\emptyset}$ is equivalent to the free symetric monoidal category.  
\end{theorem}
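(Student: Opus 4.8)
The free SMC on one generating object (and no generating morphisms) is the permutation category $\mathbb{P}$: objects are the natural numbers, $\mathbb{P}(\type m,\type n)=S_n$ if $m=n$ and $\varnothing$ otherwise, the tensor is addition on objects and block-sum on permutations, and the symmetry is the block transposition. The plan is to produce an identity-on-objects isomorphism of SMCs between $\mathbb{P}$ and the linear, sum-free, star-free fragment $\termcategory{\emptyset}^{\mathrm{lin}}$, whose content is a bijection $S_n\cong\termcategory{\emptyset}^{\mathrm{lin}}(\type n,\type n)$ obtained from the Normal Forms corollary and soundness (Theorem~\ref{thm:typed-equations-sound}).

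First I would check that $\termcategory{\emptyset}^{\mathrm{lin}}$ is a symmetric monoidal subcategory of $\termcategory{\emptyset}$. The identity $\term*$ and the symmetry term $\term{\exists x{\kern1pt}y.<x>;<y>;[x];[y]}$ of Figure~\ref{fig:frobenius-generators} are linear, and linearity is preserved by $(\term;)$, by both actions of $\otimes$, and by every reduction rule available in this fragment: $\rw_\beta$ merges two variables into one (each still occurring once in a push and once in a pop) and $\rw_\upsilon$ deletes a matched push/pop pair $\term{[x]a;a<x>}$. Crucially, the Frobenius generators $\mu,\delta,\eta,\epsilon$ of Figure~\ref{fig:frobenius-generators} are all \emph{non}-linear (they duplicate, delete, or introduce a variable), so they are correctly absent, leaving symmetries as the only non-trivial structure. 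By the universal property of the free SMC there is then a unique strong symmetric monoidal functor $F:\mathbb{P}\to\termcategory{\emptyset}^{\mathrm{lin}}$ with $F(\type 1)=\type 1$, and it is essentially surjective since it is the identity on objects.

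It remains to show $F$ is fully faithful, for which I would construct the inverse bijection on hom-sets by reading a permutation off a normal form. Specialising the Normal Forms corollary to a linear, sum-free, function-symbol-free term $\term{M:p>q}$, the single summand has the shape $\term{E!x.<s_n>;\ldots;<s_1>;[t_1];\ldots;[t_m]}$ with each $\val{s_j},\val{t_l}$ a variable; linearity forces $\{s_1,\dots,s_n\}=\{t_1,\dots,t_m\}$ to be the whole variable set, so $n=m$ and the matching of pop- to push-positions is a permutation, while the typing rules force $q=p$ (the untouched bottom $p-n$ wires acting as identity). This yields a permutation $G(\term M)\in S_p$. Well-definedness on $(=)$-classes is where soundness enters: if $\term M=\term N$ then $\term M\sim\term N$, and evaluating either on a probe memory of $p$ \emph{distinct free variables} $\val{y_1}\dots\val{y_p}$ (the only values available over the empty signature) returns the corresponding permuted stack of the $\val{y_i}$; equal terms thus give equal output memories and hence the same permutation, and since distinct permutations rearrange distinct variables differently, $G$ is a complete invariant.

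Finally I would verify $F$ and $G$ are mutually inverse. The composite $G\circ F=\mathrm{id}_{S_n}$ is immediate, since $F(\pi)$ is a composite of symmetry terms whose operational effect is exactly the rearrangement $\pi$, which $G$ reads back. The reverse composite $F\circ G=\mathrm{id}$ is the real obstacle: given a linear term $\term M$ with $G(\term M)=\pi$, I must show $\term M=F(\pi)$, i.e.\ that \emph{any} two linear function-free terms realising the same permutation are provably equal. Reduction alone does not suffice, because $\rw$ omits the $\eta$-rule and so leaves distinct normal forms such as $\term{Ex.<x>;[x]}$ and $\term*$ for the identity; the argument is completed by first reducing to the pop-then-push normal form of the corollary and then applying the typed $\eta$-law to eliminate the surviving matched pop/push pairs, after which the canonical form depends only on $\pi$ (up to $\alpha$-renaming and the $=_\nu$-commutation of the $\term{Ex}$ binders). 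Monoidality and functoriality come for free from the universal property used to define $F$, so this yields the desired equivalence of SMCs.
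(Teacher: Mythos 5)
Your proof is correct and takes essentially the same route as the paper, whose own argument is exactly the terse normal-form one you elaborate: linear, function-symbol-free terms normalize to permutations, and the free SMC on one object is the permutation category. The only presentational difference is that where you $\eta$-contract surviving matched pop/push pairs (correctly noting that untyped reduction alone leaves, e.g., $\term{Ex.<x>;[x]}$ and $\term{*}$ distinct), the paper resolves the same subtlety in its completeness proofs by working with $\eta$-\emph{expanded} typed normal forms, which amounts to the same use of the strengthened typed $\eta$-law.
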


We similarly take the \textit{Cartesian} RMC-terms to be those for which each bound variable occurs exactly once in a \textit{pop},   possibly in many \textit{pushes} to its right,  {and} in which no \textit{pop} contains anything other than a variable. 
Observe further that an RMC-signature $\Sigma$ is just an algebraic signature with no equations,  which can be used to generate a free Cartesian category as its Lawvere theory.  Then a simple normal form argument proves the following. 
\begin{theorem}[Cartesian completeness]
The subcategory of sum-free,  star-free terms of $\termcategory{\Sigma}$ is equivalent to the free Cartesian category generated over $\Sigma$. 
\end{theorem}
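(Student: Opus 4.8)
The plan is to exhibit an isomorphism of categories between the Cartesian subcategory of $\termcategory{\Sigma}$ and the Lawvere theory $\mathsf{Law}(\Sigma)$, which the paper identifies as the free Cartesian category over $\Sigma$; an isomorphism is in particular an equivalence. Recall that in $\mathsf{Law}(\Sigma)$ the objects are natural numbers, a morphism $\type n \to \type m$ is an $m$-tuple $(\val{t_1},\dots,\val{t_m})$ of $\Sigma$-terms over variables $\val{x_1},\dots,\val{x_n}$, identities are tuples of distinct variables, and composition is simultaneous substitution. I would take the comparison functor $F$ to be the identity on objects, and on morphisms to send a term $\term{M:n>m}$ to the tuple read off from its Cartesian normal form.

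First I would pin down that normal form. Specializing the Normal forms corollary to the sum-free case yields a single summand $\term{Ex_1\ldots Ex_k.<s_n>;\ldots;<s_1>;[t_1];\ldots;[t_m]}$, and imposing the Cartesian variable policy — each bound variable popped exactly once, each pop containing only a variable — forces the $\val{s_j}$ to be $n$ distinct fresh variables, so that $k=n$. Using the typed strengthened $\eta$-axiom to fully expand, every Cartesian term $\term{M:n>m}$ equals a unique form
\[
\term{Ex_1\ldots Ex_n.<x_n>;\ldots;<x_1>;[t_1];\ldots;[t_m]}
\]
with $\val{t_1},\dots,\val{t_m}$ algebraic terms over $\{\val{x_1},\dots,\val{x_n}\}$, uniqueness being up to the positional naming of the $\val{x_i}$ (using permutation of $\term{Ex}$ and the AC-quotient noted after the corollary). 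The tuple $(\val{t_1},\dots,\val{t_m})$ is exactly a morphism $\type n\to\type m$ of $\mathsf{Law}(\Sigma)$, so $F$ is a bijection on each homset.

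Then I would verify functoriality. The identity $\term{*:n>n}$ $\eta$-expands to the form above with $\val{t_i}=\val{x_i}$, whose image is the identity tuple. Composition is where the Cartesian character does the work: composing the normal forms of $\term{M:n>m}$ and $\term{N:m>p}$, the $m$ pushes of $\term M$ meet the $m$ variable-pops of $\term N$, and because each such pop holds only a fresh bound variable $\val{y_i}$, every $\beta$-redex fires as plain substitution $\term{\{s_i/y_i\}}$ with no failure and no occurs-check, so unification degenerates to substitution. The composite normalizes to $\term{Ex_1\ldots Ex_n.<x_n>;\ldots;<x_1>;[t_1\{\vec s/\vec y\}];\ldots;[t_p\{\vec s/\vec y\}]}$, whose image is precisely the substituted tuple, i.e.\ composition in $\mathsf{Law}(\Sigma)$. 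This simultaneously shows the Cartesian terms are closed under composition and contain the identities, hence form a subcategory. Being identity-on-objects and a homset bijection preserving identities and composition, $F$ is an isomorphism of categories; it transports the finite-product structure of $\mathsf{Law}(\Sigma)$, which is the free Cartesian category over $\Sigma$ by definition, so the subcategory is too.

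I expect the main obstacle to be pinning down the normal form cleanly: the Normal forms corollary is stated for the untyped reduction, which omits $\eta$, so I must argue separately that the typed $\eta$-axiom permits $\eta$-expansion to a canonical form with exactly $n$ pops and $m$ pushes, and that this form is genuinely unique rather than merely reduction-normal. The second delicate point, routine once set up, is justifying that the freshness of $\term N$'s bound pop-variables guarantees every unification step during composition is a clean substitution, so that neither $\term 0$ nor the occurs-check rules are ever triggered; this is exactly what distinguishes the Cartesian fragment and yields the free, equation-free product structure of the Lawvere theory.
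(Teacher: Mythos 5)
Your proposal is correct and is essentially the paper's own argument: the theorem is proved by exactly the ``simple normal form argument'' the paper invokes, identifying ($\eta$-expanded, typed) Cartesian normal forms $\term{Ex_1\ldots Ex_n.<x_n>;\ldots;<x_1>;[t_1];\ldots;[t_m]}$ with tuples of algebraic terms --- morphisms of the Lawvere theory --- with $\beta$-reduction realizing composition as simultaneous substitution (no occurs-check or clash failure, since the Cartesian variable policy puts every push of a bound variable to the right of its unique pop), precisely mirroring the appendix proof of hypergraph completeness, where typed $\eta$-expanded normal forms are matched bijectively with co-relations. The uniqueness obstacle you flag closes in the standard way: by soundness for typed machine equivalence, two normal forms with distinct tuples are separated operationally (evaluate both on an input stack of $n$ distinct fresh variables, yielding the instantiated tuples as output stacks), so equated terms have equal normal-form tuples and your homset bijection is well defined.
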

Consideration of the case where the signature includes {an equational theory} is left for future work: although it is unproblematic in the Cartesian case,  the general case would require working with unification \textit{modulo a theory}.

We conclude this section by giving some simple examples of RMC-terms and their corresponding string diagrams.  
Following existing work on string diagrams for (the positive fragment of) Tarski's calculus of relations \cite{tarski:1941,Bonchi:graphical-conjunctive-queries,bonchi:tape-diagrams, zanasi:logic-programming,  bonchi_graphical_2019},  we give terms and diagrams whose anticipated relational denotation is given by \emph{top} (\emph{i.e.},  everything is related), \emph{intersection}, and \emph{converse}.  The remaining operations of the calculus of relations are already included in the grammar of regular expressions (see subsection \ref{subsec:regex}). 
Top ($\top$) and intersection ($\cap$) are encoded as below.  
\[
	\tikzfig{figures/intersection}
\]
We can further characterize the converse of a typed RMC-term as follows,  first noting that 
the compositions $\eta ; \delta$ and $\mu ; \epsilon$ form the cups and caps of a (self-dual) compact closed structure.  By $\beta$-reduction we have that $\eta ; \delta = \term{E!x.[!x];[!x]}$,  and dually,  as shown below. 
\[
\begin{array}{c@{\qquad\qquad}c}
    \begin{tikzpicture}
    	\begin{pgfonlayer}{nodelayer}
    		\node [style=none] (5) at (5, 0) {};
    		\node [style=vertex] (6) at (4, 0) {};
    		\node [style=vertex] (9) at (5, 0) {};
    		\node [style=none] (10) at (4, 0) {};
    		\node [style=none] (11) at (6, 0.5) {};
    		\node [style=none] (12) at (6, -0.5) {};
    	\end{pgfonlayer}
    	\begin{pgfonlayer}{edgelayer}
    		\draw (6) to (5.center);
    		\draw [in=45, out=180] (11.center) to (9);
    		\draw [in=180, out=-45] (9) to (12.center);
    		\draw (9) to (10.center);
    	\end{pgfonlayer}
    \end{tikzpicture}
~=~
    \begin{tikzpicture}
    	\begin{pgfonlayer}{nodelayer}
    		\node [style=none] (18) at (12.25, 0.5) {};
    		\node [style=none] (19) at (12.25, -0.5) {};
    		\node [style=none] (20) at (11.5, 0) {};
    		\node [style=none] (23) at (12.75, 0.5) {};
    		\node [style=none] (24) at (12.75, -0.5) {};
    	\end{pgfonlayer}
    	\begin{pgfonlayer}{edgelayer}
    		\draw [in=90, out=180] (18.center) to (20.center);
    		\draw [in=-180, out=-90] (20.center) to (19.center);
    		\draw (23.center) to (18.center);
    		\draw (19.center) to (24.center);
    	\end{pgfonlayer}
    \end{tikzpicture}
&
    \begin{tikzpicture}
    	\begin{pgfonlayer}{nodelayer}
    		\node [style=none] (5) at (-5, 0) {};
    		\node [style=vertex] (6) at (-4, 0) {};
    		\node [style=vertex] (9) at (-5, 0) {};
    		\node [style=none] (10) at (-4, 0) {};
    		\node [style=none] (11) at (-6, 0.5) {};
    		\node [style=none] (12) at (-6, -0.5) {};
    	\end{pgfonlayer}
    	\begin{pgfonlayer}{edgelayer}
    		\draw (6) to (5.center);
    		\draw [in=135, out=0] (11.center) to (9);
    		\draw [in=0, out=-135] (9) to (12.center);
    		\draw (9) to (10.center);
    	\end{pgfonlayer}
    \end{tikzpicture}
~=~
    \begin{tikzpicture}
    	\begin{pgfonlayer}{nodelayer}
    		\node [style=none] (18) at (-12.25, 0.5) {};
    		\node [style=none] (19) at (-12.25, -0.5) {};
    		\node [style=none] (20) at (-11.5, 0) {};
    		\node [style=none] (23) at (-12.75, 0.5) {};
    		\node [style=none] (24) at (-12.75, -0.5) {};
    	\end{pgfonlayer}
    	\begin{pgfonlayer}{edgelayer}
    		\draw [in=90, out=0] (18.center) to (20.center);
    		\draw [in=0, out=-90] (20.center) to (19.center);
    		\draw (23.center) to (18.center);
    		\draw (19.center) to (24.center);
    	\end{pgfonlayer}
    \end{tikzpicture}
\\ \\
	\term{Ex.[x];[x]: 0 > 2}
& 	\term{Ex.<x>;<x>: 2 > 0}
\end{array}
\]
Relational converse can now be internalized in the following way. 
\begin{proposition}[Typed Duality]\label{prop:algebraic-duality}
For closed $\term{M: m > n}$ we have $\term{M}^\dagger ~=~ \term{\exists !x{\kern1pt}!y.<?x>;[!y];M;<?x>;[!y] : n > m}$,  where $|\val{!x}| = n$, $|\val{!y}| = m$.
\end{proposition}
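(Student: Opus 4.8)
The plan is to read the displayed term as the categorical \emph{transpose} of $\term M$ with respect to the self-dual compact closed structure furnished by the cups $\term{Ex.[x];[x]}=\eta;\delta$ and caps $\term{Ex.<x>;<x>}=\mu;\epsilon$, and then to show that this transpose coincides with the syntactic converse $(-)^\dagger$. Throughout I would reason in the (typed) equational theory, using that the cups and caps satisfy the snake/yanking equations -- this is what it means for them to ``form a (self-dual) compact closed structure'', and it is derivable from the Frobenius, special and extra equations together with $\beta$, $\eta$ and $\nu$.

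First I would unfold the displayed term into compact-closed combinators. Writing $\term{[!y]}$ and $\term{<?x>}$ for the $m$- and $n$-fold push and pop, the $\nu$-axioms let me split the single binder $\term{\exists!x\,!y}$ and migrate each $\term{Ex_i}$ and $\term{Ey_j}$ next to the occurrences it governs. The sub-phrase carrying $\term{\exists!y}$ across the two occurrences of $\term{[!y]}$ is then exactly the $m$-fold cup feeding one copy of the fresh tuple $\val{!y}$ into $\term M$ and returning the other copy as output; dually, the two occurrences of $\term{<?x>}$ under $\term{\exists!x}$ form the $n$-fold cap that bends the input wires around and unifies them with the output of $\term M$. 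Collecting these, the displayed term equals the standard bending expression $(\mathsf{cup}_m\otimes\id_n);(\id_m\otimes\term M\otimes\id_n);(\id_m\otimes\mathsf{cap}_n)$, i.e.\ the transpose $\term M^{*}\colon \type{n>m}$, where the $n$-fold cup and cap are built from the object-$\type 1$ generators via the coherence conditions of Definition~\ref{def:hypergraph-category}.

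It then remains to prove $\term M^{*}=\term M^\dagger$. Both operations send a term of type $\type{m>n}$ to one of type $\type{n>m}$, both reverse sequential composition, and, using the symmetry, both are compatible with $\otimes$; hence each is determined by its action on the Frobenius generators and the symmetry, since by Hypergraph completeness (Theorem~\ref{thm:completeness}) every term of the core fragment is a composite and tensor of these. On generators the two agree: from the definition of $(-)^\dagger$ one reads off immediately that $\delta^\dagger=\mu$, $\epsilon^\dagger=\eta$, $\eta^\dagger=\epsilon$, $\mu^\dagger=\delta$ and $\sym^\dagger=\sym$, which are precisely the transposes of these generators in the self-dual compact closed structure. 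Thus the two contravariant operations coincide on the whole fragment; the extension to sums is immediate, as both $(-)^\dagger$ and the bending construction distribute over $\term{+}$ in the $\textsf{CMon}$-enriched structure.

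I expect the main obstacle to be the precise identification in the second step: turning the flat, variable-named term $\term{\exists!x\,!y.<?x>;[!y];M;<?x>;[!y]}$ into the planar bending combinator with all reversals and tensor orderings correct, and in particular generalising the single-wire cups and caps to their $n$-fold versions so that the snake equations may be applied wire-by-wire. A secondary subtlety is scope management under $\term{\exists}$: the $\nu$-equations must be applied so that each freshly generated variable remains free for $\term M$ and for the trailing push, matching the global-freshness convention of the operational semantics. Finally, if one wishes the statement to hold beyond the core fragment (for function symbols, or Kleene star), one must check the corresponding additional generators -- most directly by a computation on the span-form normal forms of the Normal Forms corollary, where a single $\term{\exists!z.<?s>;[!t]}$ can be dualised by reduction and matched against its bending, with sums and (by continuity) stars handled by distributivity.
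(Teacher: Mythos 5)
Your argument is correct in substance but takes a genuinely different route from the paper. The paper proves Proposition~\ref{prop:algebraic-duality} by a direct structural induction on the typing derivation of $\term{M}$, carrying out explicit equational calculations case by case ($\term{*}$, $\term{M_1;M_2}$, $\term{0}$, $\term{M_1+M_2}$, $\term{[t]}$, $\term{<t>}$); the key auxiliary result is the Interchange Lemma~\ref{lem:interchange}, used in the composition case to slide the bent copy of $\term{M_1^\dagger}$ past $\term{M_2^\dagger}$. Because the push/pop cases are done for arbitrary algebraic values $\val t$, the paper's induction covers any signature $\Sigma$ and sums for free. You instead identify the displayed term with the compact-closed transpose and then show transpose $=$ dagger by checking the Frobenius generators and the symmetry, appealing to completeness (Theorem~\ref{thm:completeness}) to see that these generate the core fragment. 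This is conceptually cleaner and makes the diagrammatic content explicit, but it buys less: Theorem~\ref{thm:completeness} is stated only for $\termcategory{\emptyset}$, so your main line covers only the function-symbol-free, sum-free, star-free fragment, and the general case is pushed into your closing normal-form sketch (which is essentially the paper's calculation, re-derived; note that bending a normal form $\term{E!x.<?s>;[!t]}$ needs only the variable-sided $\beta$-rules, since the fresh $\val{!y}$ occur on the variable side of every redex, so no genuine unification arises). There is no circularity in invoking completeness, since its proof rests on normal forms rather than on this proposition, and your implicit reliance on bifunctoriality of $\otimes$ uses the same interchange machinery the paper uses directly (via Theorem~\ref{prop:hypergraph-category}).

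One step as written does not go through: ``stars handled by continuity.'' The equational theory has no continuity principle; Kleene star is governed only by the unfolding equations and the two Horn induction axioms. (In mitigation, the paper's own induction also has no $\term{M^*}$ or $\term{Ex.M}$ case, so you are not worse off; and the star case is in fact recoverable without continuity. Writing $T$ for the bending operation, from $T(N^{\term{*}})=\term{*}+T(N^{\term{*}})\term{;}N^\dagger$ the right-induction axiom yields $(N^\dagger)^{\term{*}}\leq T(N^{\term{*}})$; instantiating this at $N^\dagger$ and applying $T$, which is monotone and involutive by the snake equations, gives the converse inequality.) Also be explicit that $(-)^\dagger$ descends to the quotient category --- i.e.\ that $\term M=\term N$ implies $\term{M}^\dagger=\term{N}^\dagger$, which holds because every axiom of the equational theory is closed under duality --- since your generator-agreement argument compares two operations on $\termcategory{\Sigma}$, not on raw syntax.
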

\[
	\tikzfig{figures/typed-duality}
\]  
\section{Relational Semantics}\label{sec:relational-semantics}
In this section,  we investigate in detail the relational semantics of the typed RMC.  
Although the denotational semantics is very close to the input/output behaviour of the big-step semantics,  there are some differences: forgetting the output substitution,  and the  interpretation of $\term{Ex.M}$ and $\term{<t>}$. The former is interpreted semantically as the union of all possible interpretations of $\val{x}$ as some closed term $\val{t}$,  leaving the latter to become a simple equality check predicate; there is no need for unification since free variables are dealt with by (possibly infinite) non-determinism of the interpretation of $\term{Ex}$. 

For simplicity,  we again work with a single location,  but the results generalize easily.  
Now,  let us recall the extra-hypergraph structure of the category of sets and relations. 
\begin{definition}
The category \relcat~ of sets and relations forms an extra-hypergraph category where the tensor product is given by the Cartesian product,  the unit given by the singleton set $\{\e\}$,  and a Frobenius monoid on each object $A$ given by $\delta_A =   \{ (x, (x,x)) ~|~ x \in A\}$, $\epsilon_A = \{ (x , \e) ~|~ x \in A\}$ and with $\mu_A$ and $\eta_A$ by their respective converses. 
\end{definition}

The relational semantics of typed RMC-terms can easily be read off their normal forms: for example,  compare the relational semantics of $\delta, \epsilon, \mu$ and $\eta$ with the RMC-terms interpreting the same morphisms in Figure \ref{fig:frobenius-generators}.  Confirming this,  we define an extra-hypergraph functor $\rel{-}: \termcategory{\Sigma} \to \relcat$.  
Let $T0$ be the set of \emph{closed} algebraic terms built over $\Sigma$,  \emph{i.e.} containing no variables.  
To interpret open terms,  we define a valuation $v: \textsf{Var} \to T0$ as a partial function on the set of variables $\textsf{Var}$; the valuation $v\{\val{x} \leftarrow \val t\}$ assigns $\val t$ to $\val x$ and otherwise behaves as $v$.  We elide the isomorphism between closed algebraic terms $\val t$ in the syntax of RMC-terms and $\val t \in T0$. 
We extend the action of $v$ to algebraic terms with variables $\Gamma$ so that $v(\val{f(t_1, \ldots, t_n)}) = \val{f(}v(\val {t_1})\val{,\ldots,}v(\val{t_n})\val{)}$. 
In the following,  we consider relations $A \to B$ as functions $A \to \mathcal{P}(B)$,  where $\mathcal{P}$ is the powerset functor,  and we will write $f^n: X \to X$ for the n-fold composition of a function $f: X \to X$ and $X^n$ for the n-fold product of a set $X$. 
\begin{definition}[Relational semantics] 
The \emph{relational semantics} of the typed RMC is defined inductively on types by $\rel{\type n} = T0^n$.
We define the relation $\rel{\term{M: m > n}}_v$ inductively on the type derivation of terms as follows,  and may abbreviate this as $\relterm{M}_v$.  
\begin{align*}
	\rel{\term{*: n > n}}_v(S) &= [ S ]  \\
	\rel{ \term{M;N: k > n}}_v(S) &= [ U ~|~ T \in \relterm{M}_v(S) ,~ U\in \relterm{N}_v(T) ] \\
	\rel{ \term{0: m > n}}_v(S) &= [\,] \\
	\rel{ \term{M+N: m > n}}_v(S) &= \relterm{M}_v(S) \sqcup \relterm{N}_v(S) \\
	\rel{ \term{[t]: n > n+1}}_v(S) &= [ S \, v(\val t) ] \\
	\rel{ \term{<t>: n+1 > n}}_v(S\,  \val u) &= \begin{cases} 
		[ S ] & \text{if}~ v(\val t) = \val u \\
		[ ] & \text{otherwise}
		\end{cases} \\
	\rel{ \term{Ex.M: m > n}}_v(S) &= \bigsqcup_{\val t \in T0}\rel{ \term{M}}_{v\{\val{x} \leftarrow \val t\}}(S) \\
	\rel{ \term{M^*: n > n}}_v(S) &= \bigsqcup_{n \in \mathbb N} \rel{  \term{M}}_v^n(S)
\end{align*}
Given a closed term $\term{M}$,  its relational semantics $\rel{\term{M}}$ (written without a subscript) is given by $\lfloor\, \rel{\term{M}}_\e \rfloor$ ,  where $\e$ is the empty valuation.
\end{definition}


From the following lemma,  soundness of the semantics follows. 
\begin{lemma}[Substitution]\label{lem:substitution}
For any $\term{M: m > n}$,  we have 
\begin{align*}
	\rel{ \term{M: m > n}}_{v\{ \val x \leftarrow \val t\}}  =  
	\rel{ \term{\{t/x\}M: m > n}}_{v} ~.
\end{align*}
\end{lemma}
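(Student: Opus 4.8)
The plan is to prove the statement by induction on the type derivation of $\term{M}$, which, since the typing rules are syntax-directed, amounts to structural induction on $\term M$. Throughout I treat $\val t$ as a \emph{closed} algebraic term, i.e.\ an element of $T0$; this is exactly the regime in which the lemma is applied, namely inside the $\term{Ex.M}$ clause of the relational semantics where $\val t$ ranges over $T0$, and closedness is what makes the valuation update $v\{\val x \leftarrow \val t\}$ well-typed.

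First I would isolate a sub-lemma at the level of algebraic terms: for every closed $\val t \in T0$ and every algebraic term $\val s$,
\[
	v\{\val x \leftarrow \val t\}(\val s) ~=~ v(\{t/x\}\val s)~.
\]
This is a routine induction on $\val s$: when $\val s = \val x$ both sides equal $\val t$, using that $v(\val t) = \val t$ because $\val t$ is closed; when $\val s$ is a variable distinct from $\val x$ both sides equal $v(\val s)$; and for $\val s = \val{f(t_1,\dots,t_n)}$ the claim follows from the componentwise induction hypotheses, since both the valuation action and substitution commute with function symbols.

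With this in hand the main induction is largely mechanical. The cases $\term*$ and $\term 0$ are immediate, as neither term mentions $\val x$ and their semantics is independent of the valuation. For $\term{[s]}$ and $\term{<s>}$ the two sides differ only in the evaluated value $v\{\val x\leftarrow\val t\}(\val s)$ versus $v(\{t/x\}\val s)$, which the sub-lemma identifies. The constructors $\term{M;N}$, $\term{M+N}$ and $\term{M^*}$ all go through directly from the induction hypotheses, since substitution is a homomorphism for them ($\{t/x\}(\term{M;N}) = \{t/x\}\term M;\{t/x\}\term N$, and likewise for $+$ and $(-)^*$) and the corresponding semantic clauses are defined compositionally from the relational denotations of the subterms; for $\term{M^*}$ one notes that the $n$-fold relational composition appearing in its clause depends only on $\rel{\term M}_v$, so the hypothesis for $\term M$ transports through each iterate and through the union over $n$.

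The main obstacle, and the only case requiring genuine care, is the binder $\term{Ey.M}$. Here I would first invoke the freshness convention to assume $\val y \neq \val x$ and $\val y \notin \fv{\val t}$ (the latter automatic since $\val t$ is closed), so that $\{t/x\}(\term{Ey.M}) = \term{Ey.}\{t/x\}\term M$. Unfolding both sides via the $\term{Ex.M}$ clause, I must establish
\[
	\bigsqcup_{\val u \in T0} \rel{\term{M}}_{v\{\val x\leftarrow\val t\}\{\val y\leftarrow\val u\}}
	~=~
	\bigsqcup_{\val u \in T0} \rel{\{t/x\}\term{M}}_{v\{\val y\leftarrow\val u\}}~.
\]
The induction hypothesis applied to $\term M$ under the valuation $v\{\val y\leftarrow\val u\}$ rewrites each summand on the right as $\rel{\term{M}}_{v\{\val y\leftarrow\val u\}\{\val x\leftarrow\val t\}}$, and because $\val x \neq \val y$ the two updates commute, $v\{\val y\leftarrow\val u\}\{\val x\leftarrow\val t\} = v\{\val x\leftarrow\val t\}\{\val y\leftarrow\val u\}$, matching the left-hand summands termwise. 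The subtlety is purely the bookkeeping of binders and valuation updates: ensuring that the substituted $\val t$ does not interact with the bound $\val y$ and that updates at distinct variables are independent. Once these freshness side-conditions are made explicit, the equality of the indexed unions is immediate, completing the induction.
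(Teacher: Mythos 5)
Your proof is correct and follows essentially the same route as the paper's: induction on the type derivation of $\term{M}$, with the $\term{[s]}$ and $\term{<s>}$ cases reduced to the identity $v\{\val x \leftarrow \val t\}(\val s) = v(\val{\{t/x\}s})$ (which the paper uses implicitly) and the binder case handled by freshness of $\val y$ and commutation of the valuation updates, exactly as in the paper's $\term{Ey.M'}$ case. If anything you are slightly more thorough than the paper: you isolate the valuation--substitution sub-lemma explicitly, correctly observe that $\val t$ must be closed for $v\{\val x \leftarrow \val t\}$ to be a valuation, and cover the $\term{M^*}$ case, which the paper's case analysis omits.
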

\begin{proposition}[Hypergraph functor] \label{denotation-well-defined}
The relational semantics $\rel{-}: \termcategory{\Sigma} \to \relcat$ is a well-defined extra-hypergraph functor. 
\end{proposition}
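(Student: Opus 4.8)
The plan is to verify the three defining conditions of an extra-hypergraph functor in turn: that $\rel{-}$ is well-defined on the equivalence classes of terms that constitute the morphisms of $\termcategory{\Sigma}$, that it is a functor, and that it is a strong symmetric monoidal functor preserving the Frobenius structure. Several of these are immediate from the definition of the semantics. On objects, $\rel{\type m \otimes \type n} = \rel{\type{m+n}} = T0^{m+n}$ is canonically isomorphic to $T0^m \times T0^n = \rel{\type m} \otimes \rel{\type n}$, and $\rel{\type 0} = T0^0 = \{\e\}$ is the monoidal unit of $\relcat$, so the comparison isomorphisms are identities and the functor is strict monoidal on objects. Functoriality is read directly off the clauses for $\term*$ and $\term{M;N}$: the former gives $\rel{\term{*:n>n}}_v(S) = [S]$, the identity relation, while the latter is exactly relational composition, so that $\rel{\term{M;N}} = \rel{\term M};\rel{\term N}$ (passing to the underlying set via $\lfloor-\rfloor$ commutes with relational composition, since the support of a composite is the composite of supports). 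Since the typing rules are syntax-directed, the derivation of a fixed judgement $\term{M:m_A>n_A}$ is essentially unique, so there is no ambiguity in defining $\rel{-}$ by induction on derivations.

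The main obstacle is \textbf{well-definedness}, i.e.\ soundness of the semantics with respect to the typed equational theory: since morphisms are terms modulo $(=)$, I must show $\term M = \term{N:m_A>n_A}$ implies $\rel{\term M} = \rel{\term N}$. As $\rel{-}_v$ is compositional, it suffices to validate each generating axiom for every valuation $v$. The two $\beta$-axioms are the crux and are precisely where the preceding Substitution Lemma is used: in $\term{Ex.N;[t]a;a<x>;M}$ the interpretation of $\term{Ex}$ takes the union over all closed $\val u \in T0$ of the interpretation with $\val x \mapsto \val u$, while the adjacent $\term{[t]a;a<x>}$ forces, via the equality check in the $\term{a<x>}$ clause, the value $\val u$ to equal $v(\val t)$; the infinite union thus collapses to the single summand $\val x \mapsto v(\val t)$, which by the Substitution Lemma equals $\rel{\term{\{t/x\}(N;M)}}_v$. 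The remaining $\nu$-axioms follow by elementary manipulation of the union over $T0$ together with the freshness side-conditions; the unification axioms $(\upsilon,\omega)$ follow by case analysis on the equality check in the pop clause, with the occurs-check cases yielding the empty relation; the Kleene-algebra axioms hold because $\relcat$ is $\textsf{CMon}$-enriched with the star given by the stated countable union; and the strengthened $\eta$-axiom uses the typing hypothesis to guarantee the relevant stack entry is present, so that popping and then repushing the same variable is the identity on a non-empty stack.

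With soundness established, it remains to check the monoidal action on morphisms, the symmetry, and the Frobenius generators, all by direct computation against Figure~\ref{fig:frobenius-generators}. For the right action $\type k \otimes \term M$ (stack expansion) and the left action $\term M \otimes \type k$ (lifting the lower $\type k$ entries as fresh variables, running $\term M$, then restoring them), unfolding the semantics --- invoking the Substitution Lemma again for the variables reinstated in the left action --- yields precisely $\type k \times \rel{\term M}$ and $\rel{\term M} \times \type k$. The symmetry term $\term{\exists x{\kern1pt}y.<x>;<y>;[x];[y]}$ computes to the swap relation on $T0^2$, matching $\sym$ in $\relcat$.

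Finally, for the Frobenius structure it suffices --- by the compatibility conditions of Definition~\ref{def:hypergraph-category}, which both categories use to build the structure at $\type n$ from that at $\type 1$ --- to verify the four generators at $\type 1$. Computing, for instance, $\rel{\term{\exists x.<x>;[x];[x]}}$ gives $\{(\val u,(\val u,\val u)) \mid \val u \in T0\} = \delta$, and analogously $\rel{\term{\exists x.<x>}} = \epsilon$, $\rel{\term{\exists x.<x>;<x>;[x]}} = \mu$, and $\rel{\term{\exists x.[x]}} = \eta$, each matching the relational Frobenius monoid of the Definition; preservation of the coherences and of the unit condition $\eta_I = \id_I = \epsilon_I$ is then automatic. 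I expect the soundness step --- and within it the collapse of the infinite union over $T0$ in the $\beta$-case via the Substitution Lemma --- to be the only part requiring genuine care, the rest being routine unfolding.
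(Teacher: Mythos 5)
Your proposal is correct and follows essentially the same route as the paper's own proof: well-definedness is reduced to validating each generating axiom of the typed equational theory under every valuation, with the Substitution Lemma (Lemma~\ref{lem:substitution}) collapsing the union over $T0$ in the $\beta$- and $\nu$-cases exactly as you describe, the unification and occurs-check axioms handled by case analysis on the equality check in the pop clause, and the Kleene-algebra cases treated as routine (the paper refers them to the proof of Theorem~\ref{thm:equations-sound}). The functoriality, monoidal-action, symmetry, and Frobenius-preservation checks you sketch by direct computation against Figure~\ref{fig:frobenius-generators} are precisely what the paper dismisses as ``an easy exercise,'' so the only substantive point of your write-up not mirrored in the paper is your (slightly overstated, but harmless) appeal to uniqueness of typing derivations, which is anyway immaterial since the semantic clauses depend only on the term and input stack.
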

In fact,  if we additionally consider $\termcategory{\Sigma}$ and $\relcat$ as \textsf{CMon}-enriched categories,  with the monoid on homsets in $\relcat$ given by the union of relations,  it is easy to see the relational semantics functor is also \textsf{CMon}-enriched functor.

\section{Encoding computational models, II}
\label{sec:encoding-models-ii}

We conclude with further encodings of computational models, demonstrating the use of locations, an innovation of the Functional Machine Calculus \cite{Heijltjes:FMCI, Heijltjes:FMCII} that may capture stateful behaviour.  

\subsection{Guarded Command Language}

In 1975 Dijkstra introduced the \textit{Guarded Command Language (GCL)}: a simple non-deterministic imperative programming language \cite{dijkstra-GCL-1975}. 

The eponymous \textit{guarded command} $B \to S$ is given by a sequence of statements $S$ to be evaluated when the \textit{guard} Boolean expression $B$  is true.  Statements are atomic actions --- here,  we consider writing to a memory cell ($a := N$) --- or are sets of guarded commands wrapped in \textit{conditional} (\textsf{if}) or \textit{iteration} (\textsf{do}) keywords.  The former construct executes (non-deterministically) one of the statements whose guard is true,  and aborts execution if none are.  The iteration construct successively executes (non-deterministically) one of the statements whose guard is true,  and terminates successfully when none are.  Reading from a memory cell $a$ is denoted $!a$.  
The full grammar of the GCL is given below: 
\[
\begin{array}{@{}l@{\quad}l@{~}l@{}}	
	\text{Boolean Expressions:} & B & \Coloneqq \textsf{tt} \mid \textsf{ff} \mid \ldots\\
	\text{Numeric Expressions:} & N & \Coloneqq n\mid \,!a \mid \ldots \\
	\text{Guarded Commands:}    & C & \Coloneqq \textsf{abort} \mid C_1 \square\,  C_2 \mid B \rightarrow S  \\
	\text{Statements:}          & S & \Coloneqq \textsf{skip} \mid S_1;S_2 \mid \textsf{if}~C \mid \textsf{do} ~C \mid a \coloneqq N
\end{array}
\]
where $n \in \mathbb Z$ and $a$ ranges over a set of variable names. \!\footnote{Note,  we make several simplifications with regards to the original grammar.  In particular,  guarded commands and lists of statements are allowed to be empty.  We consider only a meagre set of expressions,  but, with a little more work,  \textit{e.g.} operators such as addition or comparison may be accounted for.  }
Input and output is performed by manipulation of memory cells,  which we take to hold integers --- so,  in standard imperative style,  programs are called for their side-effects. 
%

The embedding of the GCL in the RMC is given below. 
\[
\begin{aligned}
	\embed{ \textsf{tt}    } &= \term{[\top]}      & \embed{!a}                  &= \term{\exists x. a<x>;[x]a;[x]}
\\  \embed{ \textsf{ff}    } &= \term{[\bot]}      & \embed{ a \coloneqq N     } &= \embed{N} \term{; Exy. a<x>;<y>;[y]a}
\\  \embed{          n     } &= \term{[n]}         & \embed{ B \rightarrow S   } &= \embed{B} \term{;<\top>;} \embed{S}
\\ 	\embed{ \textsf{abort} } &= \term{0}           & \embed{ C_1\,\square\,C_2 } &= \embed{C_1}\,\term{+}\,\embed{C_2}
\\  \embed{ \textsf{skip}  } &= \term{*}           & \embed{ S_1 ; S_2         } &= \embed{S_1}  \term{;}  \embed{S_2}
\\	\embed{ \textsf{if}~C  } &= \embed{C}          & \embed{ \textsf{do}~C     } &= \embed{C}\term{^*}	
\end{aligned}
\]
Encodings of global memory cells are familiar from the FMC \cite{Heijltjes:FMCI}: locations model mutable variables, whose value is held as the single item on the corresponding stack.  The result of an expression,  including reading from a memory cell $(!a)$,  is pushed to the main (unnamed) stack; updating a memory cell $(a := N)$ replaces the value stored at $a$ with the value pushed to the main stack by $\llbracket N \rrbracket$. 

\begin{proposition}[The RMC implements GCL]
If the GCL successfully executes a guarded command $C$ on input memory $S_A$, with output memory $T_A$,  then there exists an RMC run
\[
\run{S_A}{`{{\color{black}\llbracket \mathcal{C}\rrbracket}}}{\e}{T_A}{*}{\e}~. 
\]
\end{proposition}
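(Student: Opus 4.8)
The plan is to prove the statement by induction on the operational semantics of the GCL (given informally above), relating a successful GCL execution to the \emph{existence} of an RMC run. First I would fix the representation of memories: a GCL state assigning an integer to each cell $a$ is represented by an RMC memory $S_A$ in which each location $a$ holds the single value stored at $a$, with the main (unnamed) stack empty. Statement and command execution preserve the invariant that the main stack is empty on entry and exit, whereas expression evaluation temporarily leaves a single value on it. Formally I would reconstruct a big-step judgement $\langle S,S_A\rangle\Downarrow T_A$ for statements and $\langle C,S_A\rangle\Downarrow T_A$ for guarded commands, read off the informal description, and prove the claim for both by a mutual induction.

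Before the main induction I would establish two auxiliary lemmas. The first concerns expressions: for a numeric expression $N$ evaluating to $\val v$ in memory $S_A$, there is a run of $\embed N$ terminating in $\mstate{S'_A}{*}{\e}$, where $S'_A$ extends $S_A$ by pushing $\val v$ onto the main stack and leaves all cells unchanged; analogously $\embed B$ pushes $\term\top$ or $\term\bot$ according to the truth value of $B$. These are direct traces of the machine transitions: $\embed{!a}$ pops the cell value, restores it with $\term{[x]a}$, and copies it to the main stack, while $\embed{a\coloneqq N}$ runs $\embed N$, pops and discards the old cell value (binding it to the unused fresh $x$), pops the value just produced, and writes it back to $a$. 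The second lemma is \emph{compositionality of runs}: a run ending in $\mstate{T_A}{*}{K}$ followed by a run from $\mstate{T_A}{N}{L}$ may be concatenated, which follows from the continuation-stack mechanism of the machine (equivalently from Proposition~\ref{prop:small-big-agree}).

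The main argument then constructs the run case by case. The statement \textsf{skip}$=\term*$ gives the identity run; $S_1;S_2$ composes the runs from the induction hypothesis via the compositionality lemma; $a\coloneqq N$ is the explicit update trace above; $\textsf{abort}=\term0$ is vacuous, since \textsf{abort} admits no successful GCL execution and $\term 0$ admits no successful run; $C_1\,\square\,C_2=\embed{C_1}\term+\embed{C_2}$ selects, via the machine's sum transition, the branch taken by the GCL execution and applies the induction hypothesis; $B\to S$ first runs $\embed B$ (which pushes $\term\top$, since the guard is true in a successful execution), lets $\term{<\top>}$ assert and consume it, and then runs $\embed S$ by the induction hypothesis; and $\textsf{if}~C=\embed C$ is immediate.

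The main obstacle is iteration, $\textsf{do}~C=\embed C^*$, which I would treat by a secondary induction on the derivation of $\langle\textsf{do}~C,S_A\rangle\Downarrow T_A$. In the terminating case (no guard true) I take the $\term*$ branch of the unfolding $\embed C^*\rw\term{*+\embed C;\embed C^*}$ and stop after zero iterations; in the iterating case I take the $\embed C\term;\embed C^*$ branch, run one copy of $\embed C$ (the induction hypothesis for the chosen true-guarded statement), and then recurse on $\langle\textsf{do}~C,T_A\rangle$, splicing the two runs with the compositionality lemma. The point that makes this direction tractable is precisely that only existence of a run is required: the non-determinism of the Kleene star lets us unfold exactly as many times as the GCL execution iterates and then stop, without requiring the machine to \emph{detect} that no guard remains true. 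That detection is what the GCL termination condition enforces and what a converse (completeness) direction would additionally have to verify; here it is unnecessary.
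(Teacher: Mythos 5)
The paper in fact states this proposition without proof---the appendix contains no section of proofs for Section~\ref{sec:encoding-models-ii}---so there is no official argument to compare against; the intended proof is evidently the routine simulation you give, and your proposal is correct and complete in all essentials. The memory invariant (one closed value per location, main stack empty at statement boundaries), the explicit machine traces for $\embed{!a}$ and $\embed{a \coloneqq N}$, the vacuously satisfied \textsf{abort} case (no transition applies to $\term{0}$), branch selection at the sum and star transitions, and the splicing of runs via the continuation stack all match the transition rules of Figure~\ref{fig:machine}; reconstructing a big-step judgement for GCL is necessary scaffolding since the paper only describes its semantics informally. One step deserves an explicit line: Proposition~\ref{prop:small-big-agree} gives successful runs ending in states of the form $\mstate{T_A}{*}{\sigma K}$, so your compositionality lemma implicitly requires that $\sigma$ act as the identity on $K$. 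This does hold here, but only because all values pushed and popped in the GCL encoding are closed constants (integers, $\term{\top}$, $\term{\bot}$) and the only variables substituted during a run are the globally fresh instantiations of the $\term{Ex}$ binders, which cannot occur free in the encodings of subsequent statements; stating this closedness observation would make the splicing argument airtight. Finally, your diagnosis of why only the soundness direction is claimed---the Kleene star may take the $\term{*}$ branch before all guards are false, so the machine does not detect loop termination---is precisely the caveat the paper itself records immediately after the proposition, namely that the encoding outputs the memory at every stage of the computation and not only at the end of a successful GCL execution.
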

Note that our encoding in fact outputs (non-deterministically) the memory at every stage of the computation,  and not only the memories at the end of a successful GCL execution.

\subsection{Turing Machines}

We demonstrate an encoding of a Turing machine $\mathcal M$ with a set of \emph{states} $S$,  initial state ${i}\in S$,  halting state ${h}\in S$,  alphabet $\Sigma$,  blank symbol ${0}\in\Sigma$,  and transition function $\delta : (S\setminus \{{h}\}) \times \Sigma \to S \times \Sigma\times \{\boldsymbol{L}, \boldsymbol{R}\}$.  
To do so, we work with an RMC signature consisting of $S \cup \Sigma$ and three locations $ l, r,$ and $ q$,  used to record the tape to the left of the head,  the tape under and to the right of the head (in reverse),  and the current state,  respectively. 

A Turing machine $\mathcal M$ then encodes as follows. Below left are the encodings of tape moves $m\in\{\boldsymbol{L}, \boldsymbol{R}\}$ and of a transition in $\delta$ as a five-tuple $d=(s,a,s',a',m)$. Then $\mathcal M$ encodes as an iterated non-deterministic sum of transitions, below right.
\[
\begin{aligned}
   \llbracket\boldsymbol{L}\rrbracket &= \term{Ex.l<x>;[x]r}
\\ \llbracket\boldsymbol{R}\rrbracket &= \term{Ex.r<x>;[x]l}
\\ \llbracket d\rrbracket &= \term{q<s>;r<a>;[s']q;[a']r;}\llbracket m\rrbracket
\end{aligned}
\qquad
	\llbracket \mathcal M \rrbracket = {\left(\sum_{d\in\delta}\llbracket d\rrbracket\right)^\term\ast}
\]
Instantiating the tape in both directions $S_l$ and $S_r$ with streams of blank symbols\footnote{It also easy enough to simulate an infinite tape with stacks,  using special end-of-stack markers $\val \bot$ and $\val \top$.  The encoding of a read of $a\in \Sigma$ then additionally checks for the end of either stack,  pushing markers along as needed: that is,  in the case of $a = 0$,  using $\term{r<a> + r<\bot>;[\bot]l + r<\top>;[\top]r}$ instead of $\term{r<a>}$.}, we have the following result.

\begin{proposition}[The RMC implements Turing machines]
A Turing machine $\mathcal M$ halts with tape $T_l$ and $T_r$ to the left and right of the head, respectively,  if and only if there exists a machine run
\[
\run{S_l  \cdot S_r \cdot  \val{s}_q }{`{{\color{black}\llbracket \mathcal{M}\rrbracket}}}{\e}{T_l \cdot T_r \cdot  \, \val h_q}{*}{\e}~. 
\]
\end{proposition}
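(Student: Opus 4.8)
The plan is to set up a step-by-step simulation between the transition function of $\mathcal M$ and the unfolding of the Kleene star in $\llbracket\mathcal M\rrbracket$, and then lift it to whole computations by induction. First I would fix an encoding of Turing configurations as memories. A configuration $C$ with left tape $w_l$, current state $\val s$, head symbol $\val a$, and right tape $w_r$ is sent to the memory $\langle C\rangle$ whose $l$-stack holds $w_l$ with its rightmost cell on top, whose $r$-stack holds $\val a$ on top followed by $w_r$ reversed (so the head symbol is always the top of $r$), and whose $q$-stack holds the single state $\val s$. The semi-infinite tape is modelled, as in the statement, by padding $S_l$ and $S_r$ with streams of blanks $\val 0$, so that every \emph{pop} on $l$ or $r$ succeeds and always exposes the correct cell, including in the blank region; the end-marker variant of the footnote is handled identically, with the read of $a=\val 0$ replaced by its marker-aware form.

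The crux is a one-step lemma: for a configuration $C$ whose state is not $\val h$, with $\delta(\val s,\val a)=(\val{s'},\val{a'},m)$ and corresponding transition $d=(\val s,\val a,\val{s'},\val{a'},m)$, there is a run
\[
\run{\langle C\rangle}{\llbracket d\rrbracket}{\e}{\langle C'\rangle}{*}{\e}
\]
where $C'$ is the successor configuration, and moreover $d$ is the \emph{only} summand of $\sum_{d'\in\delta}\llbracket d'\rrbracket$ admitting a successful run from $\langle C\rangle$. I would prove this directly from the transitions of Figure~\ref{fig:machine}. The prefix $\term{q<s>;r<a>}$ acts as a guard: it pops $\val s$ from $q$ and $\val a$ from the top of $r$, so it succeeds exactly when the current state is $\val s$ and the head symbol is $\val a$; since $\delta$ is a \emph{function}, precisely one $d'$ matches both, and every other summand fails at one of these pops. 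The remainder $\term{[s']q;[a']r;}\llbracket m\rrbracket$ is deterministic: it writes the new state and symbol, and then a short case split on $m\in\{\boldsymbol L,\boldsymbol R\}$ traces $\llbracket\boldsymbol L\rrbracket$ or $\llbracket\boldsymbol R\rrbracket$ to shift the $l$/$r$ boundary in the correct direction, yielding exactly $\langle C'\rangle$. This is where the reversed-$r$ convention and the blank padding must be checked to line up.

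With the lemma in hand the biconditional follows by induction on the length of the computation. For the forward direction, if $\mathcal M$ halts then its computation is a finite chain $C_0\to\cdots\to C_n$ with $C_0$ the initial configuration and the state of $C_n$ equal to $\val h$. Unfolding $\llbracket\mathcal M\rrbracket=(\sum_d\llbracket d\rrbracket)^*$ via $\term{M^*}\rw\term{* + M;M^*}$ exactly $n$ times, firing the matching transition at each step by the lemma, and then taking the $\term*$ branch, composes the single-step runs into the required run from $\langle C_0\rangle=S_l\cdot S_r\cdot\val s_q$ to $\langle C_n\rangle=T_l\cdot T_r\cdot\val h_q$. For the converse, any successful run of $(\sum_d\llbracket d\rrbracket)^*$ decomposes into finitely many loop iterations followed by the $\term*$ branch; each iteration must be a \emph{successful} application of some summand, which by the uniqueness clause is forced to be the transition matching the current configuration, so the run reads off a genuine computation of $\mathcal M$. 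Since $\delta$ is undefined on $\val h$, no summand can fire once the state is $\val h$, so a run terminating with $\val h_q$ is precisely one that has reached a halting configuration; reading the intermediate memories back through $\langle-\rangle$ exhibits a halting computation with final tapes $T_l,T_r$.

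The main obstacle I anticipate is not the inductive assembly, which is routine, but the faithful treatment of the infinite tape together with the reversed-$r$ convention: one must verify that padding with blank streams (or the end-marker discipline of the footnote) keeps $\langle-\rangle$ a genuine bijection between reachable configurations and reachable memories, so that pops never fail for the wrong reason and the head boundary really does move as the $\llbracket\boldsymbol L\rrbracket,\llbracket\boldsymbol R\rrbracket$ clauses intend. Everything else is a mechanical check against Figure~\ref{fig:machine}.
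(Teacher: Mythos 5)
Your proposal is correct: the one-step simulation lemma (guarded pops on $q$ and $r$ forcing exactly the $\delta$-matching summand, with the reversed-$r$ and blank-padding conventions checked for the head shift), lifted by induction on star-unfoldings, with the converse direction using that $\delta$ is undefined on the halting state so a run ending with $h$ on $q$ must trace a genuine halting computation, is precisely the intended argument. The paper in fact states this proposition without proof (its appendix covers Sections 2--7 but not the Section 8 encodings), and your argument is structurally the same as the paper's proofs of the analogous encoding results, such as the Prolog and interaction-net propositions, which likewise pair a one-step correspondence with an induction or bijection on machine runs.
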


\subsection{Interaction Nets}

Interaction nets are a graphical model of computation first introduced as a generalisation of multiplicative proof nets  \cite{lafont_interaction_1990}.  A term calculus for interaction nets~\cite{Fernandez-Mackie-1999}  is given as follows.   A \emph{net} is a pair $\inet{T}{\Delta}$ where $T$ is a stack of algebraic terms over a signature $\Sigma$, and $\Delta=\{t_1\doteq u_1,\dots,t_n \doteq u_n\}$ is a set of formal equations.  Each variable occurs at most twice in a net. Rewriting of nets is performed according to a set of rules $\mathcal R$,  with each rule a pair of terms $f(S)\bowtie g(U)$ such that $f\neq g$ and with each variable occurring exactly twice. The set $\mathcal R$ is further required to be \emph{deterministic}: there is at most one rule for each pair of $f$ and $g$; and \emph{symmetric}: if $f(S)\bowtie g(U)$ then $g(U)\bowtie f(S)$. 
Variables are \emph{local} to each rule, and are instantiated with fresh variables during computation.

The evaluation relation $(\rw_{\mathcal R})$ is generated by the following rules, adapted from~\cite{Fernandez-Mackie-1999}.  We write $T \doteq U$ for the pairwise equations of $T$ and $U$ of equal length, and $\e$ the empty set of equations. 
\[
\begin{aligned}
	\inet{T}{\Delta,x\doteq u}       & ~\rw_{\mathcal R}~ \inet{\{u/x\}T}{\{u/x\}\Delta}
\\	\inet{T}{\Delta,u\doteq x}       & ~\rw_{\mathcal R}~ \inet{\{u/x\}T}{\{u/x\}\Delta}
\\	\inet{T}{\Delta,f(R)\doteq g(V)} & ~\rw_{\mathcal R}~ \inet{T}{\Delta, R\doteq S,U\doteq V }
\quad (f(S)\bowtie g(U))
\end{aligned}
\]
To encode interaction nets we use two locations $ l, r$ to hold the left and right sides of each equation in $\Delta = \{t_1 \doteq u_1, \dots ,t_n \doteq u_n\}$, which is then interpreted as the memory below left. Below it is the interpretation of a rule in $\mathcal R$, where $\val{!x}$ are the variables occurring in $S$ and $U$. A set of rules $\mathcal R$ is encoded as below right. The terms $T$ in a configuration $\inet T\Delta$ are held on the unnamed main location.
\[
\begin{aligned}
	\llbracket \{t_i \doteq u_1\}_{i\leq n} \rrbracket &= (\val{t_1}\dots\val{t_n})_l\cdot(\val{u_1}\dots\val{u_n})_r 
\\
	\llbracket f(S)\bowtie g(U) \rrbracket &= \term{E!x.l<f(S)>;r<g(U)>}
\end{aligned}
\quad
\llbracket\mathcal R\rrbracket = \left(\sum_{r\in\mathcal R}\llbracket r\rrbracket\right)^\term\ast
\]
 
\begin{proposition}[Interaction nets embed]
For a set of rules $\mathcal R$ and a configuration $\inet T\Delta$ we have that $\inet T\Delta\rw_{\mathcal R}^* \inet{U}{\e}$ if and only if
\[
\run {T\cdot \llbracket\Delta\rrbracket}{{\color{black}\llbracket{\mathcal R}\rrbracket}}\e {U\cdot \e_l\cdot \e_r}*\e~.
\]
\end{proposition}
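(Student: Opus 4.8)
The plan is to set up a step-by-step operational simulation between net reduction $\rw_{\mathcal R}$ and runs of $\embed{\mathcal R} = (\sum_{r\in\mathcal R}\embed r)^*$, and then lift it to reflexive--transitive closures by induction, using the agreement of small- and big-step semantics (Proposition~\ref{prop:small-big-agree}) to package each rule application as one simulation step. The basic correspondence sends a configuration $\inet{T'}{\Delta'}$ to the machine state whose main stack is $T'$, whose locations $l$ and $r$ hold the left- and right-hand sides of $\Delta'$ respectively, whose term is $\embed{\mathcal R}$, and whose continuation is empty. The one genuine subtlety is that the relational machine unifies \emph{eagerly} --- every \emph{pop} immediately discharges its substitution through the $\upsilon$- and variable-binding transitions of Figure~\ref{fig:machine} --- whereas $\rw_{\mathcal R}$ retains variable equations $x\doteq u$ as explicit residuals. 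I would therefore relate machine states to the \emph{variable-eliminated} form of configurations: writing $\sigma$ for the idempotent substitution produced by exhaustively applying the two variable rules of $\rw_{\mathcal R}$ and $\Delta'_{\mathrm{act}}$ for the residual active pairs, the state associated with $\inet{T'}{\Delta'}$ carries main stack $\sigma T'$ and has $l,r$ holding the sides of $\sigma\Delta'_{\mathrm{act}}$. The at-most-twice occurrence invariant makes this elimination terminating and confluent --- it is first-order unification with occurs-check, whose machine realisation is Proposition~\ref{prop:unification-embeds} --- so the associated state is well defined, and since active pairs $f(R)\doteq g(V)$ have variable argument lists (the auxiliary ports), variable equations are never stranded on $l$ or $r$ at a rule boundary.

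With this fixed I would prove a one-step matching. An interaction step $f(R)\doteq g(V)\rw_{\mathcal R} R\doteq S,\,U\doteq V$ through a rule $f(S)\bowtie g(U)$ is simulated by unfolding the star once and selecting the summand $\embed{f(S)\bowtie g(U)} = \term{E!x.l<f(S)>;r<g(U)>}$: the $\term{E!x}$-transitions instantiate the rule variables with globally fresh names exactly as net reduction does, and the two \emph{pops} unify $f(R)$ against $f(S)$ and $g(V)$ against $g(U)$. By the function-symbol and variable-binding transitions these decompose precisely into the component equations $R\doteq S$ and $U\doteq V$; because the rule patterns $S,U$ carry the only non-variable structure, each component binds a net port-variable to a rule pattern (introducing any new agents), never clashing two function symbols, so both systems land in the same variable-eliminated configuration. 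Conversely, every completed pair of pops coming from a summand is one interaction step, while a head mismatch (no applicable rule) fails the pops and yields $\term{0}$, matching a stuck net.

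It remains to reconcile the machine's head-first discipline with the net's freedom to rewrite any active pair, and to pin down termination. For the forward direction I would appeal to the (strong) confluence of interaction-net reduction: the order in which active pairs are contracted is immaterial to the final net, so any reduction $\inet{T}{\Delta}\rw_{\mathcal R}^*\inet{U}{\e}$ can be replayed in the head-first order that the machine follows, and the machine takes the $\term{*}$-branch of the star exactly when $l$ and $r$ are empty --- equivalently when $\Delta=\e$ --- leaving $\sigma U = U$ on the main stack and reaching $\mstate{U\cdot\e_l\cdot\e_r}{\term{*}}{\e}$. For the reverse direction, reading off the summands chosen along a successful run yields a valid interaction sequence, and the run can terminate in $\term{*}$ with $l,r$ empty only once every active pair is consumed, i.e.\ at $\inet{U}{\e}$. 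Induction on the number of interaction steps then closes the equivalence.

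The main obstacle I expect is not any individual step but handling the two mismatches at once: the eager-versus-lazy treatment of variable equations and the ordered-stack-versus-unordered-set treatment of active pairs. Establishing that the variable-eliminated correspondence is well defined --- termination and confluence of variable-elimination, and the absence of spurious function-symbol clashes on $l,r$ --- rests squarely on the at-most-twice invariant and on auxiliary ports being variables, and checking that the machine's eager substitutions never leave a variable equation sitting on $l$ or $r$ is the delicate bookkeeping on which the whole simulation depends.
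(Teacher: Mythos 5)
The paper, as it happens, contains no proof of this proposition (the appendix stops at the proofs for Section~\ref{sec:relational-semantics}), so your attempt can only be measured against the technique used for the neighbouring encodings (Propositions~\ref{prop:unification-embeds} and~\ref{prop:prolog}: a state correspondence discharged through Proposition~\ref{prop:small-big-agree}). Your scaffolding --- a simulation invariant, variable-elimination to bridge the machine's eager substitution against the net's explicit variable equations, and confluence of $\rw_{\mathcal R}$ to bridge the forced stack order against the free choice of active pair --- is the natural instance of that technique. But the two points you flag as ``delicate bookkeeping'' are not bookkeeping; they are where the argument breaks, and your resolutions of them are assumed rather than proved. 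Your one-step matching claims the two pops of a summand ``decompose precisely into the component equations $R\doteq S$ and $U\doteq V$'', on the stated grounds that the rule patterns $S,U$ carry the only non-variable structure and the net-side arguments are variables (auxiliary ports). Nothing in Section~\ref{sec:encoding-models-ii}'s definitions licenses this: both the initial $\Delta$ and the residual equations of a firing may pair two compound terms. The net resolves such a pair by a \emph{further rule application} (it is a new active pair), whereas the machine's single pop resolves it by \emph{syntactic unification inside the same summand}, and these genuinely disagree. Concretely, with variable-free rules $f(a)\bowtie g$ and $a\bowtie b$ (each closed under symmetry, trivially deterministic), the net reduces $\inet{T}{\{f(b)\doteq g\}}\rw_{\mathcal R}\inet{T}{\{b\doteq a\}}\rw_{\mathcal R}\inet{T}{\e}$, while every machine pop head-clashes ($b$ against $a$ inside $\term{l<f(a)>}$, or at the root for the other summands) and the run dies --- so ``one interaction step equals one summand plus two pops'' is not a sound induction unit: a pop silently commits to solving the entire tree of descendant equations by unification, which coincides with net reduction only under a hypothesis (every residual equation has a variable on one side) that you would have to add explicitly and that the paper's rule format does not supply. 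The same mismatch runs the other way: with rule $f(x)\bowtie g(x)$, the net gets stuck on the residual $h\doteq h$ (no rule with equal symbols is allowed), while the machine cancels it by unification and succeeds.

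The converse direction has a second, independent gap. Your claim that ``reading off the summands chosen along a successful run yields a valid interaction sequence'' fails whenever the top of $l$ or $r$ is a variable: the machine may then fire any summand, and popping $\term{l<f(S)>}$ against a variable $\val x$ triggers the substitution transition, binding $\val x$ to the rule pattern instantiated with globally fresh variables --- a speculative step with no counterpart among the three net rules (the net's variable rule substitutes the \emph{other side} of the equation, never a rule pattern). For example, from $\inet{\val x}{\{x\doteq a\}}$ with rule $f(y,y)\bowtie a$, the machine binds $\val x:=\val{f(y',y')}$, matches $\val a$ against $\term{r<a>}$, and completes a successful run with main stack $\val{f(y',y')}$ and empty $l,r$, whereas the net reduces only to $\inet{\val a}{\e}$. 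Your variable-eliminated invariant does not block this, because the proposition's initial state is the \emph{raw} encoding $T\cdot\embed\Delta$, variable equations do sit on the stacks, and the machine has no summand of the shape $\term{Ez.l<z>;r<z>}$ that would implement the net's variable rule directly --- so a variable equation reaching the top can only be cleared speculatively. A correct proof must either show such speculative branches never complete successfully (false, as above) or restrict the statement to configurations and rule sets in which they cannot arise, and in either case that restriction has to be stated and used, not deferred to bookkeeping.
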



\subsection{Petri Nets}

\newcommand\Mf{\mathcal M_f}

Petri nets are a type of discrete event dynamic system, first introduced in~\cite{petri1962kommunikation} as a graphical model of concurrency. They have since found real-world applications in a range of areas, including transport, manufacturing, fault diagnosis, and power systems~\cite{petriapplications}.

We define a Petri net as a pair $(P, T)$ of finite sets of \emph{places} $P$ and \emph{transitions} $T\subset \Mf(P)\times \Mf(P)$, where $\Mf$ denotes finite multisets. A \emph{state} $s\in\Mf(P)$ is a distribution of \emph{tokens} over the places $P$. A transition $t=(t^-,t^+)$ may \emph{fire} as follows, where $\subseteq$, $(\setminus)$ and $\uplus$ are multiset inclusion, difference, and union respectively.
\[
	s~\mapsto (s \setminus t^-)\uplus t^+ \qquad (t^-\subseteq s)
\]
Petri nets encode directly in the RMC with a signature $\Sigma=\{\circ\}$ consisting of only a constant $\circ$ for tokens, and the set of places $P$ as the locations. A state $s$ embeds as the memory $\embed s=S_P$ consisting of stacks of tokens, where each stack $S_p$ holds the number of tokens at $p$ in $s$. Writing $[p_1,\dots,p_n]$ for a multiset of places, transitions embed as below, and a Petri net as the term $\embed{(P,T)}=(\sum_{t\in T}\,\embed t)^\ast$. 
\[
\begin{array}{l}
	\embed{\,([p_1,\dots,p_n],[q_1,\dots,q_m])\,} ~=~ 
\\  \qquad\qquad\qquad\qquad \term{p_1<\circ>;..;p_n<\circ>;[\circ]q_1;..;[\circ]q_n}
\end{array}
\]

\begin{proposition} 
For a Petri net $(P,T)$ and states $s,s'\in\Mf(P)$, we have $s\mapsto^*s'$ if and only if
\[
	\run{\embed{s}}{`{\color{black}\embed{(P,T)}}}{\e}  {\embed{s'}}{\star}{\e}~.
\]
\end{proposition}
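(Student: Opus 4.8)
The plan is to reduce the statement to a single-transition correspondence and then lift it along the non-deterministic sum and the Kleene star. Write $\term{M} = \sum_{t\in T}\embed{t}$ for the body of the starred term. The first and central step is to establish, for each individual transition $t = (t^-,t^+)$ with $t^- = [p_1,\dots,p_n]$ and $t^+ = [q_1,\dots,q_m]$ and any state $s$, the equivalence
\[
	\bigl(t^-\subseteq s ~\text{ and }~ \embed{s'}=\embed{(s\setminus t^-)\uplus t^+}\bigr)
	\quad\iff\quad
	\run{\embed s}{\embed t}{\e}{\embed{s'}}{\term{*}}{\e}~.
\]

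For this single-transition claim, observe that $\embed t = \term{p_1<\circ>;\dots;p_n<\circ>;[\circ]q_1;\dots;[\circ]q_m}$ contains neither sum, star, nor \emph{new}, so it evaluates along a single, deterministic path. Tracing the machine, each pop $\term{p_i<\circ>}$ fires the pop rule for function symbols with the nullary symbol $\circ$, removing one $\circ$ from the top of stack $S_{p_i}$; this is the only applicable rule, and it applies precisely when $S_{p_i}$ is non-empty. Since the initial memory $\embed s$ and every memory reachable from it contain only copies of $\circ$, no pop can fail by constant mismatch and no substitution is ever generated; the $n$ pops all succeed exactly when each $S_{p_i}$ holds at least as many tokens as $p_i$ occurs in $t^-$, i.e. when $t^-\subseteq s$ as multisets. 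The subsequent pushes $\term{[\circ]q_j}$ add one $\circ$ to each $S_{q_j}$, so reading off the final memory yields exactly $\embed{(s\setminus t^-)\uplus t^+}$ and the run terminates in $\mstate{\embed{s'}}{\term*}{\e}$; conversely, any run reaching $\mstate{\embed{s'}}{\term*}{\e}$ must have completed all pops and pushes, which forces $t^-\subseteq s$ and the stated output memory. The branching rule for $(\term{+})$ then gives $\run{\embed s}{\term M}{\e}{\embed{s'}}{\term*}{\e}$ iff $s\mapsto s'$ via some single transition $t\in T$.

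It remains to lift this to $\mapsto^*$ by analysing runs of $\term{M^*}=\embed{(P,T)}$. Unfolding $\term{M^*}$ at $\mstate{\embed{s_i}}{\term{M^*}}K$ branches to $\mstate{\embed{s_i}}{\term*}K$ or $\mstate{\embed{s_i}}{\term{M;M^*}}K$, and tracing a full pass through $\term M$ with continuation $\term{M^*}\,K$ shows that after each \emph{completed} iteration the machine returns to a state $\mstate{\embed{s_{i+1}}}{\term{M^*}}K$ with the \emph{same} continuation $K$; strictly within a pass the continuation is larger than $K$, and the term is never simultaneously $\term*$ with continuation $\e$. Hence, starting from $\mstate{\embed s}{\term{M^*}}\e$, the target $\mstate{\embed{s'}}{\term*}{\e}$ is reached exactly after some number $k\ge 0$ of complete iterations followed by the terminating branch of the star rule. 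By the single-step correspondence these iterations realise precisely a firing sequence $s=s_0\mapsto s_1\mapsto\dots\mapsto s_k=s'$, and a straightforward induction on $k$ (building the run from a firing sequence in one direction, reading the firing sequence off the run in the other) establishes the claimed equivalence with $s\mapsto^* s'$.

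The token-counting bookkeeping for a single transition is routine, so the main obstacle is the star analysis: one must rule out runs that terminate in the \emph{middle} of an iteration, i.e. argue that $\mstate{\embed{s'}}{\term*}{\e}$ occurs only at iteration boundaries. This rests entirely on tracking the continuation stack and showing it returns to its initial value $\e$ after each complete pass through $\term M$ while remaining non-empty strictly within a pass. I expect this continuation bookkeeping to be the delicate point; alternatively, the same decomposition follows more abstractly from the big-step identity $\eval{\embed s}{\term{M^*}}=\bigsqcup_{n}\eval{\embed s}{\term{M^n}}$ together with Proposition~\ref{prop:small-big-agree}, trading the explicit continuation argument for reasoning about the $n$-fold composition $\term{M^n}$, which may be cleaner to present.
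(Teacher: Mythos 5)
The paper states this proposition without proof: the appendix supplies proofs only for Sections~\ref{sec:RMC} through~\ref{sec:relational-semantics}, and none of the encodings of Section~\ref{sec:encoding-models-ii} (GCL, Turing machines, interaction nets, Petri nets) is proved there. So there is no authorial proof to compare against; what can be assessed is whether your argument is correct and consonant with the paper's machinery, and it is. Your single-transition analysis is sound: the encoding is closed and variable-free, so for a pop $\term{p_i<\circ>}$ only the function-symbol rule of Figure~\ref{fig:machine} can fire (both sides being the nullary $\val\circ$, the variable rules are inapplicable), runs through $\embed t$ are deterministic, generate no substitutions, succeed precisely when $t^-\subseteq s$ as multisets, and output $\embed{(s\setminus t^-)\uplus t^+}$. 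The lift through $(\term{+})$ is immediate from the branching rule, and your continuation-tracking argument for the star --- the continuation returns to $\e$ exactly at iteration boundaries and strictly contains $\term{M^*}$ within a pass, so $\mstate{\embed{s'}}{\term*}{\e}$ can only arise from the terminating branch of the star rule at a boundary --- correctly rules out mid-iteration termination. Notably, this is the same technique the paper itself uses in its appendix proof of the Prolog embedding (Proposition~\ref{prop:prolog}), where successful runs are decomposed at states of a distinguished shape occurring once per star iteration; so your proof matches the paper's intended style for such encodings. Your suggested alternative via the big-step identity $\eval{\embed s}{M^*}=\bigsqcup_{n}\eval{\embed s}{M^n}$ together with Proposition~\ref{prop:small-big-agree} is equally valid and arguably preferable, since Proposition~\ref{prop:small-big-agree} discharges the continuation bookkeeping once and for all, leaving only an induction on $n$-fold composition. (The only wrinkle in the paper is a typo in the encoding, $\term{[\circ]q_n}$ for $\term{[\circ]q_m}$, which you have silently and correctly repaired.)
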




\section{Conclusion and Further Research}

We believe that by exposing the duality of relational programming in syntax we have arrived at a natural and convincing foundational model of the paradigm. 
We hope that the RMC will allow the further development of tools and reasoning techniques for relational programming,  similar to those founded in the $\lambda$-calculus,  which have proved so important for functional programming: type systems and denotational semantics,  operational semantics,  equational reasoning and confluent reduction.   
In this paper,  we proved fundamental results supporting this new calculus,  focussing on showing it meets the design criteria set out and justified in the introduction.  We now outline several areas of future work.  

\subsection{Further Research}

\emph{Foundations for functional logic programming.}
It appears straightforward to conceive of a higher-order RMC,  based on the Functional Machine Calculus (FMC) \cite{Power:kappa-cat,Heijltjes:FMCI,Heijltjes:FMCII},  which would have potential as a foundational model of \emph{functional} logic programming. 
Already,  the FMC,  which seeded this work,  has shown how to seamlessly integrate functional programming and global state,  by the use of locations --- which are also present in the RMC.  Thus,  there is potential,  even,  for a unification of logic,  functional \emph{and imperative} programming.

\emph{Bridging programming languages and string diagrams.}
String diagrams have found practical applications in an increasing range of domains
\cite{coecke_interacting_2011, Perdrix:completeness-ZX, Hasegawa-traced,DBLP:conf/rta/Mellies14}.  This often involves their implementation at scale: for example,  the ZX-diagrams used for quantum circuit optimization may have on the order of $10^4$--$10^7$ nodes.  Studies into rewriting theory for string diagrams have lead to representations which are ameenable to efficient manipulation by a computer \cite{Frobenius, Frobenius2},  including at scale. 
And,  as string diagrams grow,  tools to specify,  represent and reason about them become more important: in particular,  while diagrams are convenient for human use while they are very small,  they very quickly become incomprehensible at scale.  This has lead to the introduction of circuit description languages which allow the high-level specification of algorithms which can yet be compiled down to circuit (or diagram) level  \cite{Selinger:quipper, Lindenhovius:string-diagram}.  
The RMC gives a principled foundation to programming languages for string diagrams --- at least for the class given by Frobenius monoids.  A higher-order RMC would allow the factorization of large diagrams and their compact specification and representation.  

Although the class of string diagrams this paper deals with is limited,  we believe it is possible to extend the RMC in a way which gives a unifying account of --- and uniform syntax and operational semantics to --- a wide class of string diagrams via \emph{unification modulo theory}.  For example,  by considering values from the theory of commutative monoids,  and performing unification modulo this theory,  we can represent as terms the string diagrams of \emph{graphical resource algebra} \cite{graphical-resource-algebra};  by considering values from the theory of Abelian groups,  we can represent as terms the string diagrams of \emph{graphical linear algebra} \cite{bonchi_graphical_2019} and the \emph{phase-free ZX calculus} \cite{coecke_interacting_2011}.   In each of these cases we have a constant $\val 0$, a binary function symbol $(\val +)$,  and terms
\[	
	\begin{array}{c@{\qquad}c}
	\term{[0]}& \term{Exy. <x>;<y>;[x+y]} \\
	\term{<0>}& \term{Exy. <x+y>;[y];[x]} 
	\end{array}
\]
 providing a second commutative monoid structure and its dual,  with the expected equational theory induced by the operational semantics of unification modulo.  We hope further for links to (differential) linear logic and monoidal differential categories via the observation that the free commutative (co-)monoid (co-)monad models the exponential modality \cite{DBLP:journals/mscs/BluteCS06}.   


\emph{Weighted relations.}
Although there are a wide range of languages with a relational semantics, the range of languages taking a \emph{weighted} relational semantics \cite{laird_weighted_2013,  tsukada_linear-algebraic_2022} of some form is vast.  Broadly conceived,  
these include linear-algebraic \cite{Perdix:algebraic-calculi, DBLP:journals/mscs/Vaux09}, probabilistic and quantum $\lambda$-calculi  \cite{ gay_quantum_2009},  and classes of string diagrams modelling similar domains
\cite{coecke_interacting_2011, coecke_new_2013, Perdrix:completeness-ZX,Bonchi:graphical-conjunctive-queries,bonchi:tape-diagrams, zanasi:logic-programming,  bonchi_graphical_2019}.  
In fact,  the RMC generalizes easily to include \emph{weighted} terms and thus weighted non-determinism.  
Restricting this general non-determinism in order to give accounts of stochastic or unitary processes in probabilistic or quantum domains,  respectively,  is less immediate and will require more sophisticated type systems,  as in \cite{Sabry:symmetric-pattern-matching,Chardonnet:curry-howard-reversible}.  However,  Frobenius monoids have already been used to axiomatize the exact conditioning operation of probabilistic programming languages  (albeit in the finite dimensional case) \cite{di_lavore_evidential_2023},  and thus investigation into a \emph{probabilistic machine calculus} appears warranted \cite{Silva:prob-kat}.  Quantum processes also admit Frobenius monoids as a fundamental primitive\cite{coecke_interacting_2011, coecke_new_2013, Perdrix:completeness-ZX}; furthermore,  they include as essential aspects both non-determinism and reversiblity,  and thus investigation into a \emph{quantum machine calculus} appears warranted, too.  

%


\bibliographystyle{plain}
\bibliography{RMC}


\end{document}